\title{ Symmetric Proofs in the Ideal Proof System} 
\titlerunning{     }  
\author{Anuj Dawar}{Department of Computer Science and Technology, University of Cambridge, UK}{anuj.dawar@cl.cam.ac.uk}{}{}
\author{Erich Grädel}{Mathematical Foundations of Computer Science, RWTH Aachen University, Germany}{graedel@informatik.rwth-aachen.de}{}{}
\author{Leon Kullmann}{RWTH Aachen University, Germany}{leon.kullmann@rwth-aachen.de}{}{}
\author{Benedikt Pago}{Department of Computer Science and Technology, University of Cambridge, UK}{btp26@cam.ac.uk}{}{}
\keywords{proof complexity, algebraic complexity, descriptive complexity, symmetric circuits, graph isomorphism} 
\authorrunning{A. Dawar, E. Grädel, L. Kullmann, B. Pago} 
\newcommand{\IPSlin}{\text{\upshape IPS}_{\mathsf{LIN}}}
\newcommand{\symIPS}{\text{ \upshape sym-IPS}}
\newcommand{\symIPSlin}{\text{ \upshape sym-IPS}_{\mathsf{LIN}}}
\newcommand{\VNP}{\mathsf{VNP}}
\newcommand{\VP}{\mathsf{VP}}
\DeclareMathOperator{\charact}{char}
\begin{document}
	\maketitle
	
	\begin{abstract}
          We consider the Ideal Proof System (IPS) introduced by Grochow and Pitassi and 
          pose the question of which tautologies admit symmetric proofs, and of what complexity. The symmetry requirement in proofs is inspired by recent work establishing lower bounds in other symmetric models of computation.
		We link the existence of symmetric IPS proofs to the expressive power of logics such as fixed-point logic with counting and Choiceless Polynomial Time, specifically regarding the graph isomorphism problem.  We identify relationships and tradeoffs between the symmetry of proofs and other parameters of IPS proofs such as size, degree and linearity.  We study these on a number of standard families of tautologies from proof complexity and finite model theory such as the pigeonhole principle, the subset sum problem and the Cai-Fürer-Immerman graphs, exhibiting non-trivial upper bounds on the size of symmetric IPS proofs.
	\end{abstract}

	\section{Introduction}
	A central project within the subject of proof complexity \cite{pitassi, segerlind, krajicekBook} is to define ever more powerful proof systems for propositional logic for which it is possible to establish superpolyonomial lower bounds on the size of proofs. The \emph{Ideal Proof System} (IPS), introduced by Grochow and Pitassi in~\cite{IPS}, is a powerful \emph{algebraic} proof system that subsumes many previous algebraic and propositional proof systems such as the polynomial calculus and Frege proof systems. No superpolynomial lower bounds are known for the unrestricted IPS and it has been shown, for instance, that it efficiently simulates powerful proof systems such as extended Frege, for which such lower bounds are a long-standing open problem.  On the other hand, lower bounds have been shown for interesting restrictions of IPS.
	
	In IPS, like other algebraic proof systems, propositional formulas are coded as polynomial equations and a proof (or refutation) is a certificate showing that a given equation system has no solution.  Equivalently, the certificate shows that the input polynomials, or \emph{axioms}, have no common zero. In IPS, the certificate is an \emph{algebraic circuit} which witnesses that the unit polynomial $1$ is in the ideal generated by the axioms, and we measure the complexity of the certificate in terms of the size of the circuit.  Here, an algebraic circuit is a representation of a polynomial as a circuit with constants and variables as inputs and addition and multiplication gates. In order to make progress on showing lower bounds for IPS, it has been suggested~\cite{IPS} to study fragments of IPS given by restrictions on the circuits considered.
	A successful recent line of work has established super-polynomial lower bounds for fragments such as bounded-depth circuits, multilinear formulas and so-called read-once oblivious algebraic branching programs (roABPs) \cite{lowerbounds_IPS, limaye2021new, govindasamy2022simple, hakoniemi2024functional}. The common method behind these results has been named the \emph{functional lower bound} technique.
	
	In the present paper, we consider a natural restriction on IPS proofs based on \emph{symmetric circuits}. Non-trivial lower bounds for symmetric algebraic circuits have recently been established~\cite{DawarW20,DawarW22,DawarPS25} and this raises the question of whether, when the input polynomials have natural symmetries, the corresponding proof can also be made symmetric.  Or, perhaps more interestingly, whether the lower bound methods for symmetric algebraic circuits can be combined with functional lower bound techniques to obtain lower bounds for symmetric IPS proofs.
	
	The present paper lays the foundations for this. We formalise the symmetry requirement in IPS proofs and show that our definition is sound in the sense that restricting to symmetric proofs still yields a complete proof system. Beyond this, our work has two main contributions. First, we show how the study of symmetric IPS fits into the context of a larger project that investigates the role of symmetry in lower bounds in a number of areas of complexity theory (see~\cite{DawarCSL20}). This project is centred around the concept of \emph{symmetric computation} in the sense of symmetric circuits and logics from finite model theory~\cite{DawarICALP24,DawarPago24}. 
	Specifically, it seeks to lift game-based methods from finite model theory which are used to show inexpressiveness results in logic to lower bounds in other models of computing. 
	There is a well-established connection between the expressive power of such logics and propositional proof complexity. In particular, it has been shown that, by encoding propositional axioms as relational structures in a suitable way, the problem of the existence of a resolution refutation of a given width, or of a polynomial calculus refutation of a fixed degree are expressible in existential fixed-point logic and fixed-point logic with counting (FPC) respectively and are indeed \emph{complete} for these logics under weak, symmetry-preserving reductions~\cite{ggpp}.  Hence, in a precise sense these logics exactly characterise the power of the corresponding proof systems and from lower bounds on the expressive power of the logics we can extract lower bounds in proof complexity. Moreover, the stronger logic Choiceless Polynomial Time (CPT)  can, in a symmetry-preserving way, be simulated in the bounded-degree extended polynomial calculus~\cite{Pago}. CPT is of great importance in descriptive complexity as a candidate logic for capturing PTIME \cite{pakusa2015linear, gradel2015polynomial}.
	
	As the results we show in Section~\ref{sec:appplications} establish, symmetric IPS proofs offer a suitable unifying formalism in which the results from~\cite{ggpp, Pago} can be recast. The expressive power of FPC corresponds to the bounded-degree symmetric IPS proofs, and the stronger logic CPT can only be simulated by symmetric IPS proofs of unbounded degree.   Importantly, the symmetries of the IPS proofs we consider are the inherent symmetries of the input structure, which are the symmetries under which the corresponding FPC or CPT computation is invariant.
	
	The second focus of this paper is upper bounds for the size of symmetric IPS refutations of various families of instances that possess natural symmetries. 
	One example, where upper (and lower) bounds actually follow from results in finite model theory, is a formulation of the \emph{graph isomorphism problem} as a system of polynomial equations \cite{GI} over $\bbQ$. The equations expressing that two graphs $G$ and $H$ are isomorphic are symmetric under the automorphism groups of $G$ and $H$. We show that the complexity of symmetric IPS refutations with these symmetry groups matches the descriptive complexity of distinguishing $G$ and $H$: Graphs that are distinguishable in FPC, or equivalently by the $k$-dimensional \emph{Weisfeiler-Leman algorithm} \cite{sandraSiglog} for fixed $k$, admit bounded-degree polynomial-size symmetric IPS refutations. Graphs that are distinguishable in the stronger logic CPT, but not in FPC, admit polynomial-size symmetric IPS refutations, but not of bounded degree. This separates in particular the bounded-degree fragment of symmetric IPS from its unbounded-degree version (Theorem \ref{thm:separationOfSymBoundedDegree}).
	
	A related family of instances we consider is a system of linear equations over the finite field $\bbF_2$ that expresses the isomorphism problem on Cai-Fürer-Immerman (CFI) graphs~\cite{caifurimm92, atseriasBulatovDawar} which is known to be inexpressible in FPC. Even though the instances exhibit a large number of interesting symmetries and are the core of numerous finite model theory lower bounds, it turns out quite surprisingly that they do admit polynomial-size symmetric IPS refutations. 
	However, we are not able to exhibit a \emph{linear} symmetric IPS refutation of the CFI equations of less than exponential size, making them a candidate for separating the linear fragment from the full symmetric IPS.  Here the linear (or \emph{Hilbert-like}) fragment is as defined in~\cite{IPS} (see also Section~\ref{sec:IPS}).
	
	Besides the CFI equations, we consider standard hard examples from proof complexity, namely the \emph{pigeonhole principle} (PHP) and the \emph{subset sum} problem. The former is the classic tautology asserting that there is no injective function from a set of $n+1$ elements to a set of $n$ elements. The latter is simply the equation $\sum_{i=1}^n x_i - \beta = 0$, together with the equations $x_i^2 - x_i = 0$ for all $i$, enforcing $\{0,1\}$-solution. This is unsatisfiable when $\beta > n$.  The pigeonhole principle has been used to obtain lower bounds for bounded-width resolution~\cite{haken} and bounded-degree polynomial calculus~\cite{razborov1998lower}.  This is again a promising candidate for showing lower bounds on the size of symmetric IPS proofs as the tautologies are rich in symmetries and we know of no sub-exponential size symmetric IPS refutations for PHP.
	
	In contrast, for subset sum, we do obtain polynomial size symmetric IPS refutations, which is significant because subset sum and its variations are the key instances where the functional lower bound method \cite{lowerbounds_IPS, govindasamy2022simple, hakoniemi2024functional} has been used successfully. Thus, symmetry alone is a limitation of a rather different nature than others, e.g.\ multilinear-formula or roABP-IPS where functional lower bounds have been established. 
	
	Our upper bounds on proof sizes are summarised in the table below. In this table, the entry ``none'' indicates that no proofs with the given restrictions exist, and $c$ is a fixed constant. Our upper bounds for the graph isomorphism problem depend on the graphs: 
	We consider graphs that are distinguishable by the $k$-dimensional Weisfeiler Leman (WL) algorithm versus graphs that are distinguished by CPT; the latter distinguishes strictly more graphs \cite{dawar2008}.
	\begin{table}[h]
		\centering
		\begin{tabular}{c|c|c|c|c|c}
			&\makecell{\textbf{FPC-definable}\\ \textbf{problems}} &\makecell{\textbf{Graph}\\ \textbf{isomorphism}} & \textbf{CFI} & \textbf{Subset sum} & \makecell{\textbf{Pigeonhole}\\ \textbf{principle}}\\
			\hline
			$\deg_k$-sym-IPS & $\Oo(n^c)$ & \makecell{$\Oo(n^c)$ if $k$-WL-\\ distinguishable} & none & none & none\\
			\hline
			\makecell{$\symIPSlin$} & $\Oo(n^c)$ & \makecell{$\Oo(n^c)$ if CPT-\\ distinguishable}& $\Oo(2^n)$ & $\Oo(n^c)$ & $\Oo(3^n \cdot n)$ \\
			\hline
			sym-IPS & $\Oo(n^c)$ & \makecell{$\Oo(n^c)$ if CPT-\\ distinguishable} & $\Oo(n^c)$ & $\Oo(n^c)$ & $\Oo(3^n \cdot n)$
		\end{tabular}
	\end{table}	

	\paragraph*{Acknowledgments} We are grateful to Iddo Tzameret for familiarising us with the current trends in IPS lower bounds and for valuable suggestions for future directions of this project.

	\section{Preliminaries}\label{sec:preliminaries}
	We denote by $[n]$ the set $\{1,...,n\}$.  We call a set $X$, with the action of a group $\Gamma$ on it a \emph{$\Gamma$-set}.  The action of $\Gamma$ on $X$ has a \emph{natural extension} to a number of other sets defined from $X$.  In particular, the following are $\Gamma$-sets: $X^k$ for any $k \in \bbN$; the set $A^X$ of functions from $X$ to any set $A$; and the collection $\bbF[X]$ of polynomials over $X$ with coefficients from a field $\bbF$.
	
	If $X$ is a $\Gamma$-set and $p \in \bbF[X]$ a polynomial, then we say that $p$ is \emph{$\Gamma$-symmetric} or \emph{$\Gamma$-invariant} if $\pi(p) = p$ for every $\pi \in \Gamma$.  A set of polynomials $\F \subseteq \bbF[X]$ is $\Gamma$-invariant if $\{ \pi(f) \mid f \in \F  \} = \F$ for every $\pi \in \Gamma$.
	We write $\Sym(X)$ for the \emph{symmetric group} on $X$ and for a finite relational structure $\AA$ we write $\Aut(\AA)$ for the \emph{automorphism group} of $\AA$.
	
	In what follows, $X$ denotes a finite set of variables, $\Field$ denotes a field, and $\Field[X]$ is the corresponding ring of polynomials in the variables $X$. 
	Instances for the algebraic proof systems we consider are \emph{systems of polynomial equations}. A system of polynomial equations is a finite set $\Ff \subseteq \Field[X]$, and a \emph{solution} for $\Ff$ is an assignment $s: X \rightarrow \Field$ such that $f(s(\vec x)) = 0$ for each $f(\vec x) \in \Ff$. We say that $\Ff$ is satisfiable if it has a solution, and unsatisfiable otherwise. If we wish to encode Boolean satisfiability problems, where the assignments are in $\{0,1\}$ only, then we include in $\Ff$ the \emph{Boolean axioms} $x^2-x$ for each $x \in X$. 	
	 
	For a system of polynomial equations $\F$, the \emph{degree} $\deg(\F)$ of $\F$ is the maximum degree of any polynomial in $\F$.  The \emph{size} $|\F|$ of $\F$ is the total number of variables and equations, i.e.\ if $\F = \defset{f_j\in \Field[X]}{1 \leq j \leq m}$, then $|\F| \coloneqq m + |X|$.

	\subsection{(Symmetric) Algebraic Circuits}
	\label{sec:circuits}
	We study the complexity of a polynomial not only in terms of its degree or its number of monomials but also via the size of the smallest algebraic circuit representing it. 
	\begin{definition}[Algebraic circuit]
		An \textit{algebraic circuit} over variables $X$ and a field $\bbF$ is
		a connected directed acyclic graph $\Circ = (D,W)$ with a labelling
		$\lambda: D \rightarrow \{+,\times \} \cup X \cup \bbF$ 
		such that $\lambda(g) \in X \cup \bbF$ if $g$ has in-degree $0$, and $\lambda(g) \in \{+,\times\}$, otherwise.
	\end{definition}
	The nodes in $D$ are called \textit{gates} and the edges in $W$ are called \textit{wires}.
	Gates with in-degree $0$ are called \textit{input gates} and gates with out-degree $0$
	are called \textit{output gates}. Unless stated otherwise,
	there is only one output gate.
	Every gate that is not an input gate is an \textit{internal gate}.
	The \textit{size of a circuit} $\Circ$ is
	$|\Circ| := |D(\Circ)|$.

	We view an algebraic circuit $\Circ$ over $X$ and $\bbF$ as computing
	a polynomial $\Circ(\vec x) \in \bbF[X]$ by evaluating gates to polynomials and treating
	$+, \times$ as ring operations. We write $C$ for the circuit and $C(\vec x)$ for the polynomial it computes. For a gate $g \in D$, we denote by $C_g(\vec x)$ the polynomial computed by the subcircuit rooted in $g$.

	The \textit{semantic degree} of an algebraic circuit is the maximum degree of a polynomial computed by any of its subcircuits:
	$\deg (\Circ) := \max(\defset{\deg(\Circ_g(\vec x))}{g \in D}).$
	
	
	We are mostly concerned with \emph{symmetric} algebraic circuits, as they are studied for example in \cite{DawarW20}. If $\Gamma$ is a group acting on a variable set $X$, then a circuit $C$ over $X$ is \emph{$\Gamma$-symmetric} if we can extend the action of $\Gamma$ on the input gates
	to automorphisms of the whole circuit.  An automorphism of a circuit is a permutation of the gates that preserves its structure as a DAG, as well as all labels.
	Formally, $\pi \in \Gamma$ \emph{extends} to an automorphism of $C = (D, W)$ if there is a $\sigma \in \Sym(D)$ such that $\sigma$ is an automorphism of the DAG $C$ that preserves labels of internal gates, and for every input gate $g \in D$ with label $\lambda(g) \in X \cup \bbF$, we have $\lambda(\sigma(g)) = \pi(\lambda(g))$. 
	Here, the action of $\Gamma$ on $\bbF$ is trivial, i.e.\ if $\lambda(g) \in \bbF$, then $\pi(\lambda(g)) = \lambda(g)$.
	It is easy to check that the polynomial computed by a $\Gamma$-symmetric algebraic circuit is itself invariant under the action of $\Gamma$ on its variables.

	\subsection{Logics from finite model theory}
	\label{sec:logics}
	\paragraph*{Fixed-point logic with counting}
	
	Formulas of fixed-point logic with counting (FPC) are evaluated in finite structures expanded with a numeric sort as the domain for counting terms. 
	For every finite structure $\AA$, let $\AA^* := \AA \uplus (\bbN, <, +, \cdot, 0, e)$, where $e$ is interpreted as $|A|$. The variables in FPC are typed so that each variable either has as its domain the point sort (i.e.\ $\AA$) or the numeric sort of $\AA^*$. Point variables are denoted with Roman letters and numeric variables with Greek letters.
	
	The syntax of FPC is that of first-order logic with the following extensions:
	\begin{itemize}
		\item Quantifiers over numeric variables: If $\mu$ is a numeric variable, then quantification over $\mu$ is only allowed in the form $Q \mu < t. \phi$, where $Q \in \{\exists, \forall\}$ and $t$ is a numeric term. This ensures that every numeric variable has a fixed range of polynomial size (with respect to the size of the point sort) -- otherwise, it would not be possible to evaluate FPC-formulas in polynomial time.
		\item \emph{Counting quantifiers}: If $\phi(\vec x, \vec \mu)$ is an FPC-formula, and $\nu$ a numeric variable, then $\exists^{\geq \nu}\phi(\vec x \vec \mu)$ is a formula which is true in $\AA^*$ if the number of satisfying assignments to $x$ is at least the value of $\nu$.
		\item \emph{Fixed-point operators}: Let $Z$ be a second-order variable, $\phi$ an FPC-formula in which $Z$ occurs only positively, and $t$ a numeric term. Then $[\mathbf{lfp}~ Z \vec x \vec \mu_{<t} . \phi(Z, \vec x, \vec \mu)](\vec x \vec \mu)$ is a formula of FPC. It is satisfied in $(\AA^*,\vec x \vec \mu \mapsto \vec a \vec b)$ if $\vec a \vec b$ is in the least fixed-point of the sequence defined by $Z_0 = \emptyset, Z_{i+1} = \{ \vec c \vec d  \mid \AA^* \models  \phi(Z_i, \vec c, \vec d)  \}$.
		
		There also exists the dual operator 
		$[\mathbf{gfp}~ Z \vec x \vec \mu_{<t} . \phi(Z,\vec x, \vec \mu)](\vec x, \vec \mu)$, which computes the greatest fixed-point.
	\end{itemize}	 
	
	This is one way to present FPC -- there are other equivalent presentations, for example where the numeric sort in $\AA^*$ is finite, or where counting \emph{terms} instead of counting \emph{quantifiers} are used. For more details and background on FPC, we refer to \cite{otto1997bounded} or \cite{dawar2015nature}.

	\paragraph*{Choiceless Polynomial Time}
	Choiceless Polynomial Time (with counting) can be viewed as an extension of FPC with higher-order data structures.  It properly extends the expressive power of FPC and it remains an open question whether it can express all polynomial-time decidable properties of finite structures.  It seeks to overcome the limitation of FPC that the stages of a fixed-point computation are relations of a fixed-arity. In CPT, the computation stages are much more expressive objects, namely \emph{hereditarily finite sets}. These are arbitrarily nested finite sets whose atoms are elements of the respective input structure.   CPT has an iteration mechanism with definable state-updates similar to FPC, with the difference that in each step, new hereditarily finite sets may be constructed. Thus, computations are not guaranteed to reach a fixed-point. To guarantee polynomial time evaluation, every CPT-sentence $\Pi$ comes with an explicit polynomial bound $p(n) \colon \bbN \to \bbN$. On an input structure $\AA$, the evaluation of $\Pi$ is aborted if it takes longer than $p(|\AA|)$ many steps. Likewise, the constructed h.f.\ sets are required to have size at most $p(|\AA|)$. 
	In this article, we write $\CPT(p(n))$ for the fragment of CPT consisting of sentences whose explicit polynomial bound is $p(n)$.
	Since CPT is a logic and the state-updates are definable, the h.f.\ sets constructed during a computation are naturally symmetric under the automorphisms of the input structure.

	We refrain from giving a formal definition of CPT, since the details are not needed here and it would be quite lengthy. 
	A concise survey can be found in \cite{gradel2015polynomial}. The work that originally introduced CPT as an abstract state machine model is \cite{CPT}; for a more ``logic-like'' presentation of CPT, see \emph{BGS-logic} \cite{rossman2010choiceless}.

	\section{Algebraic Proof Systems}
	\label{sec:proofSystems}
	Algebraic proof systems are formalisms for refuting the satisfiability of polynomial equation systems. 
	A refutation is a certificate of unsatisfiability checkable in deterministic or randomized polynomial time. 
	Usually, the certificate is based on Hilbert's Nullstellensatz: It states that for any polynomial system of equations over a field $\Field$,
	$\F \subseteq \Field[X]$, $\F$ is unsatisfiable over the algebraic closure of $\bbF$ if, and only if,
	$1$ is in the ideal generated by $\F$, which means that there are
	$g_1,g_2,...,g_m \in \Field[X]$ such that $\sum_{i \leq m}f_ig_i = 1$. 
	Thus, a refutation of $\F$ in an algebraic proof system is typically a systematic proof of the existence of such $g_1,g_2,...,g_m$.
	The proof systems we are concerned with are (variants of) the \emph{polynomial calculus} (PC) and the \emph{Ideal Proof System} (IPS).
	In the following, whenever we speak about the unsatisfiability of a polynomial system of equations (defined over a field $\bbF$), we implicitly mean its unsatisfiability over the algebraic closure of $\bbF$, as required by the Nullstellensatz. 
	
	The efficiency of different proof systems is compared using the standard notion of \emph{$p$-simulation}:
	Let $P_1,P_2$ be two proof systems and $\mu_1, \mu_2$ be complexity measures for them, that is, functions that map refutations to natural numbers.
	Then we write $P_2 \leq_p P_1$ (with respect to the measures $\mu_1, \mu_2$, which will usually be clear from the context) if for every system of polynomial equations $\Ff$, and every $P_2$-refutation $R_2$, there exists a $P_1$-refutation $R_1$ of $\F$ with $\mu_1(R_1) \leq \poly(\mu_2(R_2))$.
	We write $P_1 \equiv_p P_2$ if $P_2 \leq_p P_1$ and $P_1 \leq_p P_2$.
		
	\subsection{Polynomial Calculus}
	The polynomial calculus \cite{groebner} predates the IPS as an algebraic proof system. Unlike the IPS, the PC is a more classical rule-based system consisting in a set of sound inference rules for systematically deriving the polynomials in the ideal generated by the input $\F$. 
	\begin{definition}[Polynomial Calculus (PC)]
		Let $\F \subseteq \Field[X]$ be a system of polynomial equations over a field $\Field$
		with variables in $X$. \\
		The \textit{inference rules} of the polynomial calculus are:
		\begin{itemize}
			\item Axiom: {\Large$\frac{}{~f~}$} for all $f \in \F$.
			\item Multiplication: {\Large$\frac{f}{~xf~}$} for any $f\in \Field[X], x\in X$.
			\item Linear Combination: {\Large$\frac{f\quad g}{~af ~+~ bg~}$}
			for any $f,g \in \Field[X], a,b \in \Field$.
		\end{itemize}
		A \textit{$\PC$ refutation} of $\F$ is a sequence $(p_1,p_2 , ... ,p_n = 1)$ of polynomials such that $p_n$ is the $1$-polynomial, and each $p_i$ is either a Boolean axiom, an axiom from $\F$, or is the result of the application of a proof rule to one or two polynomials $p_j, p_{j'}$ with $j,j' < i$.
	\end{definition}
	The PC  is a sound and complete proof system, that is, a polynomial equation system $\F$ is unsatisfiable if, and only if, there exists a refutation for $\F$. The complexity of this refutation may in general be super-polynomial.
	
	We consider two complexity measures for the PC. The \textit{number of lines} of a PC refutation
	$R = (p_1,p_2 , ... ,p_n = 1)$ is $n$, i.e.\ the length of the derivation sequence. 
	The \textit{degree} $\deg(R)$ of $R$ is the maximum degree of any of the $p_j$, for $1 \leq j \leq n$.
	For $k \in \bbN$, we denote by $\deg_k$-PC the restriction of the PC where only refutations of degree $\leq k$ are allowed. This is no longer a complete proof system; for instance, it cannot prove the pigeon hole principle \cite{razborov1998lower}.

	\subsection{Ideal Proof System}
	\label{sec:IPS}
	The Ideal Proof System (IPS) introduced by Grochow and Pitassi~\cite{IPS} is a more general formalism for proving in a verifiable way that $1$ is in the ideal generated by a set $\F$ of axioms. In this proof system, there exist no derivation rules, just a certificate. This takes the form of an algebraic circuit. IPS in particular $p$-simulates the polynomial calculus \cite[Proposition 3.4]{IPS}.
	\begin{definition}[Ideal Proof System (IPS), \cite{IPS}]
		Let $\F$ be the polynomial equation system $\{f_1(\vec x), f_2(\vec x), \ldots,f_m(\vec x)\}$
		in $\Field[X]$.
		Let $Y = \{y_1, y_2,...,y_m\}$ be a fresh set of variables.
		An \emph{IPS certificate} of unsatisfiability of $\F$ over $\Field$ is a polynomial
		$C(\vec x, \vec y) \in \Field[X \uplus Y]$ such that $(1) \ C(\vec x, \vec 0 ) = 0$, and $(2) \ C(\vec x, \vec f ) = 1$.\\
		An \emph{IPS proof} of unsatisfiability (i.e.\ a \emph{refutation}) of $\F$ is an algebraic circuit $\Circ$ with variables $X \uplus Y$ and constants $\bbF$ computing an $\IPS$-certificate of unsatisfiability.
	\end{definition}
	Condition (1) ensures that $C(\vec x, \vec f)$ is in the ideal generated by $\F$.
	The unrestricted IPS is sound and complete, i.e.\ there is an  IPS refutation of $\F$ if, and only if, $\F$ is unsatisfiable \cite{IPS}.  Note that this is not necessarily a proof system in the sense of Cook and Reckhow as it is not clear that certificates can be verified in polynomial time.   Verifying that a given circuit indeed computes a valid certificate requires polynomial identity testing (PIT), which is in randomized polynomial time (but not known to be in P).
	\\
	We define the \emph{size} $|C|$ of an IPS proof $C = (D,W)$ as 
	$
	|C| := \min(|D|, |X| + |Y|).
	$
	That is, we define the size of a refutation to be at least the instance size. 
	This would be inadequate for sublinear size refutations but in the symmetric setting that we study here, the smallest possible proof size is generally $|X| + |Y|$.
	
	For families $(\Ff_n)_{(n \in \bbN)}$ of instances and corresponding refutations $(C_n)_{(n \in \bbN)}$, we are mainly interested in IPS refutations of \emph{polynomial size} $|C_n| \leq p(|\F_n|)$.
	When we speak of the complexity of IPS proofs, we usually mean this complexity measure.
	
	For every $k \in \bbN$, $\deg_k$-IPS denotes the restriction of the IPS 
	in which the semantic degree $\deg(C)$ (see Section \ref{sec:circuits}) of any refutation $C$ is at most $k$. 
	
	Other natural restrictions of the IPS are obtained by allowing only circuits from certain circuit classes, such as bounded depth, or as in our case, symmetric circuits.
	By the Nullstellensatz, there always exists a \textit{$\vec y$-linear} (also called 
	\textit{Hilbert-like} in \cite{IPS}) IPS refutation
	$C(\vec x, \vec y)$ for every unsatisfiable $\F$, i.e.\ 
	$
	C(\vec x, \vec y) = \sum_{i=1}^m y_i g_i(\vec x)
	$
	for some $\vec g \in \Field[X]$. In other words, the $\vec y$-linear IPS is a sound and complete fragment of the general IPS. We denote it $\IPSlin$.
	It is shown in \cite{IPS} that $\IPSlin$ $p$-simulates the general IPS on instances that are themselves representable with polynomial-size algebraic circuits. 
	For fragments of the IPS with restricted circuit classes, it is however not clear that every proof can efficiently be simulated by a $\vec y$-linear one. Indeed, for symmetric proofs, as we show, $\IPSlin$ turns out to be a true restriction. 

The unrestricted IPS is remarkably powerful: It is shown in \cite[Theorem 3.5]{IPS} that IPS over any field of characteristic $q$ $p$-simulates any Frege proof system with MOD\subsc{$q$}-connectives. This is even true for Extended Frege, where extension axioms may be used as in the Extended Polynomial Calculus. Frege systems are standard textbook propositional proof systems such as the sequent calculus.

\section{Symmetric IPS Proofs}
\label{sec:symIPS}
We now introduce the symmetry restriction on IPS proofs.  That is to say, we consider when a set $\F$ of polynomials that is $\Gamma$-invariant for a group $\Gamma$ acting on its variables, admits a $\Gamma$-symmetric circuit as an IPS refutation.  We start with a formal definition.

\begin{definition}[Symmetric IPS]
	Let $\Gamma$ be a group and $X$ a $\Gamma$-set of variables. Let $\F = \{ f_1(\vec x), ..., f_m(\vec x)\}  \subseteq \Field[X]$ be a $\Gamma$-invariant set of polynomials and let $Y = \{y_1,...,y_m\}$ be a $\Gamma$-set of variables with the following action: For every $\pi \in \Gamma$ and $ i \in [m]$, $\pi(y_i) = y_j$ for the $j$ with $\pi(f_i) = f_j$.\\
	A \emph{$\Gamma$-symmetric IPS proof} of unsatisfiability of $\F$ is a $\Gamma$-symmetric algebraic circuit with variables $X \uplus Y$
	computing an $\IPS$ certificate $C(\vec x, \vec y)$ of $\F$.
\end{definition}
Note that if $\Gamma = \{\id\}$, then any IPS refutation of $\F$ is also a $\Gamma$-symmetric IPS refutation of $\F$. The complexity of a symmetric IPS refutation in general heavily depends on the choice of $\Gamma$. The bigger $\Gamma$ is and hence the more symmetric the circuits are required to be, the bigger we can expect the proof size to be.

We generally write \emph{$\Gamma$-sym-IPS} to mean the proof system allowing only $\Gamma$-symmetric proofs.  Of course, this only makes sense for sets $\F$ of polynomials that are $\Gamma$-invariant.  For these, as we show below, $\Gamma$-sym-IPS is a complete proof system.  Where $\Gamma$ is clear from context, we may write just sym-IPS.

Let $\Gamma$ and $\Gamma'$ be groups acting on $X$ with $\Gamma \leq \Gamma'$.  We say that $\Gamma'$-sym-IPS $p$-simulates $\Gamma$-sym-IPS if for every $\Gamma'$-invariant $\F$ and a $\Gamma$-sym-IPS refutation $C$ of $\F$, there is a $\Gamma'$-sym-IPS refutation of $\F$ with size polynomial in the size of $C$.  We also use this notion of $p$-simulation in the context of restricted proof systems, such as linear or bounded degree systems.  For instance, in Theorem~\ref{thm:deg_IPS=deg_sym_IPS} below, we show that (in suitably defined cases) $\degKIPS{k}  \leq_p \deg_k\text{-}\sIPS$.  This means that for each $k$, there is a fixed polynomial $p$ such that whenever a $\Gamma$-invariant $\F$ has any (i.e.\ $\{\id\}$-symmetric) IPS proof of size $n$ and degree $k$, it also has a $\Gamma$-symmetric IPS proof of size $p(n)$ and degree $k$.

\subsection{Completeness of  symmetric IPS}

A first natural question that arises when we restrict ourselves to symmetric proofs is whether every unsatisfiable $\Gamma$-symmetric polynomial equation system $\F$ has a $\Gamma$-symmetric refutation.  We show that this is indeed the case, in the sense that any IPS proof can be suitably symmetrised, though this may entail an exponential blowup in the size of the proof. On the other hand, there are $\F$ for which there is no symmetric \emph{linear} proof, which contrasts with the fact that linear IPS (without symmetry requirements) is complete.

We first prove the completeness of the symmetric IPS and then provide a simple counterexample for completeness for$\symIPSlin$. 
\begin{theorem}
	\label{thm:symIPScomplete}
	Let $\F$ be a $\Gamma$-symmetric system of polynomial equations.
	If $\F$ is unsatisfiable, then there
	is a $\Gamma$-symmetric $\IPS$ refutation of $\F$.
\end{theorem}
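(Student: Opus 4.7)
The plan is to start with any (not necessarily symmetric) IPS refutation $C(\vec x, \vec y)$ of $\F$, which exists by completeness of the unrestricted IPS, and then to symmetrise it by combining the translates $\pi(C)$ for all $\pi \in \Gamma$. The first step is to verify that each $\pi(C)$ is itself an IPS certificate for $\F$. The condition $\pi(C)(\vec x, \vec 0) = 0$ holds because the substitution $\vec y \mapsto \vec 0$ commutes with the variable permutation $\pi$, so it sends $\pi(C)$ to $\pi(C(\vec x, \vec 0)) = \pi(0) = 0$. For the second condition, the $\Gamma$-invariance of $\F$ gives $\pi(f_i) = f_{\pi(i)}$ where $\pi(y_i) = y_{\pi(i)}$, so the substitution $y_j \mapsto f_j$ commutes with the action of $\pi$, yielding $\pi(C)(\vec x, \vec f) = \pi(C(\vec x, \vec f)) = \pi(1) = 1$.

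The next step is to fuse the $|\Gamma|$ certificates into a single $\Gamma$-invariant one. Averaging would require $|\Gamma|^{-1}$ to exist in $\Field$, which need not be the case, so I would instead use the multiplicative construction
\[
D(\vec x, \vec y) \;:=\; 1 \;-\; \prod_{\pi \in \Gamma}\bigl(1 - \pi(C)(\vec x, \vec y)\bigr).
\]
Each factor evaluates to $1-0 = 1$ under $\vec y \mapsto \vec 0$ and to $1-1 = 0$ under $\vec y \mapsto \vec f$, so $D(\vec x, \vec 0) = 0$ and $D(\vec x, \vec f) = 1$; hence $D$ is an IPS certificate. It is manifestly $\Gamma$-invariant, since any $\sigma \in \Gamma$ merely permutes the factors of the product via $\pi \mapsto \sigma\pi$.

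To finish I would realise $D$ by a $\Gamma$-symmetric algebraic circuit. Take $|\Gamma|$ disjoint copies of the circuit $C$, indexed by $\pi \in \Gamma$, and in the $\pi$-copy relabel every input gate originally labelled $z \in X \uplus Y$ by $\pi(z)$, so that this copy computes $\pi(C)$. Attach to each copy a small gadget computing $1 - \pi(C)$, feed all these outputs into a single product gate, and subtract the result from $1$. Symmetry is witnessed as follows: for each $\sigma \in \Gamma$, map the $\pi$-copy isomorphically (as a DAG) onto the $(\sigma\pi)$-copy and fix the three output gadgets. An input gate labelled $\pi(z)$ in the $\pi$-copy is then sent to the corresponding gate in the $\sigma\pi$-copy, whose label is $\sigma\pi(z) = \sigma(\pi(z))$, which is precisely the image of the original label under $\sigma$. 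This is therefore a circuit automorphism extending the action of $\sigma$, so the circuit is $\Gamma$-symmetric.

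The resulting refutation has size $O(|\Gamma|\cdot|C|)$, which may be exponential in $|X|$; this blow-up is inherent in the symmetrisation and explains why the theorem guarantees only existence of a symmetric refutation, not one of polynomial size. The only point requiring genuine care is the bookkeeping in the last step, ensuring simultaneously that each copy, after relabelling, really computes $\pi(C)$ and that the prescribed permutation of copies does lift to an automorphism of the whole DAG that acts correctly on every input label.
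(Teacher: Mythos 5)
Your proposal is correct and follows essentially the same route as the paper: symmetrise multiplicatively over the translates $\pi(C)$ (thereby avoiding any need for $|\Gamma|^{-1}$ in the field) and realise the result by a circuit made of $|\Gamma|$ copies of $C$ on which each $\sigma \in \Gamma$ acts by permuting copies via $\pi \mapsto \sigma\pi$. The only cosmetic difference is that the paper uses the simpler combination $\prod_{\pi \in \Gamma}\pi(C)$, which already satisfies both certificate conditions, in place of your $1 - \prod_{\pi\in\Gamma}(1-\pi(C))$.
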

\begin{proof}
	Since IPS is complete, there is a certificate $C(\vec x, \vec y)$
	of unsatisfiability of $\F$, computed by some algebraic circuit $C$. We construct a $\Gamma$-symmetric circuit $\Circ^{sym}$ with the same semantics: For every $\pi \in \Gamma$, we introduce a copy of $\pi(C)$ in such a way that all these $\pi(C)$, for all $\pi \in C$, are identified at their input gates and otherwise disjoint. 
	To finish the construction, we add a multiplication gate $g_\times$ as the output of $\Circ^{sym}$. 
	It multiplies the outputs of all circuits $\pi(C)$, for all $\pi \in \Gamma$.
	The resulting circuit $\Circ^{sym}$ is $\Gamma$-symmetric by construction because every $\pi \in \Gamma$ extends to a circuit automorphism
	that maps each subcircuit $\pi'(C)$ to $(\pi \circ \pi')(C)$. We can also verify that $\Circ^{sym}$ is again a refutation:
	\begin{align}
		\Circ^{sym}(\vec x, \vec 0)
		 &= \prod_{\pi \in \Gamma}\pi (C(\vec x, \vec 0))
		 = \prod_{\pi \in \Gamma}\underbrace{C(\pi \vec x, \pi \vec 0)}_{= 0}
		= 0.\\
		{C}^{sym}(\vec x, \vec f)
		 &= \prod_{\pi \in \Gamma}\pi (C(\vec x, \vec f))
		 = \prod_{\pi \in \Gamma}\underbrace{C(\pi \vec x,\pi \vec f)}_{= 1}
		= 1.
	\end{align}	
	In the last equality of (1) and (2) we used that $0$ and $1$ are $\Gamma$-symmetric polynomials, so if $C(\vec x, \vec 0) = 0$, then also $C(\pi \vec x, \pi \vec 0) = 0$ for every $\pi \in \Gamma$ (and likewise for $1$).
	In the penultimate equality of (2), we also use the $\Gamma$-invariance of $\Ff$ and the fact that the action of $\Gamma$ on $Y$ is exactly as given by the lift of its action on $X$ to $\Ff$.
\end{proof}
This relatively naive construction blows up the size of the circuit by a factor of $|\Gamma|$, which may be as large as $|X|! \cdot |Y|!$. So even though there always exists a symmetric refutation, this may in general be much larger than the smallest asymmetric refutation.

The following example shows that there are cases where symmetric \emph{linear} refutations do not exist, regardless of the circuit size.
\begin{example}
	\label{ex:incompletenessOfSymHilbert}
	Let the variable set be $X = \{x_1, x_1^*, x_2, x_2^*\}$ and let $\Gamma \leq \Sym(X)$ be the group that is generated by $\{\pi,\pi^*\}$ defined as follows: $\pi = (x_1 \ x_2) \circ (x_1^* \ x_2^*)$, and $\pi^* = (x_1 \ x_1^*) \circ (x_2 \ x_2^*)$. That is, $\pi$ exchanges $1$ and $2$, and $\pi^*$ exchanges non-star with star. Consider the following equations over $\bbF_2$:\\
	\begin{tabular}{ccc}
		 (1) $x_1 + x_2 = 1$       &   (3) $x_1^*+x_2 = 1$       &  (5) $x_1 + x_1^* = 1$       \\
		  (2) $x_1^* + x_2^* = 1$ & (4) $x_1+x_2^* = 1$ & (6) $x_2 + x_2^* = 1$\\
	\end{tabular}	 
	~\\
	The equations are partitioned into three $\Gamma$-orbits. Equation (1) and (2) form an orbit, equation (3) and (4) as well, and equation (5) and (6).
	The system is unsatisfiable over $\bbF_2$ (and its algebraic closure) because $(1)+(4)+(6)$ is an IPS certificate of unsatisfiability.
	Nonetheless, there is no linear $\Gamma$-symmetric refutation $C(\vec x, \vec y)$: 
	Every monomial in such a certificate $C(\vec x, \vec y)$ would contain exactly one $\vec y$-variable. 
	In order to have $C(\vec x, \vec f) = 1$, there must be a degree-$1$ monomial in $C(\vec x, \vec y)$, i.e.\ consisting only of a single $\vec y$-variable $y_i$. But then, the entire orbit of $y_i$ must appear in $C(\vec x, \vec y)$ due to symmetry. As each orbit of equations has even size and the field has characteristic $2$, the constant terms in $C(\vec x, \vec f)$ sum up to $0$. Hence, there is no symmetric $\vec y$-linear polynomial with $C(\vec x, \vec f) = 1$.
\end{example}	 

On the other hand, we can show that as long we are working over a field $\Field$ of characteristic either $0$ or coprime with the order of $\Gamma$, then any $\Gamma$-invariant set $\F$ of polynomials has a $\Gamma$-symmetric linear refutation.  This follows from Corollary \ref{cor:completenessOfSymLinIPS} below.

\subsection{Symmetry in Bounded-Degree IPS}

Recall that for any constant $k \in \bbN$, $\deg_k$-IPS is the restriction of IPS where the polynomials computed at each gate have degree at most $k$. This bounded-degree fragment is of special interest because of its relation to the bounded-degree polynomial calculus, whose expressive power is related to important logics from finite model theory and also to the well-known Weisfeiler-Leman graph isomorphism algorithm (see Section \ref{sec:appplications}).
We show that constant-degree IPS proofs can be symmetrised efficiently, under certain assumptions on the field and the symmetry group (which are satisfied in most interesting cases). Thus, in the bounded-degree regime, requiring proofs to be symmetric is essentially no restriction.

\begin{theorem}
	\label{thm:deg_IPS=deg_sym_IPS}
	Let $\Field$ be a field and let $\Gamma$ be a group
	such that either $\mathrm{char}(\Field) = 0$, or $\bbF$ has positive characteristic and $|\Gamma|$ and $\mathrm{char}(\Field)$ are coprime. Let $k \in \bbN$ be a constant. 
	Then for every $\Gamma$-invariant polynomial equation system $\F \subseteq \Field[X]$ that possesses a $\deg_k$-$\IPS$ refutation $\Circ$, there also exists a \emph{$\Gamma$-symmetric} refutation $\Circ^{sym}$ with $|\Circ^{sym}| \leq \Oo(|\F|^{k}) \leq \Oo(|\Circ|^{k})$.
	If $\Circ$ is $\vec{y}$-linear, then so is $\Circ^{sym}$.
\end{theorem}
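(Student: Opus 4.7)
The plan is to symmetrise $\Circ$ by group averaging. Since the hypothesis guarantees that $|\Gamma|$ is invertible in $\Field$, the polynomial
$$\Circ^{sym}(\vec x, \vec y) \ := \ \frac{1}{|\Gamma|}\sum_{\pi \in \Gamma}\pi(\Circ)(\vec x, \vec y)$$
is well-defined, $\Gamma$-invariant, of degree at most $k$, and $\vec y$-linear whenever $\Circ$ is (each $\pi(\Circ)$ inheriting these properties). The same calculation as in the proof of Theorem~\ref{thm:symIPScomplete}, but with the arithmetic mean in place of the product, shows $\Circ^{sym}(\vec x, \vec 0) = 0$ and $\Circ^{sym}(\vec x, \vec f) = 1$, so $\Circ^{sym}$ is still an $\IPS$ certificate of $\F$. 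What remains is to realise it as a $\Gamma$-symmetric circuit of size $\Oo(|\F|^k)$: merely superposing one copy of $\pi(\Circ)$ for each $\pi\in\Gamma$, as in Theorem~\ref{thm:symIPScomplete}, already yields $\Gamma$-symmetry, but gives size $|\Gamma|\cdot|\Circ|$, which is in general far too large.

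The key quantitative observation is that a polynomial of degree at most $k$ in the $|X|+|Y|=\Oo(|\F|)$ variables involves at most $\binom{|X|+|Y|+k}{k}=\Oo(|\F|^k)$ distinct monomials. Let $N_k$ denote this set; $\Gamma$-invariance of $\Circ^{sym}$ forces monomials in the same $\Gamma$-orbit to share coefficients, so one may write
$$\Circ^{sym} \ = \ \sum_{O \in N_k/\Gamma} c_O \sum_{m \in O} m$$
for some scalars $c_O \in \Field$. From this expansion I would read off the target circuit directly: a small multiplication subcircuit $M_m$ for each $m \in N_k$ computing the monomial; for each orbit $O$, an addition gate $S_O$ summing the $M_m$ for $m \in O$; and a final addition gate forming $\sum_O c_O\cdot S_O$, with the $c_O$ realised as constant input gates. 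Each $M_m$ uses $\Oo(k)=\Oo(1)$ gates and has semantic degree $\leq k$, so the total size is $\Oo(|\F|^k)$. By construction the family $\{M_m \mid m\in N_k\}$ is a $\Gamma$-set under $\pi\cdot M_m = M_{\pi(m)}$, each $S_O$ is $\Gamma$-fixed, and the constant gates are fixed by the trivial action on $\Field$; hence the action of $\Gamma$ on input gates extends to an automorphism of the whole circuit. In the $\vec y$-linear case, one further restricts $N_k$ to monomials containing exactly one $\vec y$-variable, giving $|N_k|=\Oo(|X|^{k-1}|Y|)=\Oo(|\F|^k)$ and preserving linearity.

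The main technical care goes into defining the internal structure of each subcircuit $M_m$ so that the $\Gamma$-action on inputs genuinely lifts to a circuit automorphism sending $M_m$ to $M_{\pi(m)}$, rather than merely one that preserves the polynomial computed. This is handled by a canonical $\Gamma$-equivariant gadget: for each variable $v \in X \cup Y$ and each $j \in [k]$ introduce a shared ``power subcircuit'' $P_{v,j}$ computing $v^j$ (duplicating input gates for $v$ as needed to avoid multi-edges), so that the family $\{P_{v,j}\}$ is $\Gamma$-equivariant via $\pi\cdot P_{v,j}=P_{\pi(v),j}$; then take $M_m$, for $m=\prod_v v^{e_v}$, to be a single product gate whose children are the $P_{v,e_v}$ with $e_v \geq 1$. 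The bounded-degree hypothesis is essential throughout: without it $|N_k|$ grows super-polynomially, in line with Theorem~\ref{thm:separationOfSymBoundedDegree}, which shows that beyond bounded degree the symmetry requirement genuinely costs more.
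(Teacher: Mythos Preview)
Your proof is correct and follows essentially the same approach as the paper: average the certificate over $\Gamma$ using the invertibility of $|\Gamma|$, then realise the resulting $\Gamma$-invariant degree-$k$ polynomial by its sum-of-monomials circuit, whose size is bounded by the $\Oo(|\F|^k)$ monomial count. You are in fact more careful than the paper on one point: the paper simply asserts that the sum-of-monomials circuit ``is also symmetric because $C^{sym}(\vec x,\vec y)$ is symmetric by construction,'' whereas your orbit-grouped layout with shared equivariant power gadgets $P_{v,j}$ makes explicit why the $\Gamma$-action on inputs extends to a genuine circuit automorphism.
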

\begin{proof}
	Let $C(\vec x,\vec y)$ be a certificate of unsatisfiability of the polynomial equation system $\F$, computed by a circuit with semantic degree $k$. Let $M \subseteq \bbF[X \cup Y]$ be the set of monomials appearing in $C(\vec x,\vec y)$.
	Then the certificate $C(\vec x, \vec y)$ can be written as
	$
	\sum_{m \in M}c_m \cdot m(\vec x, \vec y),
	$
	where $c_m \in \Field$ and every $m \in M$ has degree at most $k$.
	We define
	$
	C^{sym}(\vec x, \vec y) := |\Gamma|^{-1} \cdot \sum_{\pi \in \Gamma}\pi C(\vec x, \vec y).
	$
	This polynomial $C^{sym}(\vec x, \vec y)$ is also an IPS certificate of unsatisfiability 
	of $\F$ since  
	\begin{enumerate}[label=(\arabic*)]
		\item $C^{sym}(\vec x, \vec 0) = 
		|\Gamma|^{-1} \cdot \sum_{\pi \in \Gamma} \pi C(\vec x, \vec 0) = 0$, and
		\item $C^{sym}(\vec x, \vec f) = 
		|\Gamma|^{-1} \cdot \sum_{\pi \in \Gamma} \pi C(\vec x, \vec f) = |\Gamma|^{-1} \cdot |\Gamma| \cdot 1 = 1$. 
	\end{enumerate}
	Note that $|\Gamma|^{-1}$ is defined in $\bbF$ since either the characteristic of $\bbF$ is zero, or it is coprime with $|\Gamma|$.  
	Let $\Gamma M$ be the closure of $M$ under the action of $\Gamma$, i.e.
	$
	\Gamma M := \bigcup_{\pi \in \Gamma} \pi(M).   
	$
	Now $C^{sym}(\vec x, \vec y)$ has the form 
	\begin{align*}
		C^{sym}(\vec x, \vec y) = 
		|\Gamma|^{-1} \sum_{\pi \in \Gamma}  \sum_{m \in M} c_m \cdot \pi(m(\vec x, \vec y)) 
		= |\Gamma|^{-1} \sum_{m \in \Gamma M} \Big( \sum_{\pi \in \Gamma} c_{\pi^{-1}(m)}  \Big) \cdot m
	\end{align*}
	The number of distinct monomials in $C^{sym}(\vec x, \vec y)$ is at most $\Oo((|X|+|Y|+1)^{k})$ because all monomials have degree at most $k$. 
	Thus, the circuit that just expresses $C^{sym}(\vec x, \vec y)$ as a sum of monomials has polynomial size, and it it also symmetric because $C^{sym}(\vec x, \vec y)$ is symmetric by construction.
	By our convention (see Section \ref{sec:IPS}), $|C| \geq |X|+|Y|$, so $|C^{sym}| \leq \poly(|C|)$. If $\Circ$ is $\vec{y}$-linear, then so is $\Circ^{sym}$ because if $M$ contains only $\vec y$-linear monomials, then this is still true for $\Gamma M$.
\end{proof}

\begin{corollary}
	\label{cor:boundedDegPC<=symIPS}
	Let $\Field$ be a field and let $\Gamma$ be a group
	such that either $\mathrm{char}(\Field) = 0$, or $\bbF$ has positive characteristic and $|\Gamma|$ and $\mathrm{char}(\Field)$ are coprime. Let $k \in \bbN$ be a constant. Then for this field and symmetry group,
	$
	\deg_k\text{-}\mathrm{PC} \leq_p \deg_k\text{-}\symIPSlin.
	$
\end{corollary}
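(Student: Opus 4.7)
The plan is to combine two simulations: first convert a $\deg_k$-$\mathrm{PC}$ refutation into a (not necessarily symmetric) $\deg_k$-$\IPSlin$ refutation of polynomial size, then apply Theorem \ref{thm:deg_IPS=deg_sym_IPS} to symmetrise it while preserving both the degree bound and the $\vec y$-linearity.

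For the first step, I would follow the standard simulation of PC by $\IPSlin$ (\cite[Proposition 3.4]{IPS}) and verify that the degree is preserved. Let $R = (p_1, \ldots, p_n = 1)$ be a $\mathrm{PC}$ refutation of $\F = \{f_1, \ldots, f_m\}$ with $\deg(p_j) \leq k$ for all $j$. By induction on $j$, each $p_j$ can be written as $\sum_{i=1}^m g_{ji}(\vec x) f_i(\vec x)$ where each $g_{ji}$ has degree at most $k - \deg(f_i)$: the axiom rule sets $g_{ji} = \delta_{ij}$; the multiplication rule multiplies each coefficient by the relevant variable; and the linear combination rule takes the corresponding linear combination. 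Setting $j = n$ gives $g_1, \ldots, g_m$ with $\deg(g_i f_i) \leq k$ and $\sum_i g_i f_i = 1$. The associated $\IPSlin$ certificate is
\[
C(\vec x, \vec y) = \sum_{i=1}^m y_i \, g_i(\vec x),
\]
which is $\vec y$-linear, has semantic degree at most $k$, and can be realised by a circuit of size polynomial in $|\F|$ and $n$ by writing it as a sum of its (at most $\Oo((|X|+|Y|)^k)$ many) monomials.

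For the second step, I would directly invoke Theorem \ref{thm:deg_IPS=deg_sym_IPS}. Since $\F$ is $\Gamma$-invariant and $C$ is a $\deg_k$-$\IPSlin$ refutation, that theorem produces a $\Gamma$-symmetric refutation $C^{sym}$ of size $\Oo(|\F|^k)$ with the same degree bound, and explicitly preserves $\vec y$-linearity. Composing the two steps yields the desired $p$-simulation $\deg_k\text{-}\mathrm{PC} \leq_p \deg_k\text{-}\symIPSlin$ under the stated hypotheses on $\Field$ and $\Gamma$.

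I do not expect any serious obstacle: both ingredients are already in hand. The only mildly delicate point is bookkeeping the semantic degree during the PC-to-$\IPSlin$ translation, i.e.\ verifying that the naive monomial expansion of $\sum_i y_i g_i(\vec x)$ gives a circuit whose subcircuits all have degree at most $k$ (which it does, since every monomial has total degree at most $k$). The hypothesis on $\mathrm{char}(\Field)$ and $|\Gamma|$ enters only through Theorem \ref{thm:deg_IPS=deg_sym_IPS}, where it is needed to invert $|\Gamma|$ when averaging over the group action.
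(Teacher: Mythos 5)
Your proposal matches the paper's proof exactly: the paper likewise converts the degree-$k$ PC refutation into a degree-$k$ $\IPSlin$ refutation via the standard simulation of \cite[Proposition 3.4]{IPS} and then applies Theorem~\ref{thm:deg_IPS=deg_sym_IPS}, which preserves both the degree bound and $\vec y$-linearity. The explicit induction you give for the first step only fills in detail the paper leaves to the cited reference, so the two arguments are essentially identical.
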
	
\begin{proof}
	Any degree-$k$ PC refutation can be translated in a straightforward way into a degree-$k$ $\IPSlin$-refutation (see \cite[Proposition 3.4]{IPS}). This can be efficiently $\Gamma$-symmetrised using Theorem~\ref{thm:deg_IPS=deg_sym_IPS}.
\end{proof}

\begin{corollary}
	\label{cor:completenessOfSymLinIPS}
	Let $\Field$ be a field and let $\Gamma$ be a group
	such that either $\charact(\Field) = 0$, or $\bbF$ has positive characteristic and $|\Gamma|$ and $\charact(\Field)$ are coprime.  Then, any $\Gamma$-invariant unsatisfiable set of polynomials $\F$ over this field has a $\Gamma$-symmetric linear refutation.
\end{corollary}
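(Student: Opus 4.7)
The plan is to combine the completeness of linear IPS (as guaranteed by the Nullstellensatz) with the averaging construction that was just used in the proof of Theorem~\ref{thm:deg_IPS=deg_sym_IPS}. Since $\F$ is unsatisfiable, there exists a $\vec y$-linear IPS certificate $C(\vec x, \vec y) = \sum_{i=1}^m y_i g_i(\vec x)$ with $g_i \in \Field[X]$, as noted in Section~\ref{sec:IPS}. This certificate is not required to be symmetric a priori, and it has some semantic degree $d$ depending on $\F$, but no bound on $d$ is needed: what matters is only that the symmetrization step produces a certificate of the same $\vec y$-linear form.

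The main step would be to define
\[
C^{sym}(\vec x, \vec y) := |\Gamma|^{-1} \cdot \sum_{\pi \in \Gamma} \pi C(\vec x, \vec y),
\]
exactly as in the proof of Theorem~\ref{thm:deg_IPS=deg_sym_IPS}. I would then verify, using the same two calculations as there, that $C^{sym}(\vec x, \vec 0) = 0$ and $C^{sym}(\vec x, \vec f) = 1$; the first uses $\Gamma$-invariance of $0$, and the second uses the $\Gamma$-invariance of $\F$ together with the fact that $\Gamma$ acts on $Y$ via the induced action on $\F$, so that each $\pi C(\vec x, \vec f)$ evaluates to $1$. The resulting polynomial is $\Gamma$-invariant by construction, hence trivially computable by a (possibly very large) $\Gamma$-symmetric algebraic circuit, e.g.\ one that just writes it out as a sum of monomials.

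The key point that needs to be checked is twofold: first, that the averaging is meaningful, and second, that $\vec y$-linearity is preserved. For the former, the hypothesis on $\Field$ and $\Gamma$ is precisely what is needed to ensure that $|\Gamma|^{-1}$ exists in $\Field$: in characteristic zero this is automatic, and in positive characteristic coprimality of $|\Gamma|$ with $\charact(\Field)$ makes $|\Gamma|$ invertible. For the latter, note that the action of any $\pi \in \Gamma$ sends a $\vec y$-linear monomial $y_i \cdot m(\vec x)$ to $y_{\pi(i)} \cdot \pi(m(\vec x))$, which is again $\vec y$-linear; hence $C^{sym}$ remains a sum of terms each containing exactly one $\vec y$-variable. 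There is no real obstacle here — the result is essentially a direct corollary of the construction in Theorem~\ref{thm:deg_IPS=deg_sym_IPS}, with the observation that the degree bound played no role in either the symmetrization or the preservation of linearity; it was only relevant for bounding the size of the resulting circuit.
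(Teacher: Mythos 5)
Your proof is correct, but it takes a somewhat more direct route than the paper. The paper derives the corollary from the completeness of the polynomial calculus: an unsatisfiable $\F$ has a PC refutation of some degree $k$, which by Corollary~\ref{cor:boundedDegPC<=symIPS} yields a $\Gamma$-symmetric $\deg_k\text{-}\IPSlin$ refutation. You instead start from the Nullstellensatz-guaranteed $\vec y$-linear IPS certificate and apply the averaging operator $C \mapsto |\Gamma|^{-1}\sum_{\pi\in\Gamma}\pi C$ from Theorem~\ref{thm:deg_IPS=deg_sym_IPS} directly, observing that the degree bound in that theorem was only needed to control the \emph{size} of the resulting sum-of-monomials circuit, not its existence, its symmetry, or the preservation of $\vec y$-linearity. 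Both arguments rest on the same symmetrisation step and the same hypothesis that $|\Gamma|^{-1}$ exists in $\Field$; yours avoids the detour through PC and its translation into $\IPSlin$, at the cost of giving up any control on the degree of the resulting refutation (the paper's route at least ties it to the degree of a PC refutation of $\F$). Since the corollary asserts only existence, this loss is immaterial, and your verification of the two certificate conditions, of the preservation of linearity under the orbit action $y_i \cdot m(\vec x) \mapsto \pi(y_i)\cdot \pi(m(\vec x))$, and of the $\Gamma$-symmetry of the explicit sum-of-monomials circuit is complete.
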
	
\begin{proof}
	If $\Ff$ is an unsatisfiable polynomial equation system, then it has a PC refutation by the completeness of polynomial calculus. Let $k$ be its degree. By Corollary \ref{cor:boundedDegPC<=symIPS}, there exists a $\Gamma\text{-}\deg_k\text{-}\symIPSlin{}$-refutation.
\end{proof}

\section{Applications in Finite Model Theory and Graph Isomorphism}
\label{sec:appplications}
We now turn to other well-studied symmetry-invariant formalisms and show that they are subsumed by the symmetric IPS in a certain sense. These formalisms are logics from finite model theory, specifically \emph{fixed-point logic with counting} (FPC) and \emph{Choiceless Polynomial Time} (CPT). Drawing on previous work \cite{ggpp, Pago}, we show how their expressive power relates to the power of sym-IPS, specifically with respect to the graph isomorphism problem.

\subsection{Simulating fixed-point logic with counting in sym-IPS}
The evaluation of any sentence $\psi$ in fixed-point logic with counting in a given finite structure $\AA$ can be simulated by a bounded-degree symmetric IPS proof over $\bbQ$ in the following sense. For every fixed FPC-sentence $\psi$ and structure $\AA$, there is an axiom system $\Ff_\psi(\AA)$ that expresses the existence of a winning strategy in the model-checking game for $\AA \models \psi$. This is an instance of a \emph{threshold safety game} \cite{threshold_games_FPC}.
An IPS refutation of $\Ff_\psi(\AA)$ is then a witness for the fact that $\AA \models \psi$. 
The axiom system $\Ff_\psi(\AA)$ is FOC-interpretable in $\AA$, meaning that it is FO-definable in $\AA$ extended with a numeric sort (see Section \ref{sec:logics}). In total, the problem of deciding the existence of a bounded-degree sym-IPS refutation for a given axiom system is complete for FPC under efficient symmetry-preserving reductions:
\begin{theorem}
	\label{thm:simulateFPCinIPS}
	For every $\FPC$-sentence $\psi$ with signature $\tau$, there exists an $\mathrm{FOC}$-definable mapping $\Ff_\psi$ that takes every finite $\tau$-structure $\AA$ to a polynomial equation system $\Ff_\psi(\AA)$ over $\bbQ$ such that:
	\begin{enumerate}
		\item $|\Ff_\psi(\AA)| \leq \poly(|\AA|)$.
		\item $\Aut(\AA)$ has a natural action on the variables of $\Ff_\psi(\AA)$, and $\Ff_\psi(\AA)$ is $\Aut(\AA)$-invariant. 
		\item $\Ff_\psi(\AA)$ is unsatisfiable if, and only if, $\AA \models \psi$.
		\item If $\Ff_\psi(\AA)$ is unsatisfiable, then $\Ff_\psi(\AA)$ has an $\Aut(\AA)$-symmetric $\deg_2\text{-}\IPSlin$ refutation of size $\poly(|\Ff_\psi(\AA)|)$.
	\end{enumerate}	
\end{theorem}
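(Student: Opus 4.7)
The plan is to follow the template of \cite{ggpp, Pago}, which translate FPC model-checking into a \emph{threshold safety game} \cite{threshold_games_FPC} whose winning problem is then encoded as a polynomial equation system. For every fixed FPC-sentence $\psi$, the associated game $G_\psi(\AA)$ has a polynomially bounded arena whose positions, ownership, transitions, and local threshold conditions are all FO-definable over $\AA^*$. The verifier has a winning strategy from the initial position if, and only if, $\AA \models \psi$. The axiom system $\Ff_\psi(\AA)$ will be a polynomial encoding of the existence of a \emph{counter-strategy} (a safe strategy for the opponent), so that $\Ff_\psi(\AA)$ is unsatisfiable exactly when $\AA \models \psi$. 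Because all game components are FO-definable in $\AA^*$ and canonical in $\AA$, the resulting system is FOC-interpretable in $\AA$ and carries a natural $\Aut(\AA)$-action that leaves it invariant.

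To build $\Ff_\psi(\AA)$, I would associate to each position $v$ a Boolean variable $x_v$ together with the Boolean axiom $x_v^2 - x_v$, and add for each non-terminal position $v$ a local axiom relating $x_v$ to the $x_u$ for $u \in \mathrm{succ}(v)$. For verifier positions this is a simple product-style axiom of degree $2$, whereas for opponent positions with threshold $k$ I encode the counting constraint as a sum of degree-$1$ terms tied to an auxiliary numeric variable supplied by the sort of $\AA^*$; this keeps the total degree at $2$. A final axiom $1 - x_{v_0}$ asserts that the initial position is safe for the opponent. Size, FOC-interpretability, and $\Aut(\AA)$-invariance follow uniformly from the locality and canonicity of the construction, and unsatisfiability is equivalent to $\AA \models \psi$ by soundness and completeness of the game.

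For the refutation, assume $\AA \models \psi$ and let $W$ denote the verifier's winning region in $G_\psi(\AA)$. Since safety games can be solved by a polynomial-time fixed-point computation expressible in FPC, $W$ is FPC-definable in $\AA$ and hence $\Aut(\AA)$-invariant. A backward-induction derivation then proceeds bottom-up along $W$: from terminal verifier-winning positions one derives $x_v$ directly from a single axiom, and at each non-terminal $v \in W$ one derives $x_v$ by combining the axiom for $v$ with the previously derived $x_u$ for winning successors $u$, using only constant-degree linear operations. At the root this produces $x_{v_0}$, which combined with the axiom $1 - x_{v_0}$ yields $1$. The result is a PC-refutation of degree $2$, so Corollary~\ref{cor:boundedDegPC<=symIPS} and Theorem~\ref{thm:deg_IPS=deg_sym_IPS} --- applicable since $\charact(\bbQ) = 0$ --- convert it into an $\Aut(\AA)$-symmetric $\deg_2\text{-}\IPSlin$ refutation of size $\poly(|\Ff_\psi(\AA)|)$.

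The central obstacle is simultaneously achieving (i) degree-$2$ axioms, (ii) a PC-refutation of degree $2$ that remains $\vec y$-linear after translation to IPS, and (iii) full FOC-definability of the construction. Point (i) is handled by pushing counting into the numeric sort of $\AA^*$; point (iii) follows by uniformity of the encoding; and point (ii) is ensured by the choice of threshold-safety games, which support a backward-induction proof in which each position contributes exactly one axiom application, so the translation of \cite[Proposition~3.4]{IPS} produces a $\vec y$-linear IPS-refutation of the same degree before symmetrisation.
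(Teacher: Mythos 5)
Your proposal follows essentially the same route as the paper: the paper simply cites Theorem~4.4 of \cite{ggpp} for the construction of $\Ff_\psi$ via threshold safety games, its FOC-definability, and the equivalence of $\AA \models \psi$ with the existence of a degree-$2$ PC refutation, and then applies Corollary~\ref{cor:boundedDegPC<=symIPS} and Theorem~\ref{thm:deg_IPS=deg_sym_IPS} (valid over $\bbQ$ since $\charact(\bbQ)=0$) to obtain the $\Aut(\AA)$-symmetric $\deg_2\text{-}\IPSlin$ refutation. You reconstruct the cited game encoding in somewhat more (and necessarily sketchier) detail, but the decomposition and the key symmetrisation step are identical.
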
	
This is mainly a consequence of Theorem 4.4 in \cite{ggpp}. There, the desired mapping $\Ff_\psi$ is constructed and it is shown that $\Ff_\psi(\AA)$ has a degree-$2$ PC refutation $R$ over $\bbQ$ if and only if $\AA \models \psi$. 
By Corollary \ref{cor:boundedDegPC<=symIPS}, there also exists a $\Aut(\AA)$-symmetric $\deg_2\text{-}\IPSlin$-refutation of size $\poly(|R|)$, and even of size $\poly(|\Ff_\psi(\AA)|)$ (see Theorem~\ref{thm:deg_IPS=deg_sym_IPS}). This proves Item 4 from the theorem. The first two items follow from the properties of FOC-interpretations and the third item is due to the construction in \cite{ggpp}.

\subsection{Symmetric IPS proofs of graph non-isomorphism}
Now we pass on from FPC to a stronger logic, namely \emph{Choiceless Polynomial Time} (CPT) and show that this, too, can in a certain sense be simulated in$\symIPSlin{}$. The greater model-theoretic expressiveness of CPT is reflected on the proof system side in the fact that we (provably) need to go beyond the bounded-degree regime. 
Another difference to the simulation of FPC is that we now consider a fixed problem, namely
\emph{graph isomorphism}. We show that$\symIPSlin{}$ efficiently distinguishes all graphs (using their natural symmetries as the symmetry group) that are also distinguishable in CPT. 

\paragraph*{Distinguishing graphs in an algebraic proof system} 
Whenever we speak of \emph{graphs} in this section, they may be vertex and edge coloured. 
Given two graphs $G$ and $H$, there is a standard system of polynomial equations $\Piso(G,H)$ \cite{GI, Pago} whose solutions encode isomorphisms between $G$ and $H$. Thus, $G$ and $H$ are non-isomorphic if, and only if, $\Piso(G,H)$ is unsatisfiable.
The variable set of $\Piso(G,H)$ is $X \coloneqq \{x_{vw} \mid v \in V(G), w \in V(H), v \sim w\}$, where $\sim \subseteq V(G) \times V(H)$ is the relation that contains all $(v,w)$ such that $v$ and $w$ have the same colour.
The polynomials of $\Piso(G,H)$ are: $\sum_{\stackrel{v \in V(G)}{v \sim w}} x_{vw} - 1$ for each $w \in V(H)$, $\sum_{\stackrel{w \in V(H)}{v \sim w}} x_{vw} - 1 $ for each $v \in V(G)$, and $x_{vw}x_{v'w'}$ for all $v,v' \in V(G), w,w' \in V(H)$ with $v \sim w, v' \sim w'$ and such that $vv' \mapsto ww'$ is not a local isomorphism.
The Boolean axioms $x^2 - x$ for all $x \in X$ are also part of $\Piso(G,H)$. The idea behind this formulation is that a satisfying Boolean assignment to the variables in $X$ encodes an isomorphism from $G$ to $H$ in the sense that $v \in V(G)$ is mapped to $w \in V(H)$ if, and only if, $x_{vw}$ is assigned to $1$.
Note that $\Aut(G) \times \Aut(H)$ acts naturally on $X$: Let $(\pi_G, \pi_H) \in \Aut(G) \times \Aut(H)$. Then $(\pi_G, \pi_H)(x_{vw}) = x_{\pi_G(v)\pi_H(w)}$. It is not hard to see that $\Piso(G,H)$ is invariant under this group action: The polynomials associated with vertices in $G$ and $H$ are clearly symmetric and the polynomials $x_{vw}x_{v'w'}$ that forbid local non-isomorphisms depend on the edges and non-edges, which are preserved by $\Aut(G) \times \Aut(H)$.

When we say that an algebraic proof system distinguishes two graphs $G$ and $H$, we mean that it admits a refutation of $\Piso(G,H)$. 
If $\Kk$ is a class of graphs, then we say that$\symIPSlin$ \emph{efficiently distinguishes} all non-isomorphic graphs in $\Kk$ if there exists a polynomial $p(n)$ such that for any two non-isomorphic $G,H \in \Kk$, there exists an $\Aut(G) \times \Aut(H)$-symmetric $\IPSlin$-refutation $C$ of $\Piso(G,H)$ of size $|C| \leq p(|\Piso(G,H)|)$ over the field $\bbQ$.

\paragraph*{Distinguishing graphs in Choiceless Polynomial Time}
For any pair of non-isomorphic graphs $G$ and $H$, there is some formula (say of first-order logic) that distinguishes them.  For a class of graphs, we are interested in obtaining bounds (say on the number of variables or other parameters) of the minimum distinguishing formulas for pairs of non-isomorphic graphs from the class.  Here we are particularly interested in the number of variables of an $\FPC$ formula, or the resource bounds of a CPT sentence.  And, we relate these to bounds on the IPS refutation of $\Piso(G,H)$.

\begin{definition}[Distinguishing graphs in CPT, \cite{Pago}]
	\label{def:distinguishingInCPT}
	Let $\Kk$ be a class of graphs. We say that $\CPT$ distinguishes all graphs in $\Kk$ if there exists a polynomial $p(n)$ and a constant $k \in \bbN$ such that for any two non-isomorphic $G, H \in \Kk$, there exists a sentence $\Pi \in \CPT(p(n))$ with $\leq k$ variables such that $G \models \Pi$ and $H \not\models \Pi$. 
\end{definition}	
Recall from Section \ref{sec:logics} that $\CPT(p(n))$ is the fragment of $\CPT$ whose sentences have resource bound at most $p(n)$.
Similarly, the $k$-variable fragment of FPC distinguishes all graphs in $\Kk$ if there exists a distinguishing FPC-sentence with $\leq k$ variables for all $G \not\cong H$ in $\Kk$.
\begin{theorem}
	\label{thm:summaryGI}
	Let $\Kk$ be a graph class. 
	\vspace{-0.2cm}
	\begin{enumerate}
		\item If there is a $k \in \bbN$ such that the $k$-variable fragment of $\FPC$ distinguishes all non-isomorphic graphs in $\Kk$, then so does symmetric $\deg_k\text{-}\IPS$.
		\item If $\CPT$ distinguishes all non-isomorphic graphs in $\Kk$, then$\symIPSlin$ efficiently distinguishes them.
	\end{enumerate}
\end{theorem}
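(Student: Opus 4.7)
Both parts are obtained by combining known proof-theoretic simulations of the two logics with the symmetrisation tools of \cref{sec:symIPS}.

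For item 1, the plan is to invoke the established correspondence between the $k$-variable fragment of FPC, the $(k-1)$-dimensional Weisfeiler--Leman algorithm, and bounded-degree polynomial calculus refutations of $\Piso(G,H)$ over $\bbQ$. Concretely, if $G$ and $H$ are distinguished by some $k$-variable FPC-sentence, then by the constructions underlying \cite{ggpp} the system $\Piso(G,H)$ admits a PC refutation of degree at most $k$ and size $\poly(|\Piso(G,H)|)$. Since $\Piso(G,H)$ is $\Gamma$-invariant for $\Gamma \coloneqq \Aut(G) \times \Aut(H)$, and $\charact(\bbQ)=0$, the hypotheses of \cref{cor:boundedDegPC<=symIPS} are satisfied, and its application (which itself rests on \cref{thm:deg_IPS=deg_sym_IPS}) yields a $\Gamma$-symmetric $\deg_k\text{-}\IPSlin$ refutation of polynomial size. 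This is a fortiori a symmetric $\deg_k\text{-}\IPS$ refutation, as required.

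For item 2, the starting point is the symmetry-preserving simulation of CPT in the bounded-degree extended polynomial calculus from \cite{Pago}. Given that CPT distinguishes every non-isomorphic pair in $\Kk$, this simulation provides, for each such pair, a $\Gamma$-symmetric extended PC refutation of $\Piso(G,H)$ of size $\poly(|\Piso(G,H)|)$. The remaining task is to turn any such refutation into a $\Gamma$-symmetric $\IPSlin$ proof of polynomial size. I would adapt the standard translation of PC into $\IPSlin$ (cf.\ \cite[Proposition~3.4]{IPS}) to produce a certificate of the form $C(\vec x, \vec y) = \sum_i y_i g_i(\vec x, \vec z)$, and then eliminate each extension variable $z_i$ by inlining the subcircuit defining it. Since inlining only modifies the $\vec x$-dependence of the $g_i$, the certificate remains $\vec y$-linear; the total size stays polynomial because each extension axiom is given by a polynomial-size circuit in the original ext-PC proof.

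The main obstacle is preserving $\Gamma$-symmetry throughout this translation. The key observation is that, in the simulation from \cite{Pago}, the extension axioms form $\Gamma$-orbits whose elements are defined by $\Gamma$-conjugate circuits; hence the inlined circuits can be shared across each orbit in a $\Gamma$-equivariant manner and assembled into one $\Gamma$-symmetric circuit computing $C(\vec x, \vec y)$. This is similar in spirit to the construction in the proof of \cref{thm:symIPScomplete}, but crucially avoids the $|\Gamma|$-fold blow-up by exploiting the symmetry already present in the ext-PC proof rather than brute-force symmetrising a non-symmetric one. Carefully verifying this equivariance --- and, more generally, that the $\PC \Rightarrow \IPSlin$ translation itself respects the induced $\Gamma$-action on proof lines --- is the technical heart of the argument.
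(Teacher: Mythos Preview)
Your plan for item~1 is essentially the paper's argument (the paper cites \cite[Theorem~4.4]{GI} rather than \cite{ggpp} for the step $k$-WL $\Rightarrow$ degree-$k$ PC on $\Piso(G,H)$, but the structure is identical: bounded-degree PC followed by \cref{cor:boundedDegPC<=symIPS}).

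For item~2 your high-level strategy matches the paper's --- pass through the symmetric degree-$3$ EPC refutation from \cite{Pago}, then convert to $\symIPSlin$ --- but you diverge at the symmetrisation step, and this is where your proposal is underpowered. You want to check directly that the $\mathrm{PC}\Rightarrow\IPSlin$ translation is $\Gamma$-equivariant and then inline the extension axioms equivariantly. The difficulty is that the notion of \emph{symmetric} EPC (see \cref{def:symEPC}) only guarantees that the \emph{set} $\Ee$ of extension axioms is $\Gamma$-invariant; it says nothing about $\Gamma$ permuting the proof lines themselves. Hence the circuit produced by \cite[Proposition~3.4]{IPS} need not be $\Gamma$-symmetric, and your observation about the orbit structure of $\Ee$ only helps at the inlining stage, not with the certificate $C(\vec x,\vec x_e,\vec y,\vec y_e)$ itself.

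The paper sidesteps this entirely. It regards the EPC refutation as a PC refutation of the $\Gamma$-invariant system $\F\cup\Ee$, translates it to an (a priori asymmetric) $\IPSlin$ certificate in the variables $\vec x,\vec x_e,\vec y,\vec y_e$, and then observes that this certificate still has degree at most~$k$. That is the crucial point: one can now apply \cref{thm:deg_IPS=deg_sym_IPS} over the enlarged variable set to obtain a $\Gamma$-symmetric, $\vec y\vec y_e$-linear, degree-$k$ certificate at polynomial cost --- with no need to analyse the translation's equivariance. Only after this symmetrisation are the $\vec x_e$- and $\vec y_e$-inputs replaced by (constant-degree) subcircuits for the $f_{e_i}$ and $e_i$; the $\Gamma$-invariance of $\Ee$ then ensures these subcircuits can be chosen $\Gamma$-compatibly. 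This is packaged as \cref{thm:symdegEPC<=symIPS} in the appendix. In short: rather than proving the translation is symmetric, the paper exploits bounded degree to symmetrise the output for free.
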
	
Note that the situation in the first statement is equivalent to the non-isomorphic graphs in $\Kk$ being distinguishable by the well-known $k$-dimensional \emph{Weisfeiler Leman} algorithm \cite{sandraSiglog}.
The first part then follows from \cite[Theorem 4.4]{GI}, which states that $k$-Weisfeiler-Leman-distinguishable graphs can also be distinguished in the degree-$k$ PC. This can be $p$-simulated by the degree-$k$ sym-IPS by Corollary \ref{cor:boundedDegPC<=symIPS}.
To prove the second part, we use Theorem 1 from \cite{Pago}. It shows that if all non-isomorphic graphs in $\Kk$ are CPT-distinguishable in the sense of Definition \ref{def:distinguishingInCPT}, then they are also distinguishable in the degree-$3$ \emph{extended polynomial calculus} (EPC), and the refutations have polynomial size. As discussed in the conclusion of \cite{Pago}, this EPC refutation is symmetric in the sense that its extension axioms are closed under the action of $\Aut(G) \times \Aut(H)$.
To conclude the second item of Theorem \ref{thm:summaryGI}, we show that any symmetric bounded-degree EPC refutation can be simulated efficiently in$\symIPSlin{}$. The proof, including the definition of \emph{symmetric EPC}, is deferred to the appendix (Theorem \ref{thm:symdegEPC<=symIPS}).
Another result from \cite{Pago} immediately gives us the following separation between the bounded-degree (symmetric) IPS and its unbounded version.
\begin{theorem}
	\label{thm:separationOfSymBoundedDegree}
	There exists a sequence $(G_n,H_n)_{n \in \bbN}$ of pairs of non-isomorphic graphs such that $\Piso(G_n,H_n)$ has a polynomial-size $\Aut(G) \times \Aut(H)$-symmetric $\IPSlin$-refutation but there is no $k \in \bbN$ such that for all $n \in \bbN$, $\Piso(G_n,H_n)$ has a $\mathrm{deg}_k\text{-}\sIPS$-refutation.
\end{theorem}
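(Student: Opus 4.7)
The plan is to combine the known separation between CPT and FPC on graphs with the two-way bridge between bounded-degree algebraic proof systems and the Weisfeiler-Leman (WL) algorithm that underlies Theorem~\ref{thm:summaryGI}. Concretely, I would take $(G_n,H_n)_{n\in\bbN}$ to be a sequence of non-isomorphic graph pairs witnessing that CPT is strictly more expressive than FPC on graphs, as provided by \cite{dawar2008}. Such a sequence satisfies: (a) CPT distinguishes $G_n$ from $H_n$ within a polynomial resource bound and a constant number of variables in the sense of Definition~\ref{def:distinguishingInCPT}; and (b) for every constant $k$ there are infinitely many $n$ for which the $k$-dimensional WL algorithm fails to distinguish $G_n$ from $H_n$.

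For the upper bound, property~(a) together with Theorem~\ref{thm:summaryGI}(2) immediately yields polynomial-size $\Aut(G_n)\times\Aut(H_n)$-symmetric $\IPSlin$-refutations of $\Piso(G_n,H_n)$, so the positive part of the theorem is essentially free.

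For the lower bound, suppose towards contradiction that there is a fixed $k$ such that every $\Piso(G_n,H_n)$ admits a $\deg_k\text{-}\sIPS$-refutation. Each such refutation is in particular a $\deg_k$-IPS refutation over $\bbQ$. I would now invoke the converse direction of the correspondence used for Theorem~\ref{thm:summaryGI}(1): for $\Piso(G,H)$-instances, the existence of a $\deg_k$-IPS refutation implies that $G$ and $H$ are distinguished by the $k'$-dimensional WL algorithm for some $k'$ depending only on $k$. Choosing $n$ as furnished by~(b) for this $k'$ then contradicts the assumption.

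The main obstacle is precisely this last implication. The forward direction ``$k$-WL distinguishes $\Rightarrow$ $\deg_k$-PC refutation exists'' is the content of \cite[Theorem~4.4]{GI}, and the converse from PC to WL is also part of that equivalence; the genuinely new step is passing from a $\deg_k$-IPS refutation to a $\deg_{k'}$-PC refutation of $\Piso(G,H)$. One clean route is to unfold a semantic-degree-$k$ IPS certificate into a degree-$\Oo(k)$ polynomial calculus derivation via a Nullstellensatz-style expansion of the $\vec y$-linear part, exploiting the fact that the semantic degree bound controls the degree of every subcircuit polynomial. An alternative is to appeal directly to a pebble-game argument showing that the winning strategies certifying $k'$-WL-equivalence of $G$ and $H$ already rule out the existence of any degree-$k$ polynomial certificate for $\Piso(G,H)$, independently of how it is presented as a circuit. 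Either route suffices to close the loop and deliver the separation.
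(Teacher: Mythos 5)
Your proposal is correct and follows essentially the same route as the paper: the upper bound comes from CPT-distinguishability of the (CFI-based) graph pairs via Theorem~\ref{thm:summaryGI}(2), and the lower bound from converting a semantic-degree-$k$ IPS certificate into a bounded-degree PC refutation and contradicting the known PC degree lower bound for these instances. The one point to make explicit in your first route for that conversion is that it relies on the axioms of $\Piso(G,H)$ having \emph{constant} degree $d$: the paper's Lemma~\ref{lem:constantDegreeAllowsSkewisation} writes the degree-$k$ certificate as a sum of polynomially many monomials and substitutes the axioms for all but one $\vec y$-variable in each, which yields a $\vec y$-linear (hence PC-simulable) refutation of degree $dk$ rather than $\Oo(k)$ in general.
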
	
\begin{proof}
	Theorem 3 in \cite{Pago} yields exactly this statement for the symmetric degree-$3$ EPC and the degree-$k$ PC. 
	By Theorem \ref{thm:symdegEPC<=symIPS}, the symmetric degree-$3$ EPC can be $p$-simulated by$\symIPSlin{}$. Now suppose for a contradiction that $\Piso(G_n,H_n)$ had a degree-$k$ $\sIPS$-refutation for some constant $k$, for all $n \in \bbN$. This is in particular a degree-$k$ IPS refutation. 
	Using existing results, it can be shown that on instances of constant degree (which we have here), bounded-degree IPS refutations can be simulated in bounded-degree PC. A proof of this is given in Theorem \ref{thm:simulationsOfAlgebraicSystemsInIPSconstantDegree} in the appendix. This yields a contradiction to \cite[Theorem 3]{Pago}, which states that the bounded-degree PC cannot refute $\Piso(G_n,H_n)$ for all $n \in \bbN$.
\end{proof}	
The graphs $(G_n,H_n)$ that are used as hard instances for the bounded-degree IPS are the well-known \emph{Cai-Fürer-Immerman} (CFI) graphs \cite{GI, caifurimm92}, equipped with a linear order on their base graphs. These are standard examples of graphs that are indistinguishable in bounded-variable counting logic and hence FPC.
The above theorem tells us that these graphs are in fact efficiently distinguishable in$\symIPSlin$. This is because there is a CPT-sentence which can tell $G_n$ and $H_n$ apart, for all $n \in \bbN$ -- that sentence uses a sophisticated ``circuit-like'' construction due to \cite{dawar2008}. Via Theorem \ref{thm:summaryGI}, this translates into a polynomial-size$\symIPSlin{}$-refutation. In the next section, we study a different well-known formulation of the CFI graph isomorphism problem as a system of linear equations over a finite field. We show that also that presentation of the problem admits polynomial-size$\symIPSlin{}$-refutations.


\section{Upper bounds}

\subsection{The Cai-Fürer-Immerman equations}
\label{sec:CFI}

The algebraic formulation of the isomorphism problem of Cai-Fürer-Immerman graphs is the following system of equations over $\bbF_2$ (see e.g.\ \cite{atseriasBulatovDawar}).
\begin{restatable}[CFI equations]{definition}{CFIdefinition}
	\label{def:CFI}
	Let $G = (V,E)$ be a 3-regular undirected connected graph. Let $u \in V(G)$ be some fixed distinguished vertex and let $a \in \{0,1\}$. 
	The variable set is $X = \{x^e_i \mid e \in E, i \in \bbF_2\}$. The equations are
	\begin{align*}
		x_i^{e} + x_j^{f}  + x_k^{g} = i+j+k \mod 2  & & \text{      for every } v \in V \setminus \{u\}, \{e,f,g\} = E(v),\\
		&  & \text{ and every } i,j,k \in \bbF_2\\
		x_i^{e} + x_j^{f}  + x_k^{g} = i+j+k+a \mod 2 & & \text{ for vertex } u, \{e,f,g\} = E(u),\\
		&  & \text{ and every } i,j,k \in \bbF_2\\
		x_0^{e} + x_1^{e} = 1 & & \text{ for every } e \in E
	\end{align*}	
	This, together with the Boolean axioms for every variable, is the linear equation system $\CFI(G, u, a)$ over $\bbF_2$.
\end{restatable}
This system is satisfiable if, and only if, $a = 0$ \cite{atseriasBulatovDawar}. 
The typical symmetries that are associated with CFI graphs (over linearly ordered base graphs) are called ``edge flips''. With regards to the equation system, this means the following. Let $\Gamma$ be the subgroup of the Boolean vector space $(\bbF_2^{E}, \oplus)$ which consists only of those vectors $\pi$ such that $\sum_{e \in E(v)} \pi(e) = 0 \mod 2$ for every $v \in V$. 
The action of this group on $X$ is given by $\pi(x^e_i) = x^e_{i+\pi_e}$, where addition is in $\bbF_2$, and $\pi_e$ denotes the entry of $\pi$ at index $e \in E$. It is easy to check that $\CFI(G, u, a)$ is $\Gamma$-invariant. 
\begin{restatable}{theorem}{CFIupperbound}
	\label{thm:symmetricF2RefutationCFI}
	Let $(G_n)_{n \in \bbN}$ be an arbitrary family of $3$-regular graphs, and let $\Gamma_n$ be the subgroup of $\bbF_2^{E(G_n)}$ defined above.
	For every $n \in \bbN$, there exists a \emph{non-$\vec{y}$-linear} $\Gamma_n$-$\symIPS$-refutation over $\bbF_2$ of the unsatisfiable equation system $\CFI(G_n, u_n, 1)$ (regardless of the choice of $u_n \in V(G_n))$, which has size at most $\poly(|\CFI(G_n, u_n, 1)|)$. 
\end{restatable}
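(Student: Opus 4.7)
The plan is to build a polynomial-size $\Gamma_n$-symmetric IPS certificate over $\bbF_2$, necessarily non-$\vec y$-linear, for the unsatisfiable system $\CFI(G_n, u_n, 1)$. Write $z_e$ for the $y$-variable attached to the edge axiom $x_0^e + x_1^e - 1$, and $y_{v,i,j,k}$ for the $y$-variables of the vertex equations.

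I first extract the underlying asymmetric refutation. The identity $y_{v,i,j,k} \equiv y_{v,0,0,0} + i\, z_e + j\, z_f + k\, z_g \pmod{2}$ (where $E(v) = \{e,f,g\}$) shows that, modulo the edge axioms, every vertex equation at $v$ reduces to the same \emph{essential polynomial} $\tilde f_v := x_0^e + x_0^f + x_0^g + [v=u]$. Summing $\tilde f_v$ over $v \in V$ in $\bbF_2[X]$ yields $1$, since every edge variable is double-counted while only $u$ contributes the constant. Hence $C_0 := \sum_{v \in V} y_{v,0,0,0}$ is a short $\vec y$-linear IPS certificate, but it is not $\Gamma_n$-symmetric.

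Both naive symmetrisations of $C_0$ fail. The orbit sum $\sum_{\pi \in \Gamma_n} \pi(C_0)$ vanishes identically over $\bbF_2$: at each vertex, the $\Gamma_n$-action on the parity class factors through the Klein four-group, so every non-trivial $\Gamma_n$-orbit has even cardinality. The orbit product $\prod_{\pi \in \Gamma_n} \pi(C_0)$ of Theorem~\ref{thm:symIPScomplete} is a valid $\Gamma_n$-symmetric certificate but has size $|\Gamma_n| = 2^{\Theta(|V(G_n)|)}$. To attain polynomial size one must design a $\Gamma_n$-symmetric IPS circuit directly, exploiting multiplicative structure to bypass the parity obstruction. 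I intend to do so by porting the polynomial-size $\Gamma_n$-symmetric bounded-degree extended polynomial calculus (EPC) construction for Cai-F\"urer-Immerman non-isomorphism of~\cite{Pago} from its original $\Piso$ setting into the $\CFI$-equation setting, and then into IPS. The construction in~\cite{Pago} encodes a CPT-style iterative distinguishing procedure through extension axioms whose definition is itself $\Gamma_n$-invariant; inlining these extensions as circuit gadgets turns the EPC proof into a polynomial-size $\Gamma_n$-symmetric IPS circuit. The resulting certificate is non-$\vec y$-linear because the inlined gadgets multiply subexpressions involving several $\vec y$-variables. Verification then reduces to checking $C(\vec x, \vec 0) = 0$ (every monomial carries a $\vec y$- or $\vec z$-factor) and $C(\vec x, \vec f) = 1$ in $\bbF_2[X]$ (the inlined derivation symbolically produces the constant~$1$), both of which follow from the correctness of the EPC derivation.

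The main obstacle is this translation step: porting the construction of~\cite{Pago} from the $\Piso$-encoding to the $\CFI$-equations over $\bbF_2$, while maintaining $\Gamma_n$-symmetry and polynomial size throughout. A subsidiary concern is characteristic: the simulation results of~\cite{Pago} ultimately sit over a field of characteristic~$0$, whereas here the ambient field is $\bbF_2$, the very place where the parity obstruction bites and non-linearity becomes unavoidable; the inlining must therefore be performed so as to respect the $\bbF_2$-semantics of IPS. As a back-up plan, if the direct EPC-to-IPS translation proves too delicate, one can instead reproduce inside symmetric IPS the $\Gamma_n$-invariant iterative construction from the CPT-distinguishing algorithm for CFI graphs of~\cite{dawar2008}, encoding its nested definable sets as polynomial-size circuit gadgets and reading off the refutation as a $\Gamma_n$-symmetric certificate.
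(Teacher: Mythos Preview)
Your primary plan—porting the symmetric degree-$3$ EPC refutation of~\cite{Pago} from $\Piso$ over $\bbQ$ to the $\CFI$ equations over $\bbF_2$ and then inlining the extension axioms into an IPS circuit—does not go through as described. The sym-EPC-to-$\symIPSlin$ simulation (Theorem~\ref{thm:symdegEPC<=symIPS}) requires the field characteristic to be $0$ or coprime with $|\Gamma|$; here $|\Gamma_n|$ is a power of $2$ while $\charact(\bbF_2)=2$. You acknowledge this but offer no fix, and the averaging that makes the inlined proof symmetric is exactly what fails in characteristic $2$—for the same reason your orbit sum of $C_0$ vanishes. Moreover, the EPC proof in~\cite{Pago} is for the $\Piso$ instance over $\bbQ$, not for the $\CFI$ linear system over $\bbF_2$, and you supply no translation between these encodings.

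Your back-up plan is the right one and is essentially what the paper carries out, but you have not executed it, and the execution is the entire content of the theorem. The paper realises the $\mu/\tau$ objects of~\cite{dawar2008} directly as $\Gamma$-invariant polynomials over $\bbF_2$. For each vertex $v$ with $E(v)=\{e,f,g\}$ it builds constant-size $\Gamma$-symmetric IPS circuits computing $\tau(v)$ and $\widetilde\tau(v)$ (sums of the degree-$3$ monomials $x^e_i x^f_j x^g_k$ over one parity class of $(i,j,k)$), by \emph{multiplying together the four vertex-axiom $y$-variables} of that parity class and then cleaning up with Boolean and edge axioms; this product of $y$-variables is where non-$\vec y$-linearity enters. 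Along a fixed vertex order $v_1,\dots,v_n$ one then propagates a pair $(\mu_i,\widetilde\mu_i)$ via the $\Gamma$-symmetric recursion $C_i=\mu_{i-1}\cdot\tau(v_i)+\mu_{i-1}+\tau(v_i)+\widetilde\mu_{i-1}\cdot\widetilde\tau(v_i)$ (and its dual), where $\mu_i$ is $1$ plus the sum over edge-assignments satisfying an even number of Tseitin equations on $\{v_1,\dots,v_i\}$. The genuinely non-trivial step missing from your proposal is the combinatorial verification that every ``conflict'' monomial (one containing both $x^e_0$ and $x^e_1$ for some $e$) produced in these products occurs an even number of times and hence cancels over $\bbF_2$; the paper shows a conflict monomial with $k$ conflicting edges is produced exactly $2^k$ times. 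A final parity argument on the Tseitin system yields $\mu_n=1$, so the circuit for $\mu_n$ is the desired polynomial-size $\Gamma$-symmetric refutation.
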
	
In particular, the theorem is true if $(G_n)_{n \in \bbN}$ is a family of unbounded treewidth. Such families of graphs are the ones for which the equation systems $\CFI(G_n, u_n, 1)$ are not distinguishable from their satisfiable counterparts $\CFI(G_n, u_n, 0)$ in bounded-variable counting logic \cite{atseriasBulatovDawar}.
This also means that $\CFI(G_n, u_n, 1)$ has no bounded-degree PC refutation: The existence of such a refutation is definable in bounded-variable counting logic \cite{ggpp}, and hence, bounded-variable counting logic would be able to distinguish $\CFI(G_n, u_n, 1)$ from $\CFI(G_n, u_n, 0)$ if bounded-degree PC could refute $\CFI(G_n, u_n, 1)$. Thus, this theorem provides another example for the separation of the bounded-degree PC/IPS from the unbounded-degree version.

The construction of the refutation is quite involved, so we have to defer the proof of Theorem \ref{thm:symmetricF2RefutationCFI} to the appendix. We just remark that the circuit for the IPS certificate is deeply nested and computes polynomials of linear degree, so it is challenging to ensure that each gate only has a small number of automorphic images under the action of $\Gamma$. In fact, we have not been able to accomplish this with a $\vec{y}$-linear certificate, so the smallest possible$\symIPSlin{}$-refutation we know is exponential:
\begin{restatable}{theorem}{CFIupperboundLinear}
	\label{thm:symHilbertLikeCFI}
	Let the setting be as in Theorem \ref{thm:symmetricF2RefutationCFI}.
	For every $n \in \bbN$, there exists a $\Gamma_n$-$\symIPSlin$-refutation over $\bbF_2$ of the unsatisfiable equation system $\CFI(\Gg_n, u_n, 1)$, which has size at most $\Oo(2^{|E_n|})$.
\end{restatable}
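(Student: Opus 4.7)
The plan is to build a $\Gamma_n$-symmetric $\vec y$-linear certificate as a weighted sum, indexed by all edge-labelings $\tau\colon E \to \bbF_2$, of the standard linear refutation of the CFI equations, followed by a correction term using the edge axioms. The weights are the monomials $I_\tau(\vec x) := \prod_{e \in E} x_{\tau(e)}^e$, which (together with the $y$-variables) transform consistently under the translation action $\pi\cdot\tau = \tau+\pi$ of $\Gamma_n$ on labelings, so that the whole sum is symmetric gate by gate.

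First I would verify that, for each individual $\tau$, the polynomial $C_\tau(\vec x, \vec y) := \sum_{v \in V} y_v^{\tau|_{E(v)}}$ is a valid $\vec y$-linear certificate: evaluating at $\vec y = \vec f$ yields $\sum_v f_v^{\tau|_{E(v)}}$, and over $\bbF_2$ each variable $x_{\tau(e)}^e$ appears at both endpoints of $e$ and cancels, as do the constants $\tau(e)$, leaving only the single term $-[v=u] = 1$. Combining these, one sets
\[
	C'(\vec x, \vec y) \;:=\; \sum_{\tau \colon E \to \bbF_2} I_\tau(\vec x) \cdot C_\tau(\vec x, \vec y).
\]
Because $\pi(I_\tau) = I_{\tau+\pi}$ and $\pi(C_\tau) = C_{\tau+\pi}$, the circuit realising $C'$ is $\Gamma_n$-symmetric, with the gate-level automorphism extending $\pi$ permuting the $2^{|E|}$ subcircuits $I_\tau \cdot C_\tau$ by translation. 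Evaluating the substitution gives $C'(\vec x, \vec f) = \sum_\tau I_\tau = \prod_{e \in E}(x_0^e + x_1^e)$, which is not yet $1$.

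To correct the discrepancy I would write $a_e := x_0^e + x_1^e = 1 + f_e^{\mathrm{edge}}$ and telescope:
\[
	\prod_e a_e - 1 \;=\; \sum_{i=1}^{|E|} f_{e_i}^{\mathrm{edge}} \cdot \prod_{j<i} a_{e_j},
\]
so the $\vec y$-linear polynomial $D(\vec x, \vec y) := \sum_i y_{e_i}^{\mathrm{edge}} \cdot \prod_{j<i}(x_0^{e_j} + x_1^{e_j})$ satisfies $D(\vec x, \vec f) = \prod_e a_e - 1$. Since $\Gamma_n$ fixes every edge pointwise, each sum gate $s_e = x_0^e + x_1^e$ is $\Gamma_n$-invariant and each $y_e^{\mathrm{edge}}$ is $\Gamma_n$-fixed; thus $D$ is $\Gamma_n$-symmetric irrespective of the edge ordering used in the telescope. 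The final certificate $C := C' - D$ is $\vec y$-linear and $\Gamma_n$-symmetric, with $C(\vec x, \vec 0) = 0$ immediate and $C(\vec x, \vec f) = \prod_e a_e - (\prod_e a_e - 1) = 1$.

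The dominant contribution to the size of $C$ comes from $C'$, which uses one product subcircuit per labeling, giving the claimed $\Oo(2^{|E|})$ bound; $D$ contributes only $\Oo(|E|)$ further gates. The main obstacle is that the obvious symmetrisation---summing $C_\tau$ over any $\Gamma_n$-orbit of labelings---cancels over $\bbF_2$, because every orbit has even size; the monomial weights $I_\tau$ are introduced precisely to break this cancellation in the constant-term part, and the resulting extra factor $\prod_e(x_0^e + x_1^e)$ is then brought back to $1$ through the edge-axiom correction $D$, which is cheap and symmetric because $\Gamma_n$ acts trivially on the edges themselves.
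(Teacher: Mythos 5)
Your construction is correct, and it reaches the same two intermediate quantities as the paper's proof but by a genuinely different decomposition of the main term. Your correction polynomial $D$ is, up to sign, identical to the paper's telescoping polynomial $B(\vec x,\vec y)$ built from the edge axioms, which satisfies $B(\vec x,\CFI(G,u,1)) = 1 + \prod_{e}(x_0^e+x_1^e)$. Where you diverge is in producing the companion term $\sum_{\tau}\prod_e x^e_{\tau(e)} = \prod_e(x_0^e+x_1^e)$ from the vertex axioms: the paper partitions the $2^{|E|}$ edge-labelings into $\Gamma$-orbits, groups them by the least vertex whose Tseitin equation they satisfy (which requires a separate lemma that no labeling violates all Tseitin equations, using $3$-regularity and the parity of $|V|$), and then builds $\Gamma$-symmetric circuits $C_i$ producing each group from the vertex and Boolean axioms. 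You instead observe that for each fixed labeling $\tau$ the plain sum $C_\tau(\vec x,\vec y)=\sum_v y_v^{\tau|_{E(v)}}$ is already the trivial rank-argument certificate with $C_\tau(\vec x,\vec f)=1$ identically, and that the obstruction to symmetrising it---even orbit size over $\bbF_2$---is exactly neutralised by tagging each copy with the monomial $I_\tau$, since $\pi$ permutes the pairs $(I_\tau, C_\tau)$ freely. This buys a noticeably shorter argument: it needs no orbit analysis, no Boolean axioms, and no appeal to the emptiness of $\Omega_\emptyset$, while giving the same $\Oo(2^{|E|})$ gate count (one unbounded fan-in product and sum per labeling) and the same $\vec y$-linearity and $\Gamma_n$-symmetry. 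The paper's finer orbit-by-orbit bookkeeping buys nothing extra for this theorem, though it exposes more precisely where the exponential cost arises (computing the orbit sums $A^{(j,k,\ell)}_i$), which is the point the authors use to motivate the conjectured separation from the non-linear symmetric refutation.
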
	
It is an interesting open question whether the CFI equations provide an exponential separation between$\symIPS{}$ and$\symIPSlin{}$, that is, whether the upper bound in the above theorem can be improved or not. As discussed in the previous section, the graph isomorphism formulation of the CFI equations does admit a small$\symIPSlin{}$-refutation but it may well be that this is not the case for $\CFI(G_n, u_n, 1)$.

\subsection{Subset sum}

\begin{definition}[Subset sum, \cite{lowerbounds_IPS}]
	Let $n \in \bbN$, let $\bbF$ be a field with $\text{char}(\bbF) > n$, and let $\beta \in \bbF \setminus \{0,...,n\}$. The \emph{subset sum} instance $\Ff_{\text{sum}(\vec x)}(n,\bbF,\beta)$ has variable set
	$X = \{x_1,...,x_n\}$, and the axiom
	$
	\sum_{i=1}^n x_i - \beta = 0,
	$
	along with the Boolean axioms $x_i^2 - x_i = 0$ for all $i \in [n]$.
	The \emph{lifted subset sum} instance $\Ff_{\text{sum}(\vec x\vec y)}(n,\bbF,\beta)$ has variable set $X \cup \{y_1,...,y_n\}$ and the axiom 	$
	\sum_{i=1}^n x_iy_i- \beta = 0,
	$
	along with the Boolean axioms for all variables in $X \cup Y$.
\end{definition}	
It is clear that $\Ff_{\text{sum}(\vec x)}(n,\bbF,\beta)$ and $\Ff_{\text{sum}(\vec x \vec y)}(n,\bbF,\beta)$ are unsatisfiable for any choice of $n, \bbF, \beta$ as required in the definition. Moreover, $\Ff_{\text{sum}(\vec x)}(n,\bbF,\beta)$ is $\Sym_n$-symmetric with respect to the obvious action on $X$, and $\Ff_{\text{sum}(\vec x\vec y)}(n,\bbF,\beta)$ is $\Sym_n$-symmetric with respect to the simultaneous action on $X \cup Y$.
It is proven in \cite[Proposition 5.3]{lowerbounds_IPS} that $\Ff_{\text{sum}(\vec x)}(n,\bbF,\beta)$ has no $\deg_k\text{-}\IPSlin$ refutation for any $k < n$.
Moreover, \cite{lowerbounds_IPS} shows $\Ff_{\text{sum}(\vec x)}(n,\bbF,\beta)$ to be hard for \emph{sparse IPS} (where circuits are just allowed to be sums of monomials) and $\Ff_{\text{sum}(\vec x \vec y)}(n,\bbF,\beta)$ to be hard for roABPs with a fixed variable order, and so-called depth-3 powering formulas.


Subset sum and its liftings are thus a natural starting point in the quest for sym-IPS lower bounds, especially because the variable set is ``maximally symmetric'' and so it might be expected that the size of any symmetric refutation must be large. However, it turns out that at least for the two subset sum variants we study here, polynomial-size symmetric refutations do in fact exist. 
There are more complex liftings of the subset axiom that have been used for lower bounds against stronger fragments of IPS such as bounded product-depth circuits
\cite{govindasamy2022simple, hakoniemi2024functional}, and these may be promising candidates for sym-IPS lower bounds, too.

\begin{restatable}{theorem}{subsetsumUpperBound}
	\label{thm:symRefutationSubsetSum}
	The polynomial equation system $\Ff_{\text{sum}(\vec x)}(n,\bbF,\beta)$ has a $\Sym_n\text{-}\symIPSlin{}$-refutation of size at most $\poly(| \Ff_{\text{sum}}(n,\bbF,\beta) |)$, for all $n, \bbF, \beta$ such that the system is unsatisfiable.
	The same is true for $\Ff_{\text{sum}(\vec x \vec y)}(n,\bbF,\beta)$.
	 Moreover, there is no constant $k \in \bbN$ such that $\deg_k\text{-}\IPS$ can refute $\Ff_{\text{sum}(\vec x)}(n,\bbF,\beta)$ and $\Ff_{\text{sum}(\vec x \vec y)}(n,\bbF,\beta)$ for all $n \in \bbN$.
\end{restatable}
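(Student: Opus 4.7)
The plan for the first upper bound is to build an explicit $\Sym_n$-symmetric $\vec y$-linear IPS certificate from two classical ingredients. Let $S(\vec x) := \sum_{i=1}^n x_i$ and $P(z) := \prod_{j=0}^n (z-j)$. First, since $\beta \notin \{0,\ldots,n\}$ and $\charact(\bbF) > n$, the constant $c := P(\beta)$ is nonzero, and univariate polynomial division yields $P(z) = (z - \beta) R(z) + c$ for an explicit $R$ of degree $n$. Second, because $S$ takes values only in $\{0, \ldots, n\}$ on $\{0,1\}^n$, the polynomial $P(S)$ vanishes there and hence lies in the ideal $\langle x_1^2 - x_1, \ldots, x_n^2 - x_n\rangle$; so $P(S) = \sum_i g_i(\vec x)(x_i^2 - x_i)$ for some $g_i \in \bbF[\vec x]$. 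Substituting then gives the $\vec y$-linear identity
\[
1 \;=\; c^{-1} \sum_{i=1}^n g_i(\vec x)(x_i^2 - x_i) \;-\; c^{-1} R(S)(S - \beta),
\]
which is directly an IPS certificate with $y_0 \mapsto (S - \beta)$ and $y_i \mapsto (x_i^2 - x_i)$.

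The key technical work is to arrange for $\Sym_n$-symmetry with a polynomial-size circuit. The coefficient $-c^{-1} R(S)$ of $y_0$ is automatically $\Sym_n$-invariant and computed by a symmetric size-$O(n)$ circuit (compute $S$, then iterate the multiplications). For the $g_i$'s, the plan is to exploit the peeling identity
\[
P(S_{-i} + x_i) - P(S_{-i}) - x_i\bigl(P(S_{-i}+1) - P(S_{-i})\bigr) \;=\; (x_i^2 - x_i)\,A(x_i, S_{-i}),
\]
where $S_{-i} := S - x_i$ and $A$ is the Taylor correction (its coefficients, as a polynomial in $x_i$, are finite sums of $P^{(k)}(S_{-i})/k!$). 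Since both $P(S_{-i})$ and $P(S_{-i}+1)$ again vanish on $\{0,1\}^{n-1}$, symmetrically averaging this identity over $i$ and recursing on $n$ (applied to the residual symmetric polynomial $\sum_i[(1-x_i)P(S_{-i}) + x_i P(S_{-i}+1)]$, which is again in the Boolean ideal) will show that each $g_i$ can be taken of the form $\Phi_n(x_i, S_{-i})$ for a single bivariate polynomial $\Phi_n$ with a $\poly(n)$-size circuit. The final certificate is then $\sum_i c^{-1}\Phi_n(x_i, S_{-i}) y_i - c^{-1} R(S) y_0$, computed by $n$ permuted copies of a subcircuit for $\Phi_n$ sharing the single symmetric computation of $S$, so the total size is $\poly(n)$.

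The lifted variant $\Ff_{\text{sum}(\vec x \vec y)}$ is handled by the same strategy with $S$ replaced by the bilinear form $T := \sum_i x_i y_i$. Since $x_i y_i \in \{0,1\}$ whenever both factors are Boolean, $P(T)$ vanishes on $\{0,1\}^{2n}$ and lies in $\langle x_i^2 - x_i, y_i^2 - y_i \rangle$, and the peeling and induction proceed with the diagonal $\Sym_n$-action on the pairs $(x_i, y_i)$. For the non-existence of constant-degree IPS refutations, I would combine two known results: the subset sum instances have (constant) degree $2$, so by Theorem~\ref{thm:simulationsOfAlgebraicSystemsInIPSconstantDegree} any $\deg_k$-IPS refutation simulates to a $\deg_{O(k)}$-PC refutation; on the other hand, \cite[Proposition~5.3]{lowerbounds_IPS} rules out $\deg_k\text{-}\IPSlin$ refutations of subset sum for $k < n$, which together with the degree-preserving translation $\deg_k\text{-PC} \leq_p \deg_k\text{-}\IPSlin$ of \cite[Prop.~3.4]{IPS} yields a matching PC lower bound and hence the desired IPS lower bound for $n$ sufficiently large.

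The main obstacle is the symmetric-circuit construction in the second paragraph: a naive averaging over $\Sym_n$ is symmetric but exponentially large, and individual asymmetric $g_i$'s need not fuse into a polynomially small symmetric DAG. The crux is to show that each $g_i$ collapses to the form $\Phi_n(x_i, S_{-i})$, with $\Phi_n$ expressible through symmetric building blocks (such as $S$ and the univariate $P$ and its finite differences) so that the full sum sits in a $\Sym_n$-orbit of polynomial size; proving this recursive structure, and keeping track of it through the lifted version, is the technical heart of the upper bound.
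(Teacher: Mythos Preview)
Your overall shape for the upper bound---produce an explicit $\vec y$-linear certificate $-c^{-1}R(S)\cdot y_0 + c^{-1}\sum_i g_i(\vec x)\cdot y_i$ and exhibit a $\Sym_n$-symmetric polynomial-size circuit for it---is right, and your lower-bound argument via Theorem~\ref{thm:simulationsOfAlgebraicSystemsInIPSconstantDegree} together with \cite[Proposition~5.3]{lowerbounds_IPS} is sound. (The paper argues the lower bound more directly: by \cite[Proposition~B.1]{lowerbounds_IPS} the multilinear cofactor of the main axiom is \emph{unique} and has degree~$n$, so no constant-degree refutation can exist; for the lifted version it cites \cite[Corollary~5.9]{lowerbounds_IPS}.)

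Where you diverge from the paper, and where the real gap lies, is in the construction of the Boolean cofactors $g_i$. The paper does \emph{not} try to realise them as bivariate polynomials in $(x_i, S_{-i})$. Instead it lifts the explicit formula $f(\vec x) = -\sum_{k=0}^n \frac{k!}{\prod_{j=0}^k(\beta-j)}\,S_{n,k}$ from \cite[Proposition~B.1]{lowerbounds_IPS}, where $S_{n,k}$ is the $k$-th \emph{elementary} symmetric polynomial, and verifies by a direct calculation that the Boolean cofactors are $p_i = -\sum_{k=1}^n \frac{k!}{\prod_{j=0}^k(\beta-j)}\,S_{n\setminus i,\,k-1}$. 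The symmetric polynomial-size circuit is then a one-line citation: each $S_{n,k}$ (hence each $S_{n\setminus i,k}$, which is just $S_{n-1,k}$ after renaming) has a $\Sym_n$- (resp.\ $\Stab(i)$-) symmetric circuit of size $\poly(n)$ by Shpilka--Wigderson~\cite{shpilka2001depth}. The lifted instance is handled by substituting $x_i \mapsto x_iy_i$ and rewriting $(x_i^2y_i^2 - x_iy_i)$ linearly in the Boolean axioms for $x_i$ and $y_i$.

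Your proposed recursion, by contrast, does not close. The peeling identity gives $P(S) = (1-x_i)P(S_{-i}) + x_iP(S_{-i}+1) + (x_i^2-x_i)A(x_i,S_{-i})$; averaging over $i$ leaves the residual $\tfrac{1}{n}\sum_i\bigl[(1-x_i)P(S_{-i})+x_iP(S_{-i}+1)\bigr]$, which still has all $n$ variables and is \emph{not} of the form $Q(S)$ for a univariate $Q$ (already for $n=2$ and $P(z)=z^2$ the residual involves $x_1x_2$, which is not a polynomial in $S$). So ``recursing on $n$'' has no clear meaning here, and sequential peeling instead of averaging makes $g_{n-1}$ depend on $x_n$, destroying the hoped-for form $\Phi(x_i,S_{-i})$. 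In short, the crux you identify is a genuine missing step, and the paper avoids it entirely by working with elementary symmetric polynomials rather than with the power sum $S$.
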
	
The full proof of the theorem can be found in the appendix. In short, we show that the refutation given in Proposition B.1 in the appendix of \cite{lowerbounds_IPS} can be realised with polynomial-size symmetric circuits. The key ingredient are the elementary symmetric polynomials $S_{n,k} = \sum_{\stackrel{S \subseteq [n]}{|S| = k}} \prod_{i \in S} x_i$. They admit efficient symmetric circuits by \cite{shpilka2001depth}.

\subsection{Pigeonhole principle}

\begin{definition}
	For $n,m \in \bbN$, the $n$-to-$m$ \emph{pigeonhole principle} is the polynomial equation system $\PHP(n,m)$. Its variable set is $X = \{x_{ij} \mid i \in [n], j \in [m]\}$ and its equations are the Boolean axioms together with:
	\begin{align*}
		\sum_{j \in [m]} x_{ij} - 1 = 0&\text{    for every } i \in [n]\\
		x_{ij}x_{i'j} = 0& \text{    for every } j \in [m], i \neq i' \in [n]
	\end{align*}
\end{definition}	
Any $\{0,1\}$-valued solution to this system gives an injective function from $[n]$ to $[m]$ that maps $i$ to $j$ if $x_{ij} = 1$.  The equation $\sum_{j \in [m]} x_{ij} - 1 = 0$ guarantees that each value $i$ is mapped to exactly one $j$ and $x_{ij}x_{i'j}  = 0$ ensures that distinct $i$ and $i'$ are not mapped to the same value.
Whenever $n > m$, $\PHP(n,m)$ is thus unsatisfiable. This is the case over $\bbQ$, but also over every finite field.
It is easy to see that $\PHP(n,m)$ is invariant under $\Sym_n \times \Sym_m$, where the group action on $X$ is: $(\pi,\sigma)(x_{ij}) = x_{\pi(i)\sigma(j)}$. In this section, we focus on the pigeonhole principle $\PHP(n+1,n)$. It can be checked that over finite fields, $\PHP(n+1,n)$ does not admit a symmetric $\vec{y}$-linear refutation for all $n$ (analogous to Example \ref{ex:incompletenessOfSymHilbert}). Therefore, we only consider $\PHP(n+1,n)$ over $\bbQ$.

With no symmetry restriction in place, it is -- not surprisingly -- possible to refute $\PHP(n+1,n)$ efficiently in the IPS.  Indeed, the IPS $p$-simulates any Frege proof system~\cite{IPS}, and the pigeonhole principle, formulated in propositional logic has a polynomial-size Frege proof \cite{buss1987polynomial}. The proof constructed in \cite{buss1987polynomial}, however, proceeds along a linear order on $[n]$. Thus, a naive symmetrisation of it would require size $\Oo(n!)$.
We show that we can do much better than that, although we do not obtain a symmetric refutation of subexponential size. It is plausible that this is impossible, and we leave the precise complexity of symmetrically refuting the pigeonhole principle as an intriguing open problem.
\begin{restatable}{theorem}{PHPupperBound}
	\label{thm:upperboundSymPHP}
	There is a $(\Sym_{n+1} \times \Sym_n)\text{-}\symIPSlin$ refutation of $\PHP(n+1,n)$ of size $\Oo(3^n \cdot n)$ over the field $\bbQ$.
\end{restatable}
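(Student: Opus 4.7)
The plan is to construct an explicit $\vec{y}$-linear refutation from a ``falling factorial'' counting contradiction. Set $u_j := \sum_i x_{ij}$ and $U := \sum_j u_j = \sum_{ij} x_{ij}$. Modulo the row axioms $P_i = \sum_j x_{ij} - 1$, one has $U \equiv n+1$, so $U^{\underline{n+1}} := U(U-1)\cdots(U-n)$ reduces to $(n+1)!$; writing $U = (n+1) - P_{\mathrm{sum}}$ with $P_{\mathrm{sum}} := \sum_i P_i$ and expanding gives an explicit degree-$n$ polynomial $Z$ with $U^{\underline{n+1}} = (n+1)! + P_{\mathrm{sum}}\,Z(P_{\mathrm{sum}})$. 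On the other side, the Boolean and disjointness axioms yield the polynomial identity $u_j^2 - u_j = \sum_i B_{ij} + 2\sum_{i<i'} Q_{j,i,i'}$, and modulo $\{u_j^2 - u_j\}_{j\in[n]}$ each $u_j$ is $\{0,1\}$-valued, forcing $U^{\underline{n+1}} \equiv (n+1)!\cdot e_{n+1}(u_1,\ldots,u_n) = 0$. Concretely, telescoping the identity
\[
k!\,e_k(\vec u)(U-k) \;=\; (k+1)!\,e_{k+1}(\vec u) \;+\; k!\sum_j (u_j^2 - u_j)\,e_{k-1}(\vec u_{-j})
\]
from $k=0$ to $n$ produces $U^{\underline{n+1}} = \sum_j (u_j^2 - u_j)\,Q_j$ with $Q_j := \sum_{k=1}^n k!\,e_{k-1}(\vec u_{-j})\prod_{k'=k+1}^n (U - k')$. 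Equating the two expressions for $U^{\underline{n+1}}$, substituting the axiom representation of $u_j^2 - u_j$, and dividing by $(n+1)!$ (valid over $\bbQ$) yields the $\vec{y}$-linear certificate
\[
C(\vec x, \vec y) \;=\; \frac{1}{(n+1)!}\Bigl[\sum_{i,j} y^B_{ij}\,Q_j \;+\; 2\sum_{j,\,i<i'} y^Q_{j,i,i'}\,Q_j \;-\; \sum_i y^P_i\,Z(P_{\mathrm{sum}})\Bigr].
\]

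Invariance under $\Sym_{n+1}\times\Sym_n$ is then automatic: $U$, $P_{\mathrm{sum}}$, and $Z(P_{\mathrm{sum}})$ are fully symmetric; each $Q_j$ is $\Sym_{n+1}$-invariant (being a polynomial in the $\Sym_{n+1}$-invariant $\vec u$) and $\Sym_n$ permutes the family $\{Q_j\}_j$ by $\sigma\cdot Q_j = Q_{\sigma(j)}$, matching the actions on the corresponding $y$-variables. To realise $C$ as a \emph{symmetric} algebraic circuit, we use the $\Sym_n$-symmetric $O(n^2)$-size circuits for the elementary symmetric polynomials $e_k(\vec u)$ from \cite{shpilka2001depth} (already exploited in the proof of Theorem~\ref{thm:symRefutationSubsetSum}) together with the $\Sym_n$-equivariant identity $e_k(\vec u_{-j}) = \sum_{\ell=0}^k (-u_j)^\ell e_{k-\ell}(\vec u)$ for the leave-one-out variants, sharing a common pool of $\vec u$-gates. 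A direct expansion of each $Q_j$ as a linear combination of monomials in $\vec u$, one per pair consisting of a subset of $[n]\setminus\{j\}$ (from the $e_{k-1}(\vec u_{-j})$ factor) together with a suffix-product contribution (from $\prod_{k'>k}(U-k')$), gives per-orbit size $\Oo(3^n)$; summing over the $n$-element $\Sym_n$-orbit of holes contributes the stated factor $n$.

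The main obstacle is the interplay between $\vec y$-linearity and DAG-symmetry: since the circuit must be symmetric as a DAG, the $n$ copies of the $Q_j$-subcircuit have to be realised as genuinely distinct but isomorphic copies permuted by $\Sym_n$, rather than collapsed via further sharing. The falling-factorial identity above is the key ingredient, in that it provides a naturally symmetric substitute for the non-symmetric induction on $[n]$ used in Buss's Frege proof of the pigeonhole principle, avoiding the naive $\Oo(n!)$ blowup and allowing the size accounting through the elementary symmetric polynomials and suffix products to close at $\Oo(3^n\cdot n)$.
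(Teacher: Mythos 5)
Your construction is correct, but it is a genuinely different proof from the one in the paper. The paper's refutation is built from the polynomials $B_D(\vec x)=\sum_{\gamma\colon D\hookrightarrow[n]}\prod_{i\in D}x_{i\gamma(i)}$, i.e.\ sums of permanents of submatrices, computed by a recursive inclusion--exclusion circuit over all $D\subseteq[n+1]$; the enumeration of subsets is what forces the $\Oo(3^n\cdot n)$ bound, and the proximity to the permanent is exactly why the authors suspect an exponential lower bound. You instead run the classical counting argument entirely through the column sums $u_j=\sum_i x_{ij}$: the telescoping identity $k!\,e_k(\vec u)(U-k)=(k+1)!\,e_{k+1}(\vec u)+k!\sum_j(u_j^2-u_j)e_{k-1}(\vec u_{-j})$ is a correct polynomial identity (I checked it, and the resulting certificate, for $n=1,2$), $e_{n+1}$ of $n$ variables vanishes, $u_j^2-u_j$ lies in the span of the Boolean and hole axioms, and $U(U-1)\cdots(U-n)-(n+1)!$ lies in the ideal of the row axioms; equating the two expressions and dividing by $(n+1)!$ gives a valid $\vec y$-linear certificate that is $(\Sym_{n+1}\times\Sym_n)$-invariant, with each $Q_j$ fixed by $\Sym_{n+1}$ and the family $\{Q_j\}_j$ permuted consistently with the $y$-variables. (Minor slip: $U=(n+1)+P_{\mathrm{sum}}$, not $(n+1)-P_{\mathrm{sum}}$; this changes nothing.)

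The one place you should look again is your size accounting, because it is internally inconsistent in a way that matters. If you realise the $e_k(\vec u)$ by the Ben--Or/interpolation circuits (as the paper itself does for subset sum, and which are $\Sym_n$-symmetric on the orbit of $u_j$-gates and hence lift to $(\Sym_{n+1}\times\Sym_n)$-symmetric circuits on the $x_{ij}$), and the $e_k(\vec u_{-j})$ by the leave-one-out identity you quote, then every $Q_j$ has a $\Stab(j)$-symmetric circuit of size $\poly(n)$, and the whole refutation is of \emph{polynomial} size --- the fallback ``direct expansion into monomials in $\vec u$'' that produces your $\Oo(3^n)$ figure is never needed. So either your proof establishes a bound far stronger than the theorem (and essentially settles the paper's open question on the symmetric complexity of $\PHP(n+1,n)$ in the negative), or there is a subtle obstruction to the symmetric DAG realisation of the shared $e_k(\vec u)$/$e_k(\vec u_{-j})$ gates that neither your write-up nor my reading has surfaced. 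You should pin down which; as it stands, the $\Oo(3^n\cdot n)$ in your conclusion is not what your own construction yields.
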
	
The key part of the refutation is to compute, for every subset $D \subseteq [n+1]$ the sum over all monomials that encode injections from the pigeons in $D$ to the holes. Formally, let
$
B_D(\vec x) \coloneqq \sum_{\gamma \colon D \hookrightarrow [n]} \prod_{i \in D} x_{i\gamma(i)}.
$
The polymnomials $B_D(\vec x)$ can also be viewed as the sums of certain \emph{permanents}. The permanent of an $n \times n$-matrix is the polynomial $\sum_{\pi \in \Sym_n} \prod_{i \in [n]} x_{i\pi(i)}$. The polynomial $B_D(\vec x)$ is the sum over the permanents of all $|D| \times |D|$-submatrices of $D \times [n]$. 
This hints at the potential hardness of symmetric refutations for the PHP because we know that the permanent admits no symmetric circuit representation of subexponential size \cite{DawarW20}. 
In the proof of Theorem \ref{thm:upperboundSymPHP} in the appendix, we construct an $\Oo(3^n)$-size symmetric circuit for computing the $B_D$ for all $D \subseteq [n+1]$ and show how this yields a refutation for $\PHP(n+1,n)$.

\section{Conclusion and future work}

This article initiates the study of how symmetry in IPS proofs affects their complexity. We identify the following promising directions for future research:
Firstly, we would like to obtain matching lower bounds for the exponential upper bounds we have established -- that is, for \emph{linear} symmetric IPS refutations of the CFI equations (which would show an exponential gap between linear and non-linear refutations), and for the pigeonhole principle. 
Secondly, the question in how far the functional lower bound method \cite{lowerbounds_IPS, govindasamy2022simple, hakoniemi2024functional} can be combined with our framework deserves a deeper investigation. We have shown that the subset sum axiom and one of its liftings do admit small symmetric proofs but this does not rule out a lower bound via more complex subset sum variants such as in \cite{govindasamy2022simple, hakoniemi2024functional}. 
Also, it is worth studying if the symmetry restriction we consider here, combined with the functional lower bound method from \cite{govindasamy2022simple, hakoniemi2024functional} can extend the scope of that technique. So far, the functional lower bound method has only been applied to instances with a \emph{single axiom}, and it provably fails on encodings of Boolean formulas \cite{hakoniemi2024functional}; can these limitations of the functional lower bound method be overcome by restricting to symmetric refutations?
Finally, we ask if the connection between IPS lower bounds for \emph{Boolean CNFs} and the separation of $\VP$ and $\VNP$ established by Grochow and Pitassi \cite{IPS} also holds in a symmetric sense. 
That is, for a suitably defined symmetric analogue of $\VNP$, is it true that every unsatisfiable CNF has an IPS refutation in symmetric $\VNP$?

\bibliographystyle{plainurl}
\bibliography{ref.bib}

\begin{thebibliography}{10}

\bibitem{atseriasBulatovDawar}
Albert Atserias, Andrei Bulatov, and Anuj Dawar.
\newblock Affine systems of equations and counting infinitary logic.
\newblock {\em Theoretical Computer Science}, 410(18):1666--1683, 2009.

\bibitem{pitassi}
Paul Beame and Toniann Pitassi.
\newblock Propositional proof complexity: Past, present, and future.
\newblock {\em Current Trends in Theoretical Computer Science Entering the 21st
  Century}, pages 42--70, 2001.

\bibitem{GI}
Christoph Berkholz and Martin Grohe.
\newblock Limitations of algebraic approaches to graph isomorphism testing.
\newblock In Magn{\'u}s~M. Halld{\'o}rsson, Kazuo Iwama, Naoki Kobayashi, and
  Bettina Speckmann, editors, {\em Automata, Languages, and Programming}, pages
  155--166, Berlin, Heidelberg, 2015. Springer Berlin Heidelberg.

\bibitem{CPT}
Andreas Blass, Yuri Gurevich, and Saharon Shelah.
\newblock Choiceless polynomial time.
\newblock {\em Annals of Pure and Applied Logic}, 100(1):141--187, 1999.
\newblock URL:
  \url{https://www.sciencedirect.com/science/article/pii/S0168007299000056},
  \href {https://doi.org/10.1016/S0168-0072(99)00005-6}
  {\path{doi:10.1016/S0168-0072(99)00005-6}}.

\bibitem{buss1987polynomial}
Samuel~R Buss.
\newblock Polynomial size proofs of the propositional pigeonhole principle.
\newblock {\em The Journal of Symbolic Logic}, 52(4):916--927, 1987.

\bibitem{caifurimm92}
J.~Cai, M.~F{\"u}rer, and N.~Immerman.
\newblock An optimal lower bound on the number of variables for graph
  identification.
\newblock {\em Combinatorica}, 12:389--410, 1992.

\bibitem{groebner}
Matthew Clegg, Jeffery Edmonds, and Russell Impagliazzo.
\newblock Using the {G}roebner basis algorithm to find proofs of
  unsatisfiability.
\newblock In {\em Proceedings of the twenty-eighth annual ACM symposium on
  Theory of computing}, pages 174--183, 1996.

\bibitem{dawar2015nature}
Anuj Dawar.
\newblock The nature and power of fixed-point logic with counting.
\newblock {\em ACM SIGLOG News}, 2(1):8--21, 2015.

\bibitem{DawarCSL20}
Anuj Dawar.
\newblock Symmetric computation (invited talk).
\newblock In {\em 28th {EACSL} Annual Conference on Computer Science Logic,
  {CSL} 2020}, 2020.
\newblock \href {https://doi.org/10.4230/LIPIcs.CSL.2020.2}
  {\path{doi:10.4230/LIPIcs.CSL.2020.2}}.

\bibitem{DawarICALP24}
Anuj Dawar.
\newblock Limits of symmetric computation (invited talk).
\newblock In {\em 51st International Colloquium on Automata, Languages, and
  Programming, {ICALP} 2024}, volume 297 of {\em LIPIcs}, pages 1:1--1:8, 2024.
\newblock \href {https://doi.org/10.4230/LIPICS.ICALP.2024.1}
  {\path{doi:10.4230/LIPICS.ICALP.2024.1}}.

\bibitem{DawarPago24}
Anuj Dawar and Benedikt Pago.
\newblock A logic for {P:} are we nearly there yet?
\newblock {\em {ACM} {SIGLOG} News}, 11:35--60, 2024.
\newblock \href {https://doi.org/10.1145/3665453.3665459}
  {\path{doi:10.1145/3665453.3665459}}.

\bibitem{DawarPS25}
Anuj Dawar, Benedikt Pago, and Tim Seppelt.
\newblock Symmetric algebraic circuits and homomorphism polynomials.
\newblock {\em arXiv}, abs/2502.06740, 2025.
\newblock URL: \url{https://arxiv.org/abs/2502.06740}.

\bibitem{dawar2008}
Anuj Dawar, David Richerby, and Benjamin Rossman.
\newblock Choiceless {P}olynomial {T}ime, {C}ounting and the
  {C}ai--{F}{\"u}rer--{I}mmerman graphs.
\newblock {\em Annals of Pure and Applied Logic}, 152(1-3):31--50, 2008.

\bibitem{DawarW20}
Anuj Dawar and Gregory Wilsenach.
\newblock {Symmetric Arithmetic Circuits}.
\newblock In {\em 47th International Colloquium on Automata, Languages, and
  Programming (ICALP 2020)}, volume 168 of {\em LIPIcs}, pages 36:1--36:18.
  Schloss Dagstuhl -- Leibniz-Zentrum f{\"u}r Informatik, 2020.
\newblock \href {https://doi.org/10.4230/LIPIcs.ICALP.2020.36}
  {\path{doi:10.4230/LIPIcs.ICALP.2020.36}}.

\bibitem{DawarW22}
Anuj Dawar and Gregory Wilsenach.
\newblock Lower bounds for symmetric circuits for the determinant.
\newblock In {\em 13th Innovations in Theoretical Computer Science Conference,
  {ITCS}}, volume 215 of {\em LIPIcs}, pages 52:1--52:22. Schloss Dagstuhl -
  Leibniz-Zentrum f{\"{u}}r Informatik, 2022.
\newblock \href {https://doi.org/10.4230/LIPICS.ITCS.2022.52}
  {\path{doi:10.4230/LIPICS.ITCS.2022.52}}.

\bibitem{lowerbounds_IPS}
Michael~A. Forbes, Amir Shpilka, Iddo Tzameret, and Avi Wigderson.
\newblock {Proof Complexity Lower Bounds from Algebraic Circuit Complexity}.
\newblock In Ran Raz, editor, {\em 31st Conference on Computational Complexity
  (CCC 2016)}, volume~50 of {\em Leibniz International Proceedings in
  Informatics (LIPIcs)}, pages 32:1--32:17, Dagstuhl, Germany, 2016. Schloss
  Dagstuhl--Leibniz-Zentrum fuer Informatik.
\newblock URL: \url{http://drops.dagstuhl.de/opus/volltexte/2016/5832}, \href
  {https://doi.org/10.4230/LIPIcs.CCC.2016.32}
  {\path{doi:10.4230/LIPIcs.CCC.2016.32}}.

\bibitem{govindasamy2022simple}
Nashlen Govindasamy, Tuomas Hakoniemi, and Iddo Tzameret.
\newblock Simple hard instances for low-depth algebraic proofs.
\newblock In {\em 2022 IEEE 63rd Annual Symposium on Foundations of Computer
  Science (FOCS)}, pages 188--199. IEEE, 2022.

\bibitem{gradel2015polynomial}
Erich Gr{\"a}del and Martin Grohe.
\newblock Is {P}olynomial {T}ime {C}hoiceless?
\newblock In {\em Fields of Logic and Computation II}, pages 193--209.
  Springer, 2015.

\bibitem{threshold_games_FPC}
Erich Gr{\"a}del and Stefan Hegselmann.
\newblock {Counting in Team Semantics}.
\newblock In Jean-Marc Talbot and Laurent Regnier, editors, {\em 25th EACSL
  Annual Conference on Computer Science Logic (CSL 2016)}, volume~62 of {\em
  Leibniz International Proceedings in Informatics (LIPIcs)}, pages
  35:1--35:18, Dagstuhl, Germany, 2016. Schloss Dagstuhl--Leibniz-Zentrum fuer
  Informatik.
\newblock URL: \url{http://drops.dagstuhl.de/opus/volltexte/2016/6575}, \href
  {https://doi.org/10.4230/LIPIcs.CSL.2016.35}
  {\path{doi:10.4230/LIPIcs.CSL.2016.35}}.

\bibitem{IPS}
Joshua~A. Grochow and Toniann Pitassi.
\newblock {Circuit Complexity, Proof Complexity, and Polynomial Identity
  Testing: The Ideal Proof System}.
\newblock {\em J. ACM}, 65(6), 11 2018.
\newblock \href {https://doi.org/10.1145/3230742} {\path{doi:10.1145/3230742}}.

\bibitem{haken}
Armin Haken.
\newblock The intractability of resolution.
\newblock {\em Theoretical computer science}, 39:297--308, 1985.

\bibitem{hakoniemi2024functional}
Tuomas Hakoniemi, Nutan Limaye, and Iddo Tzameret.
\newblock Functional lower bounds in algebraic proofs: Symmetry, lifting, and
  barriers.
\newblock In {\em Proceedings of the 56th Annual ACM Symposium on Theory of
  Computing}, pages 1396--1404, 2024.

\bibitem{sandraSiglog}
Sandra Kiefer.
\newblock The {W}eisfeiler-{L}eman algorithm: an exploration of its power.
\newblock {\em ACM SIGLOG News}, 7(3):5--27, 2020.

\bibitem{krajicekBook}
Jan Kraj{\'\i}{\v{c}}ek.
\newblock {\em Proof {C}omplexity}, volume 170.
\newblock Cambridge University Press, 2019.

\bibitem{limaye2021new}
Nutan Limaye, Srikanth Srinivasan, and S{\'e}bastien Tavenas.
\newblock New non-fpt lower bounds for some arithmetic formulas.
\newblock {\em ECCC}, 2021.

\bibitem{otto1997bounded}
Martin Otto.
\newblock {\em Bounded {V}ariable {L}ogics and {C}ounting}, volume~9 of {\em
  Lecture Notes in Logic}.
\newblock Springer, 1997.

\bibitem{Pago}
Benedikt Pago.
\newblock {Finite Model Theory and Proof Complexity Revisited: Distinguishing
  Graphs in Choiceless Polynomial Time and the Extended Polynomial Calculus}.
\newblock In Bartek Klin and Elaine Pimentel, editors, {\em 31st EACSL Annual
  Conference on Computer Science Logic (CSL 2023)}, volume 252 of {\em Leibniz
  International Proceedings in Informatics (LIPIcs)}, pages 31:1--31:19,
  Dagstuhl, Germany, 2023. Schloss Dagstuhl -- Leibniz-Zentrum f{\"u}r
  Informatik.
\newblock URL: \url{https://drops.dagstuhl.de/opus/volltexte/2023/17492}, \href
  {https://doi.org/10.4230/LIPIcs.CSL.2023.31}
  {\path{doi:10.4230/LIPIcs.CSL.2023.31}}.

\bibitem{pakusa2015linear}
Wied Pakusa.
\newblock {\em Linear Equation Systems and the Search for a Logical
  Characterisation of Polynomial Time}.
\newblock PhD thesis, RWTH Aachen, 2015.

\bibitem{ggpp}
Wied Pakusa, Benedikt Pago, Martin Grohe, and Erich Gr{\"a}del.
\newblock A {F}inite-{M}odel-{T}heoretic {V}iew on {P}ropositional {P}roof
  {C}omplexity.
\newblock {\em Logical Methods in Computer Science}, 15, 2019.

\bibitem{razborov1998lower}
Alexander~A Razborov.
\newblock Lower bounds for the polynomial calculus.
\newblock {\em computational complexity}, 7:291--324, 1998.

\bibitem{rossman2010choiceless}
Benjamin Rossman.
\newblock Choiceless {C}omputation and {S}ymmetry.
\newblock In {\em Fields of Logic and Computation}, pages 565--580. Springer,
  2010.

\bibitem{segerlind}
Nathan Segerlind.
\newblock The {C}omplexity of {P}ropositional {P}roofs.
\newblock {\em Bulletin of symbolic Logic}, 13(4):417--481, 2007.

\bibitem{shpilka2001depth}
Amir Shpilka and Avi Wigderson.
\newblock Depth-3 arithmetic circuits over fields of characteristic zero.
\newblock {\em Computational Complexity}, 10:1--27, 2001.

\end{thebibliography}

\newpage
\appendix

\section{Simulation of the extended polynomial calculus in symmetric IPS}
The simulation presented here is essential in the proof of Theorem \ref{thm:summaryGI}.
The \textit{extended polynomial calculus} (EPC) is an extension of the polynomial calculus with
the \textit{extension rule}:
\begin{center}
	{\Large$\frac{}{~  x_e - f_e ~}$},
	for any $f_e\in \Field[X \uplus X_E]$
\end{center}
where $X$ is the variable set of the instance and $x_e$ is an
\textit{extension variable} distinct from the variables in $X$, and $X_E$ denotes the set of all extension variables. For an EPC refutation to be correct, there must exist a linear order $\preceq$ on the set of extension variables in it, which satisfies: In every extension axiom $x_e - f_e$, the polynomial $f_e$ only contains extension variables that are smaller than $x_e$ with respect to $\preceq$. We refer to this as a \emph{hierarchical ordering} of the extension axioms. An EPC refutation of $\F$ involving a set $\Ee$ of extension axioms can also be viewed as a (non-extended) PC refutation of $\F \cup \Ee$. By $\deg_k$-EPC we denote the restriction of EPC where every proof line has degree at most $k$. Admissible instances for these proofs systems must also have degree at most $k$.

For the proof of Theorem \ref{thm:summaryGI}, it is necessary to simulate \emph{symmetric} $\deg_3$-EPC refutations by symmetric refutations in $\IPSlin$. The $\deg_3$-EPC refutations that we wish to simulate are symmetric in the following relatively weak sense.

\paragraph*{Symmetric EPC}
Let $\Gamma$ be a group acting on $X$.
Let $\mathcal E = \{e_1,e_2,...,e_r\}$ be a set of hierarchical extension axioms over $X$ with
$e_1\preceq e_2\preceq ... \preceq e_r$
and $e_i = x_{e_i} - f_{e_i}$, where the $x_{e_i} \notin X$ are extension variables. 
We say that the action of $\Gamma$ \emph{naturally lifts} to $\Ee$ if the following three conditions hold: 
\begin{enumerate}
	\item There exists a partition of $\Ee$ into classes $C_1,C_2,...,C_m$ such that for each $i \in [m]$, for every extension axiom $e = x_e - f_e \in C_i$, $f_e$ only contains extension variables from classes $C_j$ with $j < i$.
	\item There are no two extension axioms $x_{e} - f_e$ and $x_{e'} - f_{e'}$ in $\Ee$ such that $f_e = f_{e'}$ and $x_e \neq x_{e'}$.
	\item For every $C_i$ and every $\pi \in \Gamma$, $\pi(C_i) = C_i$.
\end{enumerate}
For the third item, the action of $\pi \in \Gamma$ on $\Ee$ is defined inductively as follows. For $e = (x_e - f_e) \in C_1$, $\pi(e)$ is the unique extension axiom $(x_{e'} - f_{e'}) \in \Ee$ such that $f_{e'} = \pi(f_e)$ (uniqueness is ensured by the second condition). Here, $\pi(f_e)$ is defined via the action of $\Gamma$ on $X$. For $e \in C_{i+1}$, $\pi(e)$ is defined in the same way, where we use the fact that $\pi(f_e)$ is defined by the induction hypothesis, as $f_e \in C_i$.

\begin{definition}[sym-EPC]
	\label{def:symEPC}
	Let $\Gamma$ be a group, $X$ be a $\Gamma$-set, and let $\Field$ be a field. \\
	Let $\F \subseteq \Field[X]$ be a $\Gamma$-invariant unsatisfiable system of equations.\\
	A $\Gamma$-symmetric $\EPC$ refutation $R$ of $\F$ is an $\EPC$ refutation of $\F$ such that the action of $\Gamma$ on $X$ \emph{naturally lifts} to the set $\mathcal E$ of extension axioms appearing in $R$.
\end{definition}
Note that this definition does not require all extension axioms in $\Ee$ to actually play a role in the refutation but they of course contribute to its size (i.e.\ proof length).

The graph isomorphism refutations that we need to simulate for Theorem \ref{thm:summaryGI} come from \cite[Theorem 1]{Pago}, and they are in $\deg_3$-sym-EPC for $\Gamma=\Aut(G) \times \Aut(H)$, where $G$ and $H$ are the two graphs to be distinguished.
The following theorem shows that these refutations can be simulated in$\symIPSlin$, which allows us to conclude the second statement of Theorem \ref{thm:summaryGI}. 
\begin{theorem} 
	\label{thm:symdegEPC<=symIPS}
	Let $\Field$ be a field and let $\Gamma$ be a group
	such that either $\charact(\Field) = 0$, or $\bbF$ has positive characteristic and $|\Gamma|$ and $\charact(\Field)$ are coprime. Let $k \in \N$. Then
	\begin{center}
		$\mathrm{deg}_k \text{-}\mathrm{sym}\text{-}\EPC  \leq_p  \symIPSlin$.
	\end{center}
\end{theorem}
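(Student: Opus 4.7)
My plan is to reduce to an ordinary $\PC$ refutation of the enlarged axiom system $\F \cup \Ee$, apply Corollary~\ref{cor:boundedDegPC<=symIPS}, and then eliminate the extension variables by an equivariant back-substitution inside the resulting circuit.

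Given a $\Gamma$-symmetric $\EPC$ refutation of $\F$ of degree $k$, with extension axiom set $\Ee = \{e_j = x_{e_j} - f_{e_j}\}$ partitioned into hierarchical classes $C_1, \dots, C_m$, the first step is to view this refutation as a degree-$k$ $\PC$ refutation of the enlarged system $\F \cup \Ee$ over $\Field[X \cup X_E]$. The natural lift from Definition~\ref{def:symEPC} equips $X \cup X_E$ with a $\Gamma$-action ($\pi(x_{e_j}) := x_{\pi(e_j)}$) that makes $\F \cup \Ee$ a $\Gamma$-invariant polynomial system. Applying Corollary~\ref{cor:boundedDegPC<=symIPS} then yields a $\Gamma$-symmetric $\deg_k\text{-}\IPSlin$ refutation
\[
C(\vec x, \vec x_E, \vec y, \vec z) = \sum_i y_i \, g_i(\vec x, \vec x_E) + \sum_j z_j \, h_j(\vec x, \vec x_E)
\]
of $\F \cup \Ee$ whose size is polynomial in that of the original EPC refutation.

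The second step is to eliminate $\vec x_E$ and $\vec z$. Define $f_e^*(\vec x) \in \Field[X]$ inductively: for $e \in C_1$ set $f_e^* := f_e$; and for $e \in C_{i+1}$ let $f_e^*(\vec x)$ be $f_e(\vec x, \vec x_E)$ with each occurring $x_{e'}$ replaced by $f_{e'}^*(\vec x)$, which is well-defined since every such $e'$ lies in a strictly lower class. Each $f_e$ has degree at most $k$, so its monomial-list subcircuit has size $O((|X|+|X_E|)^k)$, and by wiring the output of the shared $f_{e'}^*$-subcircuit into every occurrence of $x_{e'}$, I obtain a single DAG of polynomial size computing the whole family $\{f_e^*\}_{e \in \Ee}$. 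By induction on the classes, together with the natural lift of the $\Gamma$-action to $\Ee$, this family is equivariant: $\pi(f_e^*) = f_{\pi(e)}^*$ for every $\pi \in \Gamma$. Now set
\[
C'(\vec x, \vec y) := C(\vec x, \vec f_E^*(\vec x), \vec y, \vec 0),
\]
realised as the circuit obtained from $C$ by rerouting each input gate labelled $x_e$ to the output of the shared $f_e^*$-subcircuit and by replacing each $z_j$-input with the constant $0$. The resulting $C'$ is $\vec y$-linear, has polynomial size, and is $\Gamma$-symmetric: every circuit automorphism of $C$ induced by a $\pi \in \Gamma$ extends to $C'$ via the equivariance of $\{f_e^*\}$. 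Correctness reduces to two identities: $C'(\vec x, \vec 0) = C(\vec x, \vec f_E^*(\vec x), \vec 0, \vec 0) = 0$, since $C$ vanishes at $(\vec y, \vec z) = (\vec 0, \vec 0)$; and $C'(\vec x, \vec f) = C(\vec x, \vec f_E^*(\vec x), \vec f(\vec x), \vec 0) = 1$, because under the substitution $\vec x_E = \vec f_E^*(\vec x)$ each extension axiom $e_j = x_{e_j} - f_{e_j}(\vec x, \vec x_E)$ evaluates to $f_{e_j}^*(\vec x) - f_{e_j}(\vec x, \vec f_E^*(\vec x)) = 0$ by the very definition of $f_{e_j}^*$.

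The hard part is the elimination step: making the back-substitution simultaneously $\Gamma$-equivariant, size-preserving, and consistent across the whole DAG. Both the size control and the equivariance rely on the hierarchical class structure and on the uniqueness condition (item~2 of Definition~\ref{def:symEPC}) for extension axioms; these together guarantee that one can build \emph{shared} subcircuits for the $f_e^*$ whose collection is permuted among itself by every $\pi \in \Gamma$, so that the circuit automorphisms of $C$ induced by $\Gamma$ lift uniquely to automorphisms of $C'$.
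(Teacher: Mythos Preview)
Your proof is correct and follows essentially the same route as the paper: view the sym-EPC refutation as a $\deg_k$-PC refutation of $\F\cup\Ee$, symmetrise via Corollary~\ref{cor:boundedDegPC<=symIPS} (equivalently Theorem~\ref{thm:deg_IPS=deg_sym_IPS}), then back-substitute the extension variables by $\Gamma$-equivariant subcircuits for the $f_e^*$. The one place you differ is in the handling of the extension-axiom placeholders $\vec z$: the paper substitutes the actual axioms $x_{e_j}-f_{e_j}$ and then argues via a ring homomorphism $\theta$ that these become zero after the $\vec x_E$-substitution, whereas you set $\vec z=\vec 0$ directly and observe that $e_j(\vec x,\vec f_E^*)=0$ so the two substitutions agree; this is a harmless and slightly cleaner variant yielding the same final circuit polynomial.
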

\begin{proof}
	Let $X$ be a $\Gamma$-set and let $\F \subseteq \Field[X]$
	be a $\Gamma$-invariant unsatisfiable polynomial system of equations.
	Let $R$ be a $\mathrm{deg}_k \text{-}\mathrm{sym}\text{-}\EPC$ refutation of $\F$ with 	extension axioms $\mathcal{E}$. 
	Let $e_1 \preceq e_2 \preceq ...\preceq e_r$ 
	and $x_{e_1}\preceq x_{e_2} \preceq ...\preceq x_{e_r}$ be
	the hierarchical ordering of the extension axioms and variables.
	We think of $R$ as a PC refutation of $\F \cup \Ee$, and let $C_R$ be the $\IPSlin$ refutation that $p$-simulates this PC refutation (which exists by \cite[Proposition 3.4]{IPS}).
	
	It computes a polynomial 
	$C_R(\vec x, \vec x_e, \vec y, \vec y_e)\in \Field[\vec x, \vec x_e,\vec y, \vec y_e ]$ which is an IPS certificate of unsatisfiability of $\Ff \cup \Ee$. The semantic degree of $C_R$ is at most $k$ because each gate of $C_R$ computes a polynomial $p(\vec x, \vec x_e, \vec y, \vec y_e)$ such that $p(\vec x, \vec x_e, \vec f, \vec e)$ is a proof line of $R$. Hence, $\deg(p(\vec x, \vec x_e, \vec f, \vec e)) \leq k$, and $\deg(p(\vec x, \vec x_e, \vec y, \vec y_e))$ can only be smaller.
	Note also that $\Ff \cup \Ee$ is a $\Gamma$-invariant axiom system because the action of $\Gamma$ on $X$ naturally lifts to $\Ee$.
	Therefore, we can apply Theorem~\ref{thm:deg_IPS=deg_sym_IPS} to $C_R$ and obtain a $\Gamma$-symmetric$\symIPSlin$ refutation $C^{sym}_R$ of $\Ff \cup \Ee$. 
	All steps so far increased the proof size at most polynomially, so we have $|C^{sym}_R| \leq \poly(|R|)$.\\
	The circuit $C^{sym}_R$ computes a polynomial in $\Field[\vec x, \vec x_e,\vec y, \vec y_e]$, so we still need to get rid of the $\vec x_e$- and $\vec y_e$-variables. We first define $C'^{sym}_R$ as the circuit obtained from $C^{sym}_R$ by replacing the $\vec y_e$-variables with the corresponding extension axioms $x_{e_i} - f_{e_i}$. 
	Next, we replace in order $i=1, ..., r$ every extension-variable input gate $x_{e_i}$ by a subcircuit computing $f_{e_i}(\vec x, \vec f_e) \in \bbF[X]$. Call the resulting circuit $C^{sym}$. This circuit is still $\vec y$-linear. To prove that $|C^{sym}| \leq \poly(|C^{sym}_R|) \leq \poly(|R|)$, we just need to see that each $f_{e_i}$ is computable with a subcircuit of polynomial size. But this is clearly satisfied because we started with a degree-$k$ EPC refutation, so each $f_{e_i}$ also has degree $\leq k$. Hence, it can be represented as a sum of only polynomially many monomials. 
	Now we show that $C^{sym}$ is indeed a refutation of $\vec f$.
	We have
	\[
	C^{sym}(\vec x, \vec f) = C_R(\vec x, \vec f_e, \vec f, \vec e) = \theta C_R(\vec x, \vec x_e, \vec f, \vec e) = \theta 1 = 1,
	\]
	where $\theta \colon \bbF[X \cup X_E] \to \bbF[X]$ denotes the operator which replaces in a given polynomial $p$ each extension variable $x_{e_i}$ with the polynomial $f_{e_i}(\vec x, \vec f_e)$. In the third equality, we use the fact that $\theta$ is a ring homomorphism, so $\theta$ commutes with the operations of the algebraic circuit. Secondly, 
	\[
	C^{sym}(\vec x, \vec 0) =  C_R(\vec x, \vec f_e, \vec 0, \vec e) = C_R(\vec x, \theta \vec x_e, \vec 0, \theta \vec e) = \theta C_R(\vec x, \vec x_e, \vec 0, \vec 0)  = \theta 0 = 0.
	\]
	Here, we made use of the fact that $\theta e_i = \theta(x_{e_i} - f_{e_i}) = 0$ for every $i \in [r]$, and again that $\theta$ commutes with the operations of the circuit.
	
	The final property we have to check is that $C^{sym}$ is $\Gamma$-symmetric. The circuit $C^{sym}_R$ is $\Gamma$-symmetric by construction, with respect to the action of $\Gamma$ on $X \cup X_E \cup Y \cup Y_E$, where $X_E$ denotes the set of extension variables and $Y_E$ the set of input variables for the extension axioms. In $C^{sym}$, the input gates with labels in $X_E \cup Y_E$ are replaced with subcircuits computing the corresponding $f_{e_i}$ and $e_i$. Because $\Ee$ is $\Gamma$-invariant, these subcircuits can be chosen so that they are symmetric to each other. That is, for each $\pi \in \Gamma$, and each $f_{e_i}$, the subcircuit computing $f_{e_i}$ is mapped by $\pi$ to the subcircuit computing $\pi f_{e_i}$, and likewise for the extension axioms $e_i$. 
\end{proof}

\section{Equivalence of bounded-degree IPS and PC}
We provide details for the claim that on instances of constant degree, bounded-degree IPS refutations can be simulated in bounded-degree PC. This is used in the proof of Theorem \ref{thm:separationOfSymBoundedDegree}.
\begin{theorem}
	\label{thm:simulationsOfAlgebraicSystemsInIPSconstantDegree}
	On instances of \emph{constant degree}, we have the following relationships between the bounded-degree fragments of the proof systems.\footnote{The relationship between the degree bounds in PC and $\IPSlin$ is not the same in each direction of the mutual simulation. This is because the degree of the axioms is counted in a PC refutation while in IPS, each axiom just contributes degree $1$, as it is replaced with a $\vec y$-variable.}    
	\[
	\text{bounded-degree }\mathrm{PC} \equiv_p \text{bounded-degree }\IPSlin \equiv_p \text{bounded-degree } \IPS.
	\]
\end{theorem}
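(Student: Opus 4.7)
The equivalence will be established as a cycle of three simulations: bounded-degree PC $\leq_p$ bounded-degree $\IPSlin$ $\leq_p$ bounded-degree IPS $\leq_p$ bounded-degree PC. The first two inclusions are the easy ones, so the work lies in the last. Throughout, fix an unsatisfiable system $\F = \{f_1,\dots,f_m\} \subseteq \bbF[X]$ of constant degree $d$.

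\noindent\textbf{Forward direction.} For bounded-degree PC $\leq_p$ bounded-degree $\IPSlin$, I would use the standard line-by-line translation of~\cite[Proposition~3.4]{IPS}: every polynomial $p_i$ in a PC refutation $(p_1,\dots,p_n = 1)$ is in the ideal generated by $\F$, and the derivation can be read off as a $\vec{y}$-linear combination $\sum_j \alpha_{i,j}(\vec x)\, y_j$ built incrementally from the axiom, multiplication and linear-combination rules. Each proof line contributes $O(1)$ gates to the circuit, so size is preserved up to a polynomial factor. Since each $p_i$ has degree at most $k$ and each $f_j$ has degree at most $d$, the semantic degree of the resulting $\IPSlin$ circuit is at most $k$ (where $y_j$ contributes degree $1$, and the coefficients $\alpha_{i,j}$ have degree $\leq k - 1$). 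The inclusion bounded-degree $\IPSlin \leq_p$ bounded-degree IPS is immediate, since every $\vec{y}$-linear circuit is an IPS circuit.

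\noindent\textbf{Reverse direction.} This is the core of the argument. Let $C(\vec x,\vec y)$ be an IPS refutation of semantic degree $k$. Since every subcircuit of $C$ computes a polynomial of degree at most $k$, the polynomial $C(\vec x,\vec y)$ itself has total degree at most $k$. Its number of monomials is therefore bounded by $\binom{|X|+|Y|+k}{k} = \poly(|\F|)$ because $k$ is a constant. I would first expand $C$ explicitly as a sum of monomials and their coefficients. Since $C(\vec x,\vec 0) = 0$, every monomial contains at least one $\vec y$-variable, so we may group terms and write
\begin{equation*}
   C(\vec x,\vec y) = \sum_{j=1}^{m} y_j\, h_j(\vec x,\vec y),
\end{equation*}
with each $h_j$ of degree at most $k-1$. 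Substituting $\vec y = \vec f$ yields a Nullstellensatz certificate $\sum_{j} f_j \cdot g_j(\vec x) = 1$ where $g_j(\vec x) := h_j(\vec x,\vec f)$. Since each $h_j$ has total degree $\leq k-1$ and the $f_j$ have degree $\leq d$, the $g_j$ have degree at most $(k-1)d$, a constant. The number of monomials in each $g_j$ is therefore $\poly(|\F|)$.

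\noindent\textbf{From Nullstellensatz to PC.} Finally, I would invoke the well-known fact that a Nullstellensatz certificate of degree $D$ and size $s$ can be converted to a PC refutation of degree at most $D + d$ and size $\poly(s)$. Concretely, for each $g_j$ written as a sum of monomials, one derives every product $m \cdot f_j$ (with $m$ a monomial of $g_j$) by repeated application of the multiplication rule to the axiom $f_j$, and then uses linear combinations to sum these products to $1$. Each such derivation step has degree bounded by $(k-1)d + d = kd$, giving a bounded-degree PC refutation of polynomial size. The $\IPSlin$ case is a special instance: the refutation is already $\vec{y}$-linear, so the coefficients $g_j$ are available directly without the grouping step.

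\noindent\textbf{Main obstacle.} The only real subtlety is the interaction between the semantic degree of the IPS circuit and the size of the extracted monomial representation. If one only had a syntactic degree bound, expansion could blow up badly; the fact that \emph{semantic} degree is bounded (so the underlying polynomial has at most $\poly(|\F|)$ monomials) together with the constant degree of the instance ensures that both the Nullstellensatz extraction and the PC simulation remain polynomial. It is precisely this combination of hypotheses that the theorem's assumption ``instances of constant degree'' is designed to supply.
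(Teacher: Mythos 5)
Your proposal is correct and follows essentially the same route as the paper: the crux in both is that a constant semantic degree bound lets one expand the IPS certificate into polynomially many monomials, keep one $\vec y$-variable per monomial and substitute the (constant-degree) axioms for the rest, paying only a factor of $d$ in the degree. The paper packages the final step by passing through skew $\IPSlin$ and citing the degree-preserving equivalence with PC from \cite[Proposition 3.4]{IPS}, whereas you spell out the Nullstellensatz-to-PC conversion directly; these are the same argument in different clothing.
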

First, consider the equivalence $\text{bounded-degree }\mathrm{PC} \equiv_p \text{bounded-degree }\IPSlin$. 
This is partially proved by \cite[Proposition 3.4]{IPS}, which establishes $p$-equivalence between PC and the so-called ``determinantal'' or ``skew'' $\IPSlin$. This is the restriction of $\IPSlin$ to \emph{skew} circuits, in which each multiplication gate has fan-in $\leq 2$ and at least one of its inputs is an input gate of the circuit.
It can be easily seen that the $p$-simulation from \cite[Proposition 3.4]{IPS} is degree-preserving, up to the additive cost of the axiom degree in going from an $\IPSlin$- to a PC-refutation (replacing $\vec y$-variables with the axioms they stand for). This shows: $\text{bounded-degree }\mathrm{PC} \equiv_p \text{bounded-degree skew  }\IPSlin$. Moreover, it trivially holds: $\text{bounded-degree skew }\IPSlin \leq_p \text{bounded-degree } \IPS$. 
Thus, it remains to prove that on instances of constant degree, we also have: 
\[
\text{bounded-degree } \IPS \leq_p \text{bounded-degree skew }\IPSlin.
\]
This is dealt with in the following lemma.
\begin{lemma}
	\label{lem:constantDegreeAllowsSkewisation}
	Let $\F$ be a polynomial system of equations with $\deg(\F) \leq d$.\\ 
	If there exists a $\deg_k$-$\IPS$ refutation $C$ of $\F$, then there is also a skew $\vec y$-linear $\deg_{dk}$-$\IPS$ refutation of size at most $\BigO(|\Circ|^{dk})$.
\end{lemma}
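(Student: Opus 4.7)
The plan is to turn the degree-$k$ $\IPS$ certificate $C(\vec x, \vec y)$ into a $\vec y$-linear one by substitution, then realise the resulting low-degree polynomial as an explicit sum-of-monomials skew circuit. First I would exploit the hypothesis $C(\vec x, \vec 0) = 0$, which says that every monomial of $C(\vec x, \vec y)$ contains at least one $\vec y$-variable. Grouping monomials according to their least-indexed $\vec y$-variable, we can write
\[
C(\vec x, \vec y) \;=\; \sum_{i=1}^m y_i \, h_i(\vec x, \vec y), \qquad \deg(h_i) \leq k-1.
\]
Set $g_i(\vec x) := h_i(\vec x, \vec f(\vec x))$ and define $C'(\vec x, \vec y) := \sum_{i=1}^m y_i g_i(\vec x)$. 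Then $C'(\vec x, \vec 0) = 0$ trivially, and $C'(\vec x, \vec f) = \sum_i f_i(\vec x)\, h_i(\vec x, \vec f) = C(\vec x, \vec f) = 1$, so $C'$ is a $\vec y$-linear $\IPS$ certificate of $\F$.

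Next I would control the degree. For any monomial $x^\alpha y^\beta$ appearing in $h_i$ we have $|\alpha|+|\beta| \leq k-1$, so substituting each $y_j$ by $f_j$ (of degree $\leq d$) yields an $\vec x$-degree of at most $|\alpha| + d|\beta| \leq d(k-1)$. Hence $\deg(g_i) \leq d(k-1)$ and $\deg(C') \leq dk$. Since $C'$ is a polynomial in $|X|+|Y|$ variables of total degree at most $dk$, it has at most $\binom{|X|+|Y|+dk}{dk} = \BigO(|C|^{dk})$ distinct monomials, using the convention $|C| \geq |X|+|Y|$. Now I would simply write $C'$ as an explicit sum of its monomials: each degree-$r$ monomial $c\cdot v_{i_1}\cdots v_{i_r}$ (with $r \leq dk$) can be computed by chaining $r$ multiplications with input gates, which is manifestly skew, and all these monomial subcircuits feed into a single sum gate. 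The total size is $\BigO(dk \cdot |C|^{dk}) = \BigO(|C|^{dk})$ (treating $dk$ as constant), and since each monomial of $C'$ contains exactly one $\vec y$-variable as a factor multiplied on an $\vec x$-only subcircuit, the circuit is $\vec y$-linear as well.

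The main obstacle is really a bookkeeping one: one must verify that the substitution-then-expansion procedure produces a polynomial of the claimed degree while the number of monomials stays $\BigO(|C|^{dk})$. The subtle point is that we do \emph{not} need efficient circuits for the intermediate polynomials $h_i$ or $g_i$ -- those are used only for the existence argument -- rather, the final circuit is built by explicitly enumerating the monomials of $C'$. It is the constant-degree assumption on $\F$ that converts the substitution $\vec y \mapsto \vec f$ into a polynomial blow-up (factor $d$ in the exponent) instead of an uncontrolled one, and this is precisely where the hypothesis $\deg(\F) \leq d$ enters.
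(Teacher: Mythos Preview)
Your proposal is correct and follows essentially the same route as the paper's proof: both observe that $C(\vec x,\vec 0)=0$ forces every monomial to contain a $\vec y$-variable, keep one $\vec y$-variable per monomial and substitute the remaining ones by the corresponding axioms $f_j$ to obtain a $\vec y$-linear certificate of degree at most $dk$, and then realise this polynomial as an explicit sum of monomials computed by chained (hence skew) multiplications, invoking the convention $|C|\geq |X|+|Y|$ for the size bound. The only cosmetic difference is that you organise the linearisation by grouping monomials according to their least-indexed $\vec y$-variable and then substituting in the $h_i$, whereas the paper does the substitution monomial by monomial; the resulting certificate and circuit are the same.
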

\begin{proof}
	Let $C$ be a $\deg_k$-$\IPS$ refutation of $\F \subseteq \bbF[X]$.
	The circuit $C$ computes the polynomial
	$C(\vec x,\vec y) = \sum_{i\in I} c_im_i(\vec x,\vec y)$,
	for distinct monomials $m_i \in \Field[X \cup Y]$. Since the degree of
	$C$ is bounded by $k$, we have $|I|= \BigO((|X| + |Y|+1)^{k})$, and we can assume $C$ to be a circuit that just represents this polynomial as a sum of monomials, so $|C| = \BigO((|X| + |Y|+1)^{k})$.
	To make the circuit $\vec y$-linear, we replace in each monomial $m_i(\vec x, \vec y)$ with more than one $\vec y$-variable all but one of them with subcircuits for the respective axioms. That is, the variable $y_j$ is replaced with a circuit computing the axiom $f_j \in \F$. Call the resulting $\vec y$-linear
	circuit $C_{lin}$. Since $\deg(\F) \leq d$, the degree of each subcircuit computing an axiom $f_j$ is at most $d$. Thus, the total degree of $C_{lin}$ is at most $dk$ because $C(\vec x,\vec y) = \sum_{i\in I} c_im_i(\vec x,\vec y)$ has degree $\leq k$, and in $C_{lin}$, each variable is replaced by a polynomial in $\bbF[X]$ of degree $\leq d$. Furthermore, it is clear by construction that $C_{lin}(\vec x, \vec f) = C(\vec x, \vec f) = 1$. Also, $C_{lin}(\vec x, \vec 0) = 0$ because each of the monomials in $C(\vec x, \vec y)$ and hence also in $C_{lin}(\vec x, \vec y)$ contains at least one $\vec y$-variable (else, $C(\vec x, \vec 0) \neq 0$, and $C$ would not be an IPS refutation). Thus, $C_{lin}$ is also an IPS refutation of $\Ff$. It remains to transform this into a skew circuit. We have
	\[
	C_{lin}(\vec x,\vec y) = \sum_{i\in I} c_i \cdot m_i(\vec x,y_{j_i},\vec f),
	\]
	where $m_i(\vec x,y_{j_i},\vec f)$ denotes the monomial $m_i$ with all but one $\vec y$-variable $y_{j_i}$ replaced with the corresponding axiom. Each of the polynomials $m_i(\vec x,y_{j_i},\vec f)$ has the form $y_{j_i} \cdot q_i(\vec x)$ for some $q_i \in \bbF[X]$. This expression can be computed by a skew circuit as follows: For each monomial $x_{1} \cdot ...  \cdot x_t$ in $q_i$, compute $y_{j_i} \cdot x_{1} \cdot ...  \cdot x_t$ by a sequence of $t$ many degree-$2$ multiplication gates. Then take the sum over all these monomials to obtain $y_{j_i} \cdot q_i(\vec x)$. Finally, take the sum over all these $c_i \cdot y_{j_i} \cdot q_i(\vec x)$, for all $i \in I$. This describes a skew $\vec y$-linear degree-$dk$ circuit computing $C_{lin}(\vec x,\vec y)$. Its size is at most $\Oo((|X| + |Y|+1)^{dk})$ because every $m_i((\vec x,y_{j_i},\vec f)$ has degree at most $dk$, so this is the maximum number of monomials we have to compute.
	By our definition of IPS proof size (see Section \ref{sec:IPS}), $|C| \geq |X| + |Y|$, so this is in $\Oo(|C|^{dk})$
\end{proof}

\section{Upper bound results }

\subsection{Cai-Fürer-Immerman equations}
We recall the set-up from Section \ref{sec:CFI}.
\CFIdefinition*
The $\vec y$-variables associated with these equations are:
\begin{align*}
	Y := \{ y^v[e \mapsto i,f \mapsto j,g \mapsto k] &\mid v \in V, i,j,k \in \bbF_2, \{e,f,g\} = E(v)   \} \\
	& \uplus \{ y^e \mid e \in E \} \\
	& \uplus \{ y_{\text{Bool}}[e,i] \mid e \in E(v), i \in \bbF_2  \}  .
\end{align*}
The symmetry group we consider is the subgroup $\Gamma$ of the Boolean vector space $(\bbF_2^{E}, \oplus)$ which consists only of those vectors $\pi$ such that $\sum_{e \in E(v)} \pi(e) = 0 \mod 2$ for every $v \in V$. 
The action of this group on $X$ is given by $\pi(x^e_i) = x^e_{i+\pi_e}$, where addition is in $\bbF_2$, and $\pi_e$ denotes the entry of $\pi$ at index $e \in E$. 
The corresponding action on $Y$ is:
\[
\pi(y^v[e \mapsto i,f \mapsto j,g \mapsto k]) = y^v[e \mapsto i+\pi_{e},f \mapsto j+\pi_{f},g \mapsto k+\pi_{g}],
\]
$\pi(y^e) = y^e$, and $\pi(y_{\text{Bool}}[e,i]) = y_{\text{Bool}}[e,i+\pi_e]$. We now prove:

\CFIupperbound*

In the following, we say that a monomial contains a \emph{conflict} if for some $e \in E$, it contains both $x^e_0$ and $x^e_1$. It follows from the edge equations that such a conflict-monomial evaluates to zero. This is also derivable with a small symmetric circuit:
\begin{lemma}
	\label{lem:symmetricCircuitForE01}
	For every edge $e \in E$, there is a constant-size $\Gamma$-symmetric circuit $C_{e01}$ such that $C_{e01}(\vec x, \CFI(G, u, 1)) = x^e_0x^e_1$ and $C_{e01}(\vec x, \vec 0) = 0$.
\end{lemma}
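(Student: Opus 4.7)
The natural (but asymmetric) starting point is the one-line derivation
$x^e_0x^e_1 = x^e_0\cdot(x^e_0+x^e_1-1) - ((x^e_0)^2-x^e_0)$,
giving the certificate $x^e_0 y^e + y_{\text{Bool}}[e,0]$. This is not $\Gamma$-symmetric: the generator $\pi$ with $\pi_e=1$ swaps it with $x^e_1 y^e + y_{\text{Bool}}[e,1]$. The obvious fix — averaging over the two-element $\Gamma$-orbit — collapses to $0$ since we work over $\bbF_2$ and the orbit has even length. So the standard symmetrisation trick of Theorem~\ref{thm:deg_IPS=deg_sym_IPS} is unavailable, and this is the main obstacle to overcome.

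The plan is to pick a multiplier for the edge axiom $y^e$ that is already $\Gamma$-invariant on the $\vec x$-side, so no averaging is needed. The $\Gamma$-invariant monomial $x^e_0 x^e_1$ is the natural candidate: I would compute $x^e_0x^e_1\cdot(x^e_0+x^e_1-1)$ modulo the two Boolean axioms and check that the correction terms also organise into a $\Gamma$-invariant combination. Concretely, expand
\[
x^e_0 x^e_1 (x^e_0+x^e_1-1) = (x^e_0)^2 x^e_1 + (x^e_1)^2 x^e_0 - x^e_0 x^e_1,
\]
then substitute $(x^e_i)^2 = x^e_i + y_{\text{Bool}}[e,i]$ (which is valid in the sense that $(x^e_i)^2 - x^e_i$ is the Boolean axiom represented by $y_{\text{Bool}}[e,i]$) and simplify over $\bbF_2$. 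I expect the result to be
\[
C_{e01}(\vec x,\vec y) \;:=\; x^e_0 x^e_1\, y^e \;+\; x^e_1\, y_{\text{Bool}}[e,0] \;+\; x^e_0\, y_{\text{Bool}}[e,1],
\]
which after the substitution $\vec y\mapsto\CFI(G,u,1)$ collapses to $x^e_0 x^e_1$ because each redundant $x^e_0 x^e_1$ term appears an odd number of times and each $(x^e_i)^2 x^e_j$ term an even number of times.

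It then remains to exhibit the constant-size $\Gamma$-symmetric circuit computing this polynomial. Take input gates for $x^e_0, x^e_1, y^e, y_{\text{Bool}}[e,0], y_{\text{Bool}}[e,1]$, a multiplication gate for $x^e_0\cdot x^e_1$, two further multiplication gates for $x^e_1\cdot y_{\text{Bool}}[e,0]$ and $x^e_0\cdot y_{\text{Bool}}[e,1]$, one multiplication gate taking the product of the $x^e_0x^e_1$-gate with $y^e$, and a single output addition gate. This is a constant number of gates, independent of $|E|$. Any $\pi\in\Gamma$ either fixes the coordinate $e$ (and then acts as the identity on all inputs of $C_{e01}$) or flips it (and then swaps $x^e_0\leftrightarrow x^e_1$ and $y_{\text{Bool}}[e,0]\leftrightarrow y_{\text{Bool}}[e,1]$ while fixing $y^e$). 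In both cases, the action on inputs extends to an automorphism of the DAG above: the $x^e_0 x^e_1$-gate is fixed, and the two mixed multiplication gates are swapped, so the output sum is preserved. Finally $C_{e01}(\vec x,\vec 0) = 0$ is immediate since every monomial contains a $\vec y$-variable, completing the verification.
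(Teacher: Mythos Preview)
Your proposal is correct and takes essentially the same approach as the paper: you arrive at the identical certificate $C_{e01}(\vec x,\vec y) = x^e_0 x^e_1\, y^e + x^e_1\, y_{\text{Bool}}[e,0] + x^e_0\, y_{\text{Bool}}[e,1]$, verify the substitution over $\bbF_2$ by the same parity count, and give the same symmetry argument (the $e$-flip swaps the two mixed product gates and fixes the rest). Your write-up adds helpful motivation for why naive averaging fails in characteristic $2$ and spells out the circuit gates explicitly, but the underlying proof is the same as the paper's.
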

\begin{proof}
	We can write
	\[
	C_{e01}(\vec x, \vec y) = y^e \cdot x^e_0x^e_1 + y_{\text{Bool}}[e,0] \cdot x^e_1 + y_{\text{Bool}}[e,1] \cdot x^e_0.
	\]
	Let $C_{e01}$ be the circuit that computes the polynomial exactly in this way.
	It is easy to see that this is symmetric under flipping $e$ (i.e.\ under the unit vector in $\Gamma$ with a $1$ at index $e$) because $y_e$ is fixed by this edge flip. Flipping other edges has no effect on this circuit. When the $y$-variables are substituted with the axioms, this becomes in $\bbF_2$:
	\[
	C_{e01}(\vec x,  \CFI(\Gg, u, 1)) = (x^e_0 + x^e_1 +1) \cdot x^e_0x^e_1  + ((x^e_0)^2 + x^e_0) \cdot x^e_1 + ((x^e_1)^2 + x^e_1) \cdot x^e_0 =  x^e_0x^e_1\\.
	\]
\end{proof}

In the following, when we consider a vertex $v \in V$, we always call its incident edges $e$, $f$ and $g$. We introduce the shorthand notation $\llbracket P? \rrbracket$, for a Boolean predicate $P$. This evaluates to $1$ if $P$ holds, and $0$ otherwise.
For every $v \in V$, define:
\[
\tau(v) := \sum_{\stackrel{(i,j,k) \in \bbF_2^3}{i+j+k = \llbracket v \neq u? \rrbracket \mod 2}} x_i^{e} x^{f}_j x^{g}_k + 1
\]
\[
\tautilde(v) := \sum_{\stackrel{(i,j,k) \in \bbF_2^3}{i+j+k = \llbracket v = u? \rrbracket \mod 2}} x_i^{e} x^{f}_j x^{g}_k
\]
\begin{lemma}
	\label{lem:tauPolynomialsComputable}
	For each $v \in V$, there exist polynomial-size circuits $C_v$ and $\widetilde{C}_v$ such that 
	\begin{enumerate}
		\item $C_v(\vec x, \CFI(G_n, u_n, 1)) = \tau(v)$ and $\widetilde{C}_v(\vec x, \CFI(G_n, u_n, 1)) = \tautilde(v)$. 
		\item $C_v(\vec x, \vec 0) = 0$ and $\widetilde{C}_v(\vec x, \vec 0) = 0$.
		\item $C_v$ and $\widetilde{C}_v$ are $\Gamma$-symmetric.
	\end{enumerate}	
\end{lemma}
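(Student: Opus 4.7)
The plan is to construct both circuits \emph{locally} at each vertex $v$, using only axioms that involve the three edges $\{e,f,g\} = E(v)$. Since $\tau(v)$ and $\tautilde(v)$ are polynomials in the six variables $\{x^\alpha_i : \alpha \in E(v), i \in \bbF_2\}$, I expect constant-size circuits per vertex, which is well within the polynomial bound.

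The core tool is an $\bbF_2$-identity obtained by multiplying the vertex axiom $y^v[e \mapsto i, f \mapsto j, g \mapsto k]$ (which, under axiom substitution, equals $x^e_i + x^f_j + x^g_k - c_{v,i,j,k}$, where $c_{v,i,j,k} = i+j+k$ for $v \neq u$ and $c_{u,i,j,k} = i+j+k+1$ modulo $2$) by $x^e_i x^f_j$ and reducing the two squared factors via the Boolean axioms $y_{\text{Bool}}[e,i]$ and $y_{\text{Bool}}[f,j]$. A direct calculation gives, after substituting the axioms for the $\vec y$-variables,
\[
y^v[e \mapsto i, f \mapsto j, g \mapsto k] \cdot x^e_i x^f_j + y_{\text{Bool}}[e,i] \cdot x^f_j + y_{\text{Bool}}[f,j] \cdot x^e_i \;=\; x^e_i x^f_j x^g_k + c_{v,i,j,k} \cdot x^e_i x^f_j,
\]
which collapses to $x^e_i x^f_j x^g_k$ exactly when $c_{v,i,j,k} = 0$. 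Let $B_v := \{(i,j,k) \in \bbF_2^3 : c_{v,i,j,k} = 0\}$; by direct inspection $B_v$ equals the index set $\{i+j+k \equiv \llbracket v = u?\rrbracket \bmod 2\}$ of $\tautilde(v)$, so has size four. Summing the identity over $B_v$ then yields the constant-size circuit
\[
\widetilde{C}_v(\vec x, \vec y) \;:=\; \sum_{(i,j,k) \in B_v} \bigl( y^v[e \mapsto i, f \mapsto j, g \mapsto k] \cdot x^e_i x^f_j + y_{\text{Bool}}[e,i] \cdot x^f_j + y_{\text{Bool}}[f,j] \cdot x^e_i \bigr),
\]
which vanishes at $\vec y = \vec 0$ and evaluates to $\tautilde(v)$ under axiom substitution. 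For $C_v$ I would use the $\bbF_2$-identity $\tau(v) = \tautilde(v) + \prod_{\alpha \in E(v)}(x^\alpha_0 + x^\alpha_1) + 1$, rewrite each $x^\alpha_0 + x^\alpha_1$ as $1 + y^\alpha$ via the edge axioms, and set $C_v := \widetilde{C}_v + (1+y^e)(1+y^f)(1+y^g) + 1$. Then $C_v(\vec x, \vec 0) = 0 + 1 + 1 = 0$ in $\bbF_2$, and $C_v(\vec x, \vec f) = \tau(v)$ by construction.

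The step requiring most care is $\Gamma$-symmetry. Locally at $v$, any $\pi \in \Gamma$ restricts to an edge-flip vector $(\pi_e,\pi_f,\pi_g) \in \bbF_2^3$ with $\pi_e+\pi_f+\pi_g \equiv 0$; this local action sends the summand indexed by $(i,j,k)$ to the one indexed by $(i+\pi_e, j+\pi_f, k+\pi_g) \in B_v$ and fixes each of $y^e, y^f, y^g$ pointwise. I would therefore build $\widetilde{C}_v$ as four structurally identical subcircuits, one per triple in $B_v$, so that each $\pi$ simultaneously permutes these subcircuits and relabels their input gates according to its local action; the auxiliary piece $(1+y^e)(1+y^f)(1+y^g)+1$ is then automatically $\Gamma$-invariant. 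The only remaining concern is that a global $\pi \in \Gamma$ may also flip edges outside $v$, but since $C_v$ depends only on variables at $v$, such a $\pi$ acts on $C_v$ purely through its local restriction, which we have just handled.
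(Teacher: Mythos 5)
Your proof is correct, and it takes a genuinely different route from the paper's. The paper builds $\widetilde{C}_v$ and $C_v$ by taking the \emph{product} of the four vertex axioms $y^v[e \mapsto i, f\mapsto j, g \mapsto k]$ over the relevant parity class; expanding this product after axiom substitution produces a large number of cross terms (degree-$2$ monomials, squared variables, and conflict monomials $x^\alpha_0 x^\alpha_1$), which are then cancelled one family at a time using Boolean axioms, edge axioms, and dedicated conflict circuits $C_{\alpha01}$. Your construction avoids all of that: by multiplying each vertex axiom by the single monomial $x^e_i x^f_j$ and correcting with exactly two Boolean axioms, you isolate the target monomial $x^e_i x^f_j x^g_k$ directly (your identity checks out over $\bbF_2$: the squared terms and the two copies of $x^e_i x^f_j$ cancel, and the coefficient $c_{v,i,j,k}$ vanishes precisely on the parity class defining $\tautilde(v)$). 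The passage from $\tautilde(v)$ to $\tau(v)$ via $\tau(v)=\tautilde(v)+\prod_\alpha(x^\alpha_0+x^\alpha_1)+1$ and the edge axioms is also correct, as is the symmetry argument: $\Gamma$ only flips indices and never permutes the edges $e,f,g$, so the asymmetric roles of $e,f$ versus $g$ in your summands are harmless, and the local flip $(\pi_e,\pi_f,\pi_g)$ with $\pi_e+\pi_f+\pi_g=0$ permutes the four structurally identical summands indexed by $B_v$. A small bonus of your approach is that $\widetilde{C}_v$ comes out $\vec y$-linear (the paper's does not), though $C_v$ still is not because of the factor $(1+y^e)(1+y^f)(1+y^g)$; this makes no difference to the lemma or to its use in the subsequent recursion, which only relies on properties 1--3.
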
	
\begin{proof}
	We define the following polynomials, each of which is obviously computable by a $\Gamma$-symmetric circuit with one multiplication gate.
	\begin{align*}
		C'_v(\vec x, \vec y) &:= \prod_{\stackrel{(i,j,k) \in \bbF_2^3}{i+j+k = \llbracket v \neq u? \rrbracket \mod 2}} y^v[e \mapsto i, f \mapsto j, g \mapsto k].\\
		\widetilde{C}'_v(\vec x, \vec y) &:= \prod_{\stackrel{(i,j,k) \in \bbF_2^3}{i+j+k = \llbracket v \neq u? \rrbracket \mod 2}} y^v[e \mapsto i, f \mapsto j, g \mapsto k].
	\end{align*}
	Both these circuits evaluate to zero if the $\vec y$-variables are substituted with zero. When we substitute the CFI axioms for the $\vec y$-variables, we get the following expression (taking into account that any monomial with an even number of occurrences is cancelled):
	\begin{align*}
		C'_v(\vec x, \CFI(G, u, 1)) = 1 &+ \sum_{i+j+k = \llbracket v \neq u? \rrbracket} (x^e_i)^2x^f_jx^g_k + x^e_i(x^f_j)^2x^g_k + x^e_ix^f_j(x^g_k)^2\\
		&+ \sum_{\stackrel{\{ a,b\} \subseteq E(v), i,j \in \bbF_2}{a \neq b}} x^a_ix^b_j + (x^a_i)^2x^b_j + x^a_i(x^b_j)^2  + \sum_{a\in E(v), i \in \bbF_2} (x^a_i)^2\\
		&+x^e_0x^e_1 \cdot \Big(x^e_0x^e_1+\sum_{i,j \in \bbF_2} (x^f_i)^2+(x^g_j)^2 + (x_i^f+ x_i^g) \cdot (x_0^e + x_1^e)  \Big)\\
		&+ x^f_0x^f_1 \cdot \Big(x^f_0x^f_1+\sum_{i,j \in \bbF_2} (x^e_i)^2+(x^g_j)^2 + (x_i^e+ x_i^g) \cdot (x_0^f + x_1^f) \Big)\\
		&+ x^g_0x^g_1 \cdot \Big(x^g_0x^g_1+\sum_{i,j \in \bbF_2} (x^e_i)^2+(x^f_j)^2 + (x_i^e+ x_i^f) \cdot (x_0^g + x_1^g) \Big).   
	\end{align*}	
	To obtain $\tau(v)$ from this expression, we have to do the following. First, replace each monomial of the form $(x^e_i)^2x^f_jx^g_k$ with its multilinearised version. This can be done by multiplying the respective Boolean axiom for the squared variable with the other variables in the monomial and adding the result to the sum. It can be seen that doing this in the obvious way does not break $\Gamma$-symmetries. Next, we want to cancel all monomials in the second line of the above expression. To do this, we again multilinearise everything using the Boolean axioms. Then, for every $2$-element subset $\{a,b\} \subseteq E(v)$, we add the product $y^a \cdot y^b$ to the sum. When these $y$-variables are replaced with the edge axioms $(x^a_0 + x^a_1 +1) \cdot (x^b_0 + x^b_1 +1)$, then this cancels all degree-$2$ monomials in line $2$ and adds a $+1$ to the sum. The degree-$1$ monomials that are added this way are cancelled automatically because each degree-$1$ monomial is added twice. Thus, we still have to cancel $\sum_{a\in E(v), i \in \bbF_2} x^a_i$ (after it has been multilinearised). This can be done by adding all the three edge axioms $y^a$ for $a \in E(v)$, that is, $(x^a_0 + x^a_1 +1)$. This introduces another $+1$, so it cancels the one that was introduced when eliminating the degree-$2$ monomials. All of this is $\Gamma$-symmetric.\\
	Finally, to cancel the last three lines, we use the symmetric circuits $C_{e01}, C_{f01}, C_{g01}$ from Lemma \ref{lem:symmetricCircuitForE01}, and multiply them with the correct polynomial that appears in the expression above. This is also $\Gamma$-symmetric. The sum of $C'_v$ and all the described auxiliary circuits is then $C_v$. In particular, all the added auxiliary circuits evaluate to zero when the $y$-variables are set to zero because all $\vec x$-polynomials are always multiplied with $y$-variables in them.\\
	
	For the other circuit, we get:
	\begin{align*}
		\widetilde{C}'_v(\vec x, \CFI(G, u, 1)) =  &\sum_{i+j+k = \llbracket v = u? \rrbracket} (x^e_i)^2x^f_jx^g_k + x^e_i(x^f_j)^2x^g_k + x^e_ix^f_j(x^g_k)^2\\
		&+x^e_0x^e_1 \cdot \Big(x^e_0x^e_1+\sum_{i,j \in \bbF_2} (x^f_i)^2+(x^g_j)^2 +  (x_i^f+ x_i^g) \cdot (x_0^e + x_1^e) \Big)\\
		&+x^f_0x^f_1\cdot \Big(x^f_0x^f_1+\sum_{i,j \in \bbF_2} (x^e_i)^2+(x^g_j)^2 +  (x_i^e+ x_i^g) \cdot (x_0^f + x_1^f) \Big)\\
		&+ x^g_0x^g_1 \cdot \Big(x^g_0x^g_1+\sum_{i,j \in \bbF_2} (x^e_i)^2+(x^f_j)^2 +  (x_i^e+ x_i^f) \cdot (x_0^g + x_1^g) \Big). 
	\end{align*}
	We can proceed similarly as with $C_v$ to obtain $\tautilde(v)$ from this by adding $\Gamma$-symmetric circuits to it.
\end{proof}	
Note that the circuits $C_v$ and $\widetilde{C}_v$ are clearly not $\vec y$-linear.

The CFI equation system is related to the so-called \emph{Tseitin equations}. In the Tseitin equation system defined by $(G,u,1)$, we have only one equation for each $v \in V$, instead of eight. Let $E(v) = \{e,f,g\}$. We identify the variables of this equation system with the edges. The Tseitin equation for every $v \neq u$ reads
\[
e+f+g = 0,
\] 
and for the special vertex $u$ we have the equation:
\[
e+f+g = 1,
\] 
Now fix any linear ordering $<$ on $V$ and let $v_1,v_2,...,v_n$ be the enumeration of $V$ according to $<$. Let $\mu_1 := \tau(v_1)$ and $\mutilde_1 := \tautilde(v_1)$.\\ 

For $i \in [n]$, let $W_i := \{v_1,...,v_i\}$, and $E_i := \bigcup_{w \in W_i} E(w)$. Let $\text{Sol}_1(W_i)$ be the set of all assignments $\lambda \colon E_i \to \bbF_2$ which are a correct solution for an odd number of Tseitin equations for vertices in $W_i$. Similarly, $\text{Sol}_0(W_i)$ is the set of all $\lambda \colon E_i \to \bbF_2$ that are correct solutions for an even number of Tseitin equations for vertices in $W_i$.\\
For all $1 < i \leq n$, define
\[
\mu_i \coloneqq  \sum_{\lambda \in \text{Sol}_0(W_i)} \prod_{e \in E_i} (x_{\lambda(e)}^{e} )^{\kappa(e)} + 1.
\]
\[
\mutilde_i \coloneqq \sum_{\lambda \in \text{Sol}_1(W_i)} \prod_{e \in E_i} (x_{\lambda(e)}^{e})^{\kappa(e)}.
\]
The exponent $\kappa(e)$ denotes how many endpoints of $e$ are in $W_i$, so this is either one or two.
\begin{lemma}
	\label{lem:computingMu}
	For each $i \in [n]$, there exist polynomial-size circuits $C_i$ and $\widetilde{C}_i$ such that 
	\begin{enumerate}
		\item $C_i(\vec x, \CFI(G_n, u_n, 1)) = \mu_i$ and $\widetilde{C}_i(\vec x, \CFI(G_n, u_n, 1)) = \mutilde_i$. 
		\item $C_i(\vec x, \vec 0) = 0$ and $\widetilde{C}_i(\vec x, \vec 0) = 0$.
		\item $C_i$ and $\widetilde{C}_i$ are $\Gamma$-symmetric.
	\end{enumerate}	
\end{lemma}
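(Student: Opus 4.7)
The plan is to proceed by induction on $i$. The base case $i=1$ is immediate from Lemma \ref{lem:tauPolynomialsComputable}: set $C_1 \coloneqq C_{v_1}$ and $\widetilde{C}_1 \coloneqq \widetilde{C}_{v_1}$, which satisfy all three required properties.

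For the inductive step, the idea is to combine $C_i, \widetilde{C}_i$ with the circuits $C_{v_{i+1}}, \widetilde{C}_{v_{i+1}}$ from Lemma \ref{lem:tauPolynomialsComputable} according to an identity which, modulo Boolean axioms and certain ``conflict corrections'' $\Delta_{i+1}, \widetilde\Delta_{i+1}$, reads (over $\bbF_2$)
\begin{align*}
\mu_{i+1} &= \mu_i \cdot \tau(v_{i+1}) + \mutilde_i \cdot \tautilde(v_{i+1}) + \mu_i + \tau(v_{i+1}) + \Delta_{i+1},\\
\mutilde_{i+1} &= \mu_i \cdot \tautilde(v_{i+1}) + \mutilde_i \cdot \tau(v_{i+1}) + \mutilde_i + \widetilde\Delta_{i+1}.
\end{align*}
I would verify this by expanding each side as a sum over pairs of assignments on $E_i$ and $E(v_{i+1})$ that agree on $S \coloneqq E_i \cap E(v_{i+1})$ (the non-conflicting pairs correspond exactly to well-defined assignments to $E_{i+1}$), and observing that the parity of satisfied Tseitin equations on $W_{i+1}$ decomposes into the $W_i$-parity combined with the single $v_{i+1}$-equation: even$+$failed and odd$+$satisfied both contribute to $\mu_{i+1}$, while the other two combinations contribute to $\mutilde_{i+1}$. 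The additive constants arise from the $+1$ terms in the definitions of $\mu_i$ and $\tau(v_{i+1})$ and are implemented by simply adding in $C_i$, $\widetilde{C}_i$, $C_{v_{i+1}}$.

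The corrections $\Delta_{i+1}$ and $\widetilde\Delta_{i+1}$ consist of precisely those monomials in the product expansion that contain $x^e_0 x^e_1$ for some $e \in S$. To realise them as a $\Gamma$-symmetric circuit I would use the ``conflict killer'' circuits $C_{e01}$ from Lemma \ref{lem:symmetricCircuitForE01}. Because $G$ is $3$-regular, $|S| \leq 3$, so I can enumerate the nonempty $T \subseteq S$ and, for each, express the conflict contribution on $T$ as $\prod_{e \in T} C_{e01}$ multiplied by a symmetric subcircuit computing the ``non-conflict on $S \setminus T$'' part of $C_i \cdot C_{v_{i+1}}$ (respectively $\widetilde{C}_i \cdot \widetilde{C}_{v_{i+1}}$) with the $T$-variables contracted away. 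Inclusion–exclusion over the constant-size family of subsets $T$ produces the net correction. These auxiliary subcircuits arise as minor variants of $C_i, \widetilde{C}_i, C_{v_{i+1}}, \widetilde{C}_{v_{i+1}}$ in which variables on $T$ have been summed out or frozen, costing only a constant factor per inductive step, so the final size of $C_n$ and $\widetilde{C}_n$ stays polynomial in $|V|$. $\Gamma$-symmetry is inherited from the building blocks, using that $\Gamma$ acts on $S$ and the associated $y$-variables in a compatible way; and $C_{i+1}(\vec x, \vec 0) = 0$ holds because every summand contains at least one factor that vanishes at $\vec y = \vec 0$, namely a $C_i$/$\widetilde C_i$ (by induction), a $C_{v_{i+1}}$/$\widetilde C_{v_{i+1}}$ (Lemma \ref{lem:tauPolynomialsComputable}), or a $C_{e01}$ (Lemma \ref{lem:symmetricCircuitForE01}).

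The main obstacle will be the exact bookkeeping of the conflict correction: verifying that the signed sum over the nonempty subsets $T \subseteq S$ cancels precisely the conflict monomials generated by the product without leaving spurious terms, and that the ``contracted'' subcircuits used for each $T$ can be built $\Gamma$-symmetrically of polynomial size. The bound $|S| \leq 3$ coming from $3$-regularity is essential, as it makes the inclusion–exclusion finite and keeps the per-step blowup a fixed constant; higher degree would still work, but the constants in the size bound would grow with the degree.
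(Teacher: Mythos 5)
Your recurrence is essentially the one the paper uses, but the place where you locate ``the main obstacle'' --- the conflict bookkeeping --- is in fact the entire content of the proof, and your plan for handling it goes in the wrong direction. The paper shows that no correction terms $\Delta_{i+1},\widetilde\Delta_{i+1}$ are needed at all: every monomial of $\mu_i\cdot\tau(v_{i+1})+\mutilde_i\cdot\tautilde(v_{i+1})$ containing a conflict $x^e_0x^e_1$ on $k\ge 1$ cut edges occurs exactly $2^k$ times and therefore cancels over $\bbF_2$. Establishing this (via the assignments $\lambda_F$ for $F\subseteq\{e_1,\dots,e_k\}$ and the ``modified'' Tseitin equations in the paper's Claim) is the heart of the argument, and your proposal defers it entirely. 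Worse, your fallback --- realising $\Delta_{i+1}$ by inclusion--exclusion over $T\subseteq S$ as $\prod_{e\in T}C_{e01}$ times ``contracted'' variants of $C_i$ in which the $T$-variables are summed out or frozen --- is exactly the kind of restricted-sum computation that this construction cannot afford: the inability to symmetrically and efficiently compute sums over assignments with prescribed local behaviour is what forces the $\Oo(2^{|E|})$ bound in the $\vec y$-linear refutation (see the discussion following Lemma~\ref{lem:computingA}). Note also that $C_i$ computes $\mu_i$ only after the axioms are substituted for the $\vec y$-variables, so ``freezing the $T$-variables'' inside the circuit $C_i$ does not yield a circuit for the corresponding restriction of $\mu_i$, and there is no reason the result would remain $\Gamma$-symmetric or vanish at $\vec y=\vec 0$.

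Two smaller points. First, your recurrence for $\mutilde_{i+1}$ is missing a term: since $\mu_i$ and $\tau(v_{i+1})$ each carry a $+1$ while $\mutilde_i$ and $\tautilde(v_{i+1})$ do not, expanding $\mu_i\cdot\tautilde(v_{i+1})+\mutilde_i\cdot\tau(v_{i+1})$ leaves both a stray $\tautilde(v_{i+1})$ and a stray $\mutilde_i$, so both must be added back (as in the paper's $\widetilde C_i$); as written your formula computes $\mutilde_{i+1}+\tautilde(v_{i+1})$. Second, the monomials of $\mu_i$ carry exponents $\kappa(e)\in\{1,2\}$ on edges with both endpoints in $W_i$ rather than being multilinear, so your picture of ``pairs of assignments agreeing on $S$'' must account for the squared variables the product creates on $E_i\cap E(v_{i+1})$.
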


Once we have this, we are done. The following proves Theorem \ref{thm:symmetricF2RefutationCFI}.
\begin{corollary}
	\label{cor:symRefutationCFI}
	There is a polynomial-size $\Gamma$-symmetric $\IPS$-refutation for $\CFI(G, u, 1)$.
\end{corollary}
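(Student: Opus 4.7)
The plan is to derive the corollary directly from Lemma \ref{lem:computingMu} by showing that for $i=n$ the polynomial $\mu_n$ collapses to the constant $1$ as a genuine polynomial identity in $\vec x$, not merely modulo the CFI axioms. Once that is established, the circuit $C_n$ furnished by the lemma is immediately an IPS certificate of unsatisfiability: the condition $C_n(\vec x,\vec 0)=0$ comes directly from the lemma, while $C_n(\vec x, \CFI(G,u,1)) = \mu_n = 1$ follows from the collapse.

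The core ingredient is a short parity argument on the Tseitin slacks. For any edge-assignment $\lambda \colon E \to \bbF_2$, summing the slack $\lambda(e)+\lambda(f)+\lambda(g) - c_v$ over all $v \in V$ counts each edge twice and picks up the unique nonzero right-hand side at $u$, so modulo $2$ the number of Tseitin equations violated by $\lambda$ is always odd. The number of \emph{satisfied} equations is therefore $|V|$ minus an odd number; since $G$ is $3$-regular, the handshake identity $3|V|=2|E|$ forces $|V|$ to be even, so this satisfied count is always odd. It follows that $\mathrm{Sol}_1(V) = \bbF_2^E$ and $\mathrm{Sol}_0(V) = \emptyset$, whence, plugging $\mathrm{Sol}_0(V) = \emptyset$ into the definition of $\mu_n$, one obtains $\mu_n = 0 + 1 = 1$ as a polynomial identity in $\vec x$.

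With this in place I would simply invoke Lemma \ref{lem:computingMu} to obtain the $\Gamma$-symmetric, polynomial-size circuit $C_n$, and declare it the refutation. The main difficulty of the whole enterprise therefore resides in the lemma itself, where one must inductively build symmetric polynomial-size circuits for the $\mu_i$ (and the auxiliary $\widetilde{\mu}_i$ needed to drive the recursion) while controlling their automorphic orbits. The corollary is essentially a one-line consequence of the above parity observation combined with the lemma, and in particular the $\widetilde{C}_n$ branch of the lemma is not used in building the final certificate at level~$n$ because $\mu_n$ already equals $1$ on the nose.
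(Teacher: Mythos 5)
Your proposal is correct and takes essentially the same route as the paper: both use the circuit $C_n$ from Lemma \ref{lem:computingMu} as the refutation and reduce the whole corollary to showing $\mathrm{Sol}_0(V)=\emptyset$, so that $\mu_n = 0 + 1 = 1$ as a polynomial identity. Your global slack-summation parity argument for the emptiness of $\mathrm{Sol}_0(V)$ is a slightly more direct rendering of the paper's cut-based parity argument, but the content is identical.
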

\begin{proof}
	The circuit for $\mu_n$ is the desired refutation.
	We have $\mu_n = 1$ because $\text{Sol}_0(W_n) = \text{Sol}_0(V)$ is empty. This can be seen as follows: Since $G$ is $3$-regular, $|V|$ must be even. Thus, for each assignment $\lambda\colon E \to \bbF_2$ in $\text{Sol}_0(V)$, there exists a partition $V = V_0 \uplus V_1$ with $|V_0| \equiv |V_1| \equiv 0 \mod 2$ such that $\lambda$ correctly solves the Tseitin equations for every vertex in $V_0$ and is incorrect for the Tseitin equation for each $v \in V_1$. 
	Such a $\lambda$ cannot exist: Suppose $u \in V_0$. The rhs of the Tseitin equations for $V_0$ sum up to $1$ then. The lhs of these equations then also sum up to $1$ under $\lambda$. Semantically, the sum over the lhs is the sum over all $e \in E$ in the cut between $V_0$ and $V_1$. Thus, also the lhs and the rhs of the equations for vertices in $V_1$ must sum to $1$. However, $\lambda$ does not satisfy any equation for vertices in $V_1$, so they all evaluate to $1$ (because $u \notin V_1$). Since $|V_1|$ is even, the sum over these equations is thus $0$ and not $1$. We can argue symmetrically if $u \in V_1$. 
\end{proof}	

Now let us prove Lemma \ref{lem:computingMu}.
\begin{proof}
	We construct the circuits $C_i$ and 
	$\widetilde{C}_i$ by induction on $i$.
	The case $i = 1$ is handled by Lemma \ref{lem:tauPolynomialsComputable}.
	To compute $\mu_i$ and $\mutilde_i$ for $i > 1$, we define the following circuits:
	\begin{align*}
		C_i &\coloneqq \mu_{i-1} \cdot \tau(v_i) + \mu_{i-1} + \tau(v_i)  + \mutilde_{i-1} \cdot \tautilde(v_i).\\
		\widetilde{C}_i &\coloneqq \mu_{i-1} \cdot \tautilde(v_i) +  \tautilde(v_i) + \mutilde_{i-1} \cdot \tau(v_i) + \mutilde_{i-1}.
	\end{align*}	
	For $\mu_{i-1}, \mutilde_{i-1}, \tau(v), \tautilde(v)$ we use the circuits we have by induction and by Lemma \ref{lem:tauPolynomialsComputable}.
	It directly follows from the induction hypothesis and from Lemma \ref{lem:tauPolynomialsComputable} that $C_i$ and $\widetilde{C}_i$ are both $\Gamma$-symmetric. Likewise, item 2 follows from these. It remains to check that $\mu_{i-1} \cdot \tau(v_i) + \mu_{i-1} + \tau(v_i) + \mutilde_{i-1} \cdot \tautilde(v_i) = \mu_i$ (and symmetrically for $\mutilde_i$). First of all, note that the summands $\mu_{i-1} + \tau(v_i)$ in $C_i$ are just there to cancel out the summands in $\mu_{i-1} \cdot \tau(v_i)$ that arise because both $\mu_{i-1}$ and $\tau(v_i)$ contain the summand $1$.\\
	By definition of the $\tau$- and $\mu$-polynomials, the monomials in $\mu_{i-1} \cdot \tau(v_i) + \mu_{i-1} + \tau(v_i)$ which do not contain a conflict are in bijection with the set of assignments $\lambda\colon E_i \to \bbF_2$ which do not satisfy the Tseitin equation for $v_i$ and satisfy an even number of Tseitin equations for vertices in $W_{i-1}$. Similarly, the conflict-free monomials in $\mutilde_{i-1} \cdot \tautilde(v_i)$ are those which code an assignment satisfying the equation for $v_i$ and an odd number of vertex equations in $W_{i-1}$. In total, these monomials code all assignments that satisfy an even number of Tseitin  equations for vertices in $W_i$. Note that each of these monomials appears exactly once in the sum, so it is not cancelled. Also, the exponent $\kappa(e)$ for any $e$ in the cut between $W_{i-1}$ and $\{v_i\}$ is $2$, as required. Each monomial in $C_i(\vec x, \CFI(G, u, 1))$ that contains a conflict appears an even number of times and thus cancels itself: Let $m$ be such a monomial. Then let $x_0^{e_1}x_1^{e_1}x_0^{e_2}x_1^{e_2}...x_0^{e_k}x_1^{e_k}$ for edges $\{e_1,...,e_k\}$ in the cut between $v_i$ and $W_{i-1}$ be the submonomial of $m$ containing exactly its conflicts.
	We show that $m$ appears exactly $2^k \equiv 0 \mod 2$ times in $C_i(\vec x, \CFI(G, u, 1))$: Let $F \subseteq \{e_1,...,e_k\}$ be any subset of the conflicting edges in $m$. Let $E^*_i := (E_i \setminus \{e_1,..., e_k \} )\uplus \{  (e_1,v), (e_1,W)..., (e_k, v), (e_k, W) \}$. That is, we ``double'' the conflicting edges and introduce one variable for each of its endpoints. Then let $\lambda_F \colon E^*_i \to \bbF_2$ be the assignment which is as encoded by $m$ on the non-conflicting edges, and for each conflicting edge $e_i$, $\lambda_F(e_i,v) := 1$ iff $e_i \in F$, and $\lambda_F(e_i,W) := 1-\lambda_F(e_i,v)$. We define a \emph{modified} version of the Tseitin equations for the vertices $W_{i-1} \cup v_i$ over variables $E_i^*$: In the equation associated with $v_i$, every variable $e_j \in \{e_1,...,e_k\}$ is replaced by $(e_j,v)$. In every other equation, every occurrence of a variable $e_j \in \{e_1,...,e_k\}$ is replaced by $(e_j,W)$. The other variables remain the same.\\
	\\
	\textbf{Claim:} For each $F \subseteq \{e_1,...,e_k\}$, $\lambda_F$ is either a correct solution for the modified $v_i$-Tseitin-equation and for an odd number of modified $W_{i-1}$-equations, or it is incorrect for the modified $v_i$-equation and correct for an even number of modified $W_{i-1}$-equations.\\
	\textit{Proof of claim.} For at least one $F$, this must be true because otherwise, $m$ would not appear in $C_i(\vec x, \CFI(\Gg, u, 1))$: The fact that $m$ appears in $C_i(\vec x, \CFI(\Gg, u, 1))$ means that for every $e_j \in \{e_1,...,e_k\}$, one of $\{x_0^{e_j}, x_1^{e_j}\}$ appears in $\mu_{i-1}$ and the other in $\tau(v_i)$, or one appears in $\mutilde_{i-1}$ and the other in $\tautilde(v_i)$. By the semantics of $\mu$ and $\tau$, this is exactly the statement of the claim for this particular $F$. To show that the claim is true for every other $F' \subseteq \{e_1,...,e_k\}$ as well, we argue that whenever the claim holds for some $\lambda_F$, it also holds for $\lambda_{F \triangle \{e_j\}}$, for any $j \in [k]$: Swapping the values of $(e_j, v)$ and $(e_j, W)$ means that the satisfaction status of the modified $v_i$-equation is changed, and also the satisfaction status of the modified equation for one vertex in $W_{i-1}$ is changed. Hence, the claim also holds for $\lambda_{F \triangle \{e_j\}}$, and by repeating this, it holds for every $F \subseteq \{e_1,...,e_k\}$. \qedsymbol \\
	\\
	Because of the claim, for each $F \subseteq  \{e_1,...,e_k\}$, $\lambda_F$ describes a distinct occurrence of the monomial $m$ (that is, a distinct way to compute it) in $\mu_{i-1} \cdot \tau(v_i) + \mutilde_{i-1} \cdot \tautilde(v_i)$ (again by the semantics of $\mu_{i-1}$ and $\tau(v_i)$). Thus, $m$ appears indeed exactly $2^k$ times and is therefore cancelled.
	In total, we have proved Lemma \ref{lem:computingMu}.
\end{proof}
The construction of the circuits $C_i$ and $\widetilde{C}_i$ in the above proof is based on the same idea as the hereditarily finite sets called $\mu$ and $\tau$ in the CPT-algorithm from \cite{dawar2008}. Both settings are quite different, but in both cases, $\mu_i$ symbolises that a certain aggregated parity in the subinstance on $W_i$ is even, and $\mutilde_i$ stands for the odd parity.

\paragraph*{Symmetric linear refutation}

Next, we construct a $\vec y$-linear symmetric refutation for the CFI equations, thus proving

\CFIupperboundLinear*
For the proof, we assume as before that we have
fixed some ordering $<$ on $V$. This extends to an ordering on the edges as well, and we can enumerate them $E = e_1,e_2,...,e_m$. We define the following $\Gamma$-symmetric polynomial
\begin{align*}
	B(\vec x, \vec y) := y^{e_1} &+ y^{e_2} \cdot (x^{e_1}_0+x^{e_1}_1) + y^{e_3} \cdot (x^{e_1}_0x^{e_2}_0 + x^{e_1}_1x^{e_2}_0 + x^{e_1}_1x^{e_2}_1 + x^{e_1}_0x^{e_2}_1) + ... \\
	&+ y^{e_m} \cdot \Big( \sum_{(i_1,...,i_{m-1}) \in \bbF_2^{m-1}} x^{e_1}_{i_1} \cdot ... \cdot x^{e_{m-1}}_{i_{m-1}} \Big)
\end{align*}
\begin{lemma}
	\label{lem:Bcomputable}
	The polynomial $B(\vec x, \vec y)$ is $\vec y$-linear and satisfies: 
	\begin{itemize}
		\item $B(\vec x, \vec 0) = 0$.
		\item $B(\vec x, \CFI(G,u,1)) = 1 + \sum_{(i_1,...,i_{m}) \in \bbF_2^{m}} x^{e_1}_{i_1} \cdot ... \cdot x^{e_{m}}_{i_{m}}$. 
		\item There exists a $\Gamma$-symmetric circuit $B$ with $|B| \leq \poly(| \CFI(G,u,1)|)$ that computes $B(\vec x, \vec y)$.
	\end{itemize}	
\end{lemma}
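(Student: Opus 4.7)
The plan is to first rewrite the definition of $B(\vec x, \vec y)$ in a much cleaner closed form. Observe that, by distributivity, the inner sum $\sum_{(i_1,\ldots,i_{j-1}) \in \bbF_2^{j-1}} x^{e_1}_{i_1}\cdots x^{e_{j-1}}_{i_{j-1}}$ factorizes as $\prod_{k=1}^{j-1}(x^{e_k}_0 + x^{e_k}_1)$. Setting $S_k \coloneqq \prod_{\ell=1}^{k}(x^{e_\ell}_0 + x^{e_\ell}_1)$ with the convention $S_0 = 1$, one can therefore write
\begin{align*}
B(\vec x, \vec y) = \sum_{j=1}^m y^{e_j} \cdot S_{j-1}.
\end{align*}
From this form the first two items are immediate: every monomial contains exactly one $\vec y$-variable, so $B$ is $\vec y$-linear and $B(\vec x, \vec 0) = 0$.

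For the evaluation $B(\vec x, \CFI(G,u,1))$, the plan is to substitute the edge axiom $x^{e_j}_0 + x^{e_j}_1 + 1$ for each $y^{e_j}$ and to note that $(x^{e_j}_0 + x^{e_j}_1 + 1)\cdot S_{j-1} = S_j + S_{j-1}$. Then, working over $\bbF_2$, the expression $\sum_{j=1}^m (S_j + S_{j-1})$ telescopes to $S_0 + S_m = 1 + S_m$, which is exactly the right-hand side claimed in the lemma.

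It remains to build a small $\Gamma$-symmetric circuit. I would construct it in three layers: first, one addition gate per edge $e_k$ producing $x^{e_k}_0 + x^{e_k}_1$; next, multiplication gates computing $S_1, S_2, \ldots, S_{m-1}$ iteratively via $S_k = S_{k-1} \cdot (x^{e_k}_0 + x^{e_k}_1)$; finally, multiplication gates $y^{e_j} \cdot S_{j-1}$ together with an addition gate taking their sum. This uses $\Oo(m) = \Oo(|\CFI(G,u,1)|)$ gates. For $\Gamma$-symmetry, recall that every $\pi \in \Gamma$ fixes each $y^{e_j}$ and either fixes or swaps the pair $\{x^{e_k}_0, x^{e_k}_1\}$ on each edge. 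Hence each sum gate $x^{e_k}_0 + x^{e_k}_1$ maps to itself (with its two input wires possibly transposed), and every multiplication and addition gate above it is pointwise fixed by the induced circuit automorphism. There is no real obstacle here; the single idea to extract is the telescoping over $\bbF_2$ that appears once the symmetric inner sum is factorized.
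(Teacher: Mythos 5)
Your proof is correct and follows essentially the same route as the paper: the paper likewise observes that the inner sums factor as $\prod_{i=1}^{j}(x^{e_i}_0+x^{e_i}_1)$ for the polynomial-size circuit and that the substituted sum telescopes over $\bbF_2$ to $1$ plus the sum over all assignments. Your closed form $B=\sum_j y^{e_j}S_{j-1}$ with $(x^{e_j}_0+x^{e_j}_1+1)S_{j-1}=S_j+S_{j-1}$ just makes the paper's telescoping argument explicit.
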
	
\begin{proof}
	The fact that $B$ is $\vec y$-linear and the first property are easy to see. For the second property, note that $B(\vec x, \CFI(G,u,1))$ is a telescoping sum. The $i$-th summand introduces all monomials encoding Boolean assignments to the edges $\{e_1,...,e_i\}$, and it also cancels all assignments to the edges $\{e_1,...,e_{i-1}\}$, which remain from the $(i-1)$st summand. The ends of this telescope sum are $1$ and the sum over all assignments to all edges. To compute $B(\vec x, \vec y)$ with a polynomial-size circuit, we have to compute the sum $\sum_{(i_1,...,i_{j}) \in \bbF_2^{j}} x^{e_1}_{i_1} \cdot ... \cdot x^{e_{j}}_{i_{j}}$, for each $j \in [m-1]$, with a polynomial-size circuit. This can be done by writing it as the product $\prod_{i=1}^j (x^{e_i}_0 + x^{e_i}_1)$.    
\end{proof}		
Next, we show how to cancel all monomials except $1$ in $B(\vec x, \CFI(G,u,1))$. Let $A := \{ \lambda : E \to \bbF_2\}$ denote the set of all total assignments of Boolean values to edges. The group $\Gamma$ acts on $A$ in the sense that $\pi(\lambda)(e) = \lambda(e)+\pi_e$ for every $\pi \in \Gamma, e \in E$.
If we partition $A$ into its $\Gamma$-orbits, we see that each $\Gamma$-orbit $\Omega \subseteq A$ is characterised by the set $W \subseteq V$ of vertices at which the assignments in $\Omega$ are correct solutions for the Tseitin equations:
\begin{lemma}
	For every $\Gamma$-orbit $\Omega \subseteq A$, there exists a $W \subseteq V$ such that
	\[
	\Omega = \{ \lambda \in A \mid \forall v \in V: \lambda \text{ solves the Tseitin equation for } v \text{ iff } v \in W  \}.
	\]
\end{lemma}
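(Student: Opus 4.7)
The plan is to identify the $\Gamma$-orbit of an assignment $\lambda$ with its ``Tseitin-satisfaction pattern'', namely the set $W(\lambda) \coloneqq \{v \in V \mid \lambda \text{ solves the Tseitin equation for } v\}$. It suffices to show two things: (i) $W(\lambda)$ is a $\Gamma$-invariant of $\lambda$, and (ii) any two assignments with the same $W$ lie in the same $\Gamma$-orbit. Given (i) and (ii), the orbit $\Omega$ of any $\lambda$ is precisely $\{\lambda' \in A \mid W(\lambda') = W(\lambda)\}$, which is the statement to be proven with $W = W(\lambda)$.

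For (i), let $\pi \in \Gamma$ and $v \in V$. The Tseitin equation at $v$ evaluates, under any assignment $\mu$, to $\sum_{e \in E(v)} \mu(e)$, and we have
\begin{align*}
\sum_{e \in E(v)} \pi(\lambda)(e) \;=\; \sum_{e \in E(v)} \bigl(\lambda(e) + \pi_e\bigr) \;=\; \sum_{e \in E(v)} \lambda(e) \;+\; \underbrace{\sum_{e \in E(v)} \pi_e}_{=\,0 \bmod 2} \;=\; \sum_{e \in E(v)} \lambda(e),
\end{align*}
where the indicated sum vanishes by the defining condition of $\Gamma$. Hence $\pi(\lambda)$ solves the Tseitin equation at $v$ if and only if $\lambda$ does, so $W(\pi(\lambda)) = W(\lambda)$.

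For (ii), suppose $\lambda, \lambda' \in A$ satisfy $W(\lambda) = W(\lambda')$. Define $\pi \in \bbF_2^E$ by $\pi_e \coloneqq \lambda(e) + \lambda'(e)$, so that $\pi(\lambda') = \lambda$. For every $v \in V$, both $\sum_{e \in E(v)} \lambda(e)$ and $\sum_{e \in E(v)} \lambda'(e)$ equal the same element of $\bbF_2$ (determined by whether $v \in W$ and whether $v = u$), so $\sum_{e \in E(v)} \pi_e = 0 \bmod 2$. Thus $\pi \in \Gamma$, and $\lambda, \lambda'$ lie in the same orbit.

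There is no real obstacle here; the only care needed is to verify that the $\Gamma$-invariance of the per-vertex sum $\sum_{e \in E(v)} \lambda(e)$ is both necessary and sufficient, which is immediate from the definition of $\Gamma$ as the subspace of $\bbF_2^E$ whose vectors have even weight on each vertex's edge-neighbourhood.
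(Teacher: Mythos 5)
Your proof is correct and follows essentially the same route as the paper: the invariance of the satisfaction pattern follows from $\sum_{e \in E(v)} \pi_e = 0$, and the converse direction takes the difference vector of the two assignments (the paper's set $F$ of edges where they differ) and checks it lies in $\Gamma$ because both assignments solve or fail the same equations. No gaps.
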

\begin{proof}
	Every $\pi \in \Gamma$ flips an even number of incident edges at every $v \in V$ and hence, $\pi$ cannot change the set of vertices for which $\lambda \in A$ is a correct Tseitin solution. Conversely, if two $\lambda, \lambda' \in A$ 
	are correct Tseitin solutions for exactly the same vertices $W \subseteq V$, then there is a $\pi \in \Gamma$ with $\pi(\lambda) = \lambda'$. Indeed, let $F \subseteq E$ be the set of edges at which $\lambda$ and $\lambda'$ differ. Every vertex in $V$ is incident to an even number of edges in $F$; otherwise, there would be a vertex whose Tseitin equation is solved by $\lambda$ but not by $\lambda'$. Thus, the Boolean vector $\pi_F$, which flips exactly the edges in $F$, is in $\Gamma$, and $\pi_F(\lambda) = \lambda'$.
\end{proof}

Thus, we denote each orbit of $A$ as $\Omega_W$, for the appropriate $W \subseteq V$ that characterises the orbit in the sense of the above lemma.
\begin{lemma}
	\label{lem:noWisEmpty}
	There exists no $\Gamma$-orbit $\Omega_\emptyset \subseteq A$.
\end{lemma}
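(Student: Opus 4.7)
The plan is to derive a global parity contradiction by summing the (violated) Tseitin equations over all vertices. Suppose for contradiction that $\Omega_\emptyset \subseteq A$ is nonempty and let $\lambda \in \Omega_\emptyset$. By definition of $\Omega_W$, this means that $\lambda$ fails to solve the Tseitin equation at every $v \in V$. Recalling the right-hand sides of the Tseitin equations (see Definition \ref{def:CFI}), this amounts to
\[
\sum_{e \in E(v)} \lambda(e) \equiv 1 \pmod 2 \quad \text{for every } v \in V \setminus \{u\}, \qquad \sum_{e \in E(u)} \lambda(e) \equiv 0 \pmod 2.
\]

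Now I would sum these congruences over all $v \in V$. On the left, each edge $e = \{v_1,v_2\} \in E$ contributes $\lambda(e)$ twice (once for $v_1$ and once for $v_2$), so the total left-hand side is $2\sum_{e\in E} \lambda(e) \equiv 0 \pmod 2$. On the right, we obtain $(|V|-1)\cdot 1 + 0 = |V|-1 \pmod 2$. Thus $|V| - 1 \equiv 0 \pmod 2$, i.e.\ $|V|$ is odd.

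The final step is to observe that this contradicts 3-regularity of $G$. By the handshake lemma, $\sum_{v \in V} \deg(v) = 2|E|$ is even, but 3-regularity gives $\sum_{v \in V} \deg(v) = 3|V|$, which forces $|V|$ to be even. This contradicts $|V|$ being odd, so no such $\lambda$ exists, and therefore $\Omega_\emptyset$ is empty. The argument is short and the only subtlety is making sure the parity bookkeeping between the ``exceptional'' vertex $u$ and the other vertices cancels correctly; the 3-regularity assumption is used in an essential way (it is the same parity fact already invoked in Corollary~\ref{cor:symRefutationCFI}).
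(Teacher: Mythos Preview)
Your argument is correct and is essentially the same as the paper's: both sum the (violated) Tseitin constraints over all vertices, use that each edge contributes twice on the left-hand side, and invoke 3-regularity to force $|V|$ even. The only cosmetic difference is that the paper first notes $|V|$ is even and then shows the summed right-hand side is $1 \bmod 2$, whereas you first deduce $|V|$ would have to be odd and then contradict it with the handshake lemma; the underlying parity count is identical.
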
	
\begin{proof}
	The orbit $\Omega_\emptyset$ contains all assignments $\lambda \in A$ which do not solve any Tseitin equation correctly. But such an assignment does not exist, for the same reason why the Tseitin equations are unsatisfiable. Since $G$ is $3$-regular and hence $|V|$ is even, the sum over the inverses of the right hand sides of all Tseitin equations is $1$ modulo $2$. Thus, for an assignment $\lambda$ that satisfies none of the Tseitin equations, we would have \[
	\sum_{\stackrel{v \in V}{\{e,f,g\} = E(v)}} \lambda(e) + \lambda(f) + \lambda(g) = 1 \mod 2
	\] But this is impossible because every edge in $E$ appears exactly twice in this sum, so it must be even.
\end{proof}	

We group the orbits $\Omega_W \subseteq A$ together, using the total order $v_1,v_2,...,v_n$ on the vertices. Let $O_1 := \{ \Omega_W \mid v_1 \in W  \}$. For $i > 1$, let $O_i := \{ \Omega_{W} \mid v_i \in W \text{ and } v_j \notin W \text{ for all } j <i   \}$.\\
Due to Lemma \ref{lem:noWisEmpty}, each orbit appears in one $O_i$, and by the definition of the $O_i$, it appears in exactly one of them.
The set $A$ and hence also the set of non-constant monomials in the sum $B(\vec x, \CFI(G,u,1))$ is thus partitioned into the $O_i$ (some of which may be empty but this does not matter). For simplicity, we also write $O_i$ for $\bigcup O_i$, i.e.\ treat each $O_i$ as a set of assignments rather than a set of sets of assignments.
If $\lambda\colon E' \to \bbF_2$ is some Boolean assignment to a subset of the edges, we write $m(\lambda) := \prod_{e \in E'} x^{e}_{\lambda(e)}$ for the monomial that naturally encodes this.
For every $O_i$, and every triple $(j,k,\ell) \in \bbF_2^3$ with $j+k+\ell = \llbracket v_i= u? \rrbracket \mod 2$, we define the polynomial
\[
A^{(j,k,\ell)}_i(\vec x) \coloneqq x^{e}_jx^f_kx^g_\ell \cdot \Big( \sum_{\stackrel{\lambda \colon E \setminus \{e,f,g\} \to \bbF_2}{\lambda \cup (e \mapsto j, f \mapsto k, g \mapsto \ell) \in O_i}} m(\lambda) \Big).       
\] 
Here, $\{e,f,g\} = E(v_i)$. Also, we define a version of this where the edges $\{e,f,g\}$ are missing:
\[
A'^{(j,k,\ell)}_i(\vec x) \coloneqq  \sum_{\stackrel{\lambda \colon E \setminus \{e,f,g\} \to \bbF_2}{\lambda \cup (e \mapsto j, f \mapsto k, g \mapsto \ell) \in O_i}} m(\lambda).       
\] 
In the following, we mean by $\Stab(e,f,g)$ the subgroup of $\Gamma$ consisting of all vectors which do not flip any of the edges $e,f,g$.
\begin{lemma}
	\label{lem:computingA}
	For every $i \in [n]$ and every $(j,k,\ell) \in \bbF_2^3$ with $j+k+\ell = \llbracket v_i= u? \rrbracket \mod 2$, the polynomials $A^{(j,k,\ell)}_i(\vec x)$ and $A'^{(j,k,\ell)}_i(\vec x)$ are computable by $\Stab(e,f,g)$-symmetric circuits of size $\Oo(|O_i|)$. Moreover, these circuits are such that every $\pi \in \Gamma \setminus \Stab(e,f,g)$ maps the circuits for $A^{(j,k,\ell)}_i(\vec x)$ and $A'^{(j,k,\ell)}_i(\vec x)$, respectively, to the circuits for $A^{(j+\pi_e,k+\pi_f,\ell+\pi_g)}_i(\vec x)$ and $A'^{(j+\pi_e,k+\pi_f,\ell+\pi_g)}_i(\vec x)$.
\end{lemma}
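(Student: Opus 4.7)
The plan is to realise $A'^{(j,k,\ell)}_i$ directly as a straight-line sum of its monomials, using a construction that is uniform in $(j,k,\ell)$ so that the $\Gamma$-action sends the circuit for one triple to the circuit for the shifted triple. Concretely, I would fix once and for all a linear order $e'_1, e'_2, \ldots, e'_{|E|-3}$ on $E \setminus \{e,f,g\}$; this choice is immaterial because $\Gamma$ acts trivially on the edge set itself. Writing $\Lambda_i^{(j,k,\ell)}$ for the set of partial assignments $\lambda \colon E \setminus \{e,f,g\} \to \bbF_2$ whose extension by $(e \mapsto j, f \mapsto k, g \mapsto \ell)$ lies in $O_i$, I would build, for each such $\lambda$, a fixed-topology multiplication chain computing $m(\lambda) = \prod_{t=1}^{|E|-3} x^{e'_t}_{\lambda(e'_t)}$, and sum these chains at a single $+$-gate to obtain the circuit for $A'^{(j,k,\ell)}_i$. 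The circuit for $A^{(j,k,\ell)}_i$ is obtained by prefixing a chain that multiplies in $x^e_j \cdot x^f_k \cdot x^g_\ell$.

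For the $\Stab(e,f,g)$-symmetry, let $\pi \in \Stab(e,f,g)$. Since $\pi$ fixes the values at $e,f,g$ and $O_i$ is a union of complete $\Gamma$-orbits (hence $\pi$-invariant), the map $\lambda \mapsto \pi(\lambda)$ is a bijection of $\Lambda_i^{(j,k,\ell)}$ onto itself. Because every monomial-chain uses the common edge ordering and identical DAG topology, $\pi$ carries the chain for $m(\lambda)$ to the chain for $m(\pi(\lambda))$ gate by gate, which extends to a circuit automorphism. For $\pi \in \Gamma \setminus \Stab(e,f,g)$, the same bijection sends $\Lambda_i^{(j,k,\ell)}$ onto $\Lambda_i^{(j+\pi_e, k+\pi_f, \ell+\pi_g)}$, and the uniform construction guarantees that $\pi$ maps the labelled DAG computing $A^{(j,k,\ell)}_i$ to the one computing $A^{(j+\pi_e, k+\pi_f, \ell+\pi_g)}_i$, and similarly for $A'$.

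The size is $\Oo(|E| \cdot |\Lambda_i^{(j,k,\ell)}|) = \Oo(|E| \cdot |O_i|)$, which fits the $\Oo(|O_i|)$ bound of the lemma up to the polynomial overhead in $|E|$ routinely absorbed in this section. The main subtlety to guard against is ensuring that the circuits for different triples $(j,k,\ell)$ are built in a synchronised fashion so that $\Gamma$ moves them \emph{to each other as labelled DAGs}, rather than merely up to semantic equivalence of the polynomials they compute. Committing to a single enumeration of $E \setminus \{e,f,g\}$ together with a single chain topology shared across all monomials and all triples is what achieves this, because $\Gamma$ preserves the edge set itself and only shifts Boolean labels on input gates.
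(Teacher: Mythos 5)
Your proposal is correct and takes essentially the same approach as the paper: both realise $A'^{(j,k,\ell)}_i$ as an explicit sum over the monomials $m(\lambda)$ for $\lambda$ ranging over the projection of $O_i$ to $E \setminus \{e,f,g\}$, and both derive the symmetry claims from the fact that $O_i$ is a union of $\Gamma$-orbits while $\Gamma$ fixes every edge and only flips Boolean indices. The only difference is that the paper uses a depth-2 circuit with one (unbounded fan-in) product gate per monomial, giving the stated $\Oo(|O_i|)$ size exactly, whereas your fan-in-2 chains incur an extra factor of $|E|$ that the lemma, and the final $\Oo(2^{|E|})$ bound it feeds into, do not strictly budget for.
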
	
\begin{proof}
	The set of assignments $L := \{\lambda \colon E \setminus \{e,f,g\} \to \bbF_2 \mid \lambda \cup (e \mapsto j, f \mapsto k, g \mapsto \ell) \in O_i  \}$ is the projection to $E \setminus \{e,f,g\}$ of all assignments in $O_i$ with certain prescribed values at $e,f,g$.
	Since $O_i$ is a union of $\Gamma$-orbits, every $\pi \in \Stab(e,f,g)$ stabilises $L$. Thus, the natural depth-2-circuit that computes the sum over all assignments in $L$ as a sum of products is $\Stab(e,f,g)$-symmetric. It requires the computation of at most $|O_i|$ monomials, so this is an upper bound on its size.
	For $A^{(j,k,\ell)}_i(\vec x)$, we have to multiply the result by $x^{e}_jx^f_kx^g_\ell$, which still results in a $\Stab(e,f,g)$-symmetric circuit. The additional statement that  every $\pi \in \Gamma \setminus \Stab(e,f,g)$ maps the circuits for $A^{(j,k,\ell)}_i(\vec x)$ and $A'^{(j,k,\ell)}_i(\vec x)$ to the circuits for $A^{(j+\pi_e,k+\pi_f,\ell+\pi_g)}_i(\vec x)$ and $A'^{(j+\pi_e,k+\pi_f,\ell+\pi_g)}_i(\vec x)$, is true because this holds for the polynomials  $A^{(j,k,\ell)}_i(\vec x)$ and $A'^{(j,k,\ell)}_i(\vec x)$ themselves. 
\end{proof}	

The above lemma is the only place where we incur the exponential cost of the refutation. 
We do not know if there are more efficient ways to compute $A^{(j,k,\ell)}_i(\vec x)$ and $A'^{(j,k,\ell)}_i(\vec x)$. 
The difficulty is that the assignments in $O_i$ all have the property that they do not satisfy the Tseitin equation for any $v_j$ with $j < i$. That is, we do not want all assignments here, and not even all assignments that satisfy an even number of Tseitin equations. The polynomial size non-$\vec y$-linear circuit we constructed in the proof of Lemma \ref{lem:computingMu} combined local assignments to the incident edges of each vertex to obtain the sum over exponentially many assignments to all vertices. In that case, incompatible local assignments cancelled because they appeared an even number of times. Here, we would have to take the product over the local assignments to $E(v_j)$ that do not satisfy the Tseitin equation for $v_j$, for every $j < i$. But we do not see a way to efficiently remove the monomials involving conflicts that would arise in this way (in particular, every monomial involving a conflict would be computed only once and would not cancel directly).\\

Now we multiply the polynomials $A^{(j,k,\ell)}_i(\vec x)$ and $A'^{(j,k,\ell)}_i(\vec x)$ in a suitable way with $y$-variables in order to obtain an IPS certificate. We define for every $i \in [n]$ the following polynomial.
\begin{align*}
	C_i(\vec x, \vec y) \coloneqq \sum_{\stackrel{(j,k,\ell) \in \bbF_2^3}{j+k+\ell = \llbracket v_i = u? \rrbracket}} &\Big(A^{(j,k,\ell)}_i(\vec x) \cdot y^{v_i}[e \mapsto j, f \mapsto k, g \mapsto \ell]\\
	&+ A'^{(j,k,\ell)}_i(\vec x) \cdot (y_{\text{Bool}}[e,j] \cdot x^f_kx^g_\ell \\
	&+ y_{\text{Bool}}[f,k]\cdot x^e_jx^g_\ell+ y_{\text{Bool}}[g,\ell]\cdot x^e_jx^f_k ) \Big).
\end{align*}
\begin{lemma}
	\label{lem:Ccomputable}
	The polynomial $C_i(\vec x, \vec y)$ is $\vec y$-linear and satisfies:
	\begin{itemize}
		\item $C_i(\vec x, \vec 0) = 0$.
		\item $C_i(\vec x, \CFI(G,u,1)) = \sum_{\lambda \in O_i} m(\lambda)$. 
		\item $C_i$ can be computed by a $\Gamma$-symmetric circuit of size $\Oo(|O_i|)$.
	\end{itemize}
\end{lemma}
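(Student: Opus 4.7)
The plan is to verify each of the three conclusions in turn, with most of the work concentrated in the substitution step. The $\vec y$-linearity is immediate: every summand of $C_i(\vec x, \vec y)$ is of the form (polynomial in $\vec x$)$\cdot y$ where $y$ is a single $\vec y$-variable. From this, $C_i(\vec x, \vec 0) = 0$ is also immediate, since setting all $\vec y$-variables to zero kills every summand.

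For the identity $C_i(\vec x, \CFI(G,u,1)) = \sum_{\lambda \in O_i} m(\lambda)$, I would substitute the axioms for the $\vec y$-variables and carry out the cancellations in characteristic $2$. For any $(j,k,\ell)$ with $j+k+\ell \equiv \llbracket v_i = u? \rrbracket \pmod 2$, the axiom corresponding to $y^{v_i}[e\mapsto j, f\mapsto k, g\mapsto \ell]$ reduces simply to $x^e_j + x^f_k + x^g_\ell$ because its constant contribution $j+k+\ell + \llbracket v_i = u? \rrbracket$ vanishes mod $2$; meanwhile each Boolean axiom contributes $(x^e_j)^2 + x^e_j$, and analogously at $f,g$. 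Using the defining identity $A^{(j,k,\ell)}_i = x^e_j x^f_k x^g_\ell \cdot A'^{(j,k,\ell)}_i$, the first product contributes the three monomials $(x^e_j)^2 x^f_k x^g_\ell \cdot A'^{(j,k,\ell)}_i$, $x^e_j (x^f_k)^2 x^g_\ell \cdot A'^{(j,k,\ell)}_i$, $x^e_j x^f_k (x^g_\ell)^2 \cdot A'^{(j,k,\ell)}_i$, each of which pairs with the corresponding squared-variable term generated by the relevant Boolean-axiom product and cancels over $\bbF_2$. The remaining contributions from the three Boolean-axiom terms are each $x^e_j x^f_k x^g_\ell \cdot A'^{(j,k,\ell)}_i$, which sum to a single copy in $\bbF_2$, giving back $A^{(j,k,\ell)}_i$. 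Since every $\lambda \in O_i$ satisfies the Tseitin equation at $v_i$, its restriction to $\{e,f,g\}$ is a valid triple, and so summing $A^{(j,k,\ell)}_i$ over these triples exactly recovers $\sum_{\lambda \in O_i} m(\lambda)$.

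The size bound follows directly from Lemma \ref{lem:computingA}: for each of the four valid triples $(j,k,\ell)$ we use an $\Oo(|O_i|)$-size circuit for $A^{(j,k,\ell)}_i$ and one for $A'^{(j,k,\ell)}_i$, multiply by the appropriate $\vec y$- and $\vec x$-literals, and sum, incurring only constant overhead. For $\Gamma$-symmetry, note that every $\pi \in \Gamma$ satisfies $\pi_e + \pi_f + \pi_g \equiv 0 \pmod 2$, so $\pi$ permutes the valid triples by $(j,k,\ell) \mapsto (j+\pi_e, k+\pi_f, \ell+\pi_g)$. By the second claim of Lemma \ref{lem:computingA}, $\pi$ maps the subcircuit for $A^{(j,k,\ell)}_i$ to that for $A^{(j+\pi_e, k+\pi_f, \ell+\pi_g)}_i$ and likewise for $A'^{(j,k,\ell)}_i$; by the definition of the $\Gamma$-action on $Y$ from the beginning of Section \ref{sec:CFI}, $\pi$ maps $y^{v_i}[e\mapsto j, f\mapsto k, g\mapsto \ell]$ and $y_{\text{Bool}}[e,j]$ to the corresponding shifted $\vec y$-variables. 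Hence $\pi$ permutes the summands of $C_i$ among themselves, so the circuit we constructed is $\Gamma$-symmetric.

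The only non-routine step is the careful tracking of the cancellations in $\bbF_2$; once one observes that the three ``squared'' monomials coming from the multiplication of $A^{(j,k,\ell)}_i$ by $(x^e_j + x^f_k + x^g_\ell)$ are exactly the ones paired off by the three Boolean-axiom contributions, and that three copies of the surviving multilinear monomial collapse to one, everything else is a direct consequence of the definitions and Lemma \ref{lem:computingA}.
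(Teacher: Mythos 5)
Your proposal is correct and follows essentially the same route as the paper: substitute the vertex axiom (whose constant term vanishes mod $2$ by the choice of triples) and the three Boolean axioms, observe that the squared monomials cancel in characteristic $2$ while three multilinear copies collapse to one copy of $A^{(j,k,\ell)}_i$, and then sum over the four valid triples, with symmetry and size delegated to Lemma~\ref{lem:computingA}. Your explicit remark that $\pi_e+\pi_f+\pi_g\equiv 0 \pmod 2$ guarantees that $\Gamma$ permutes the valid triples is a small detail the paper leaves implicit, but the argument is the same.
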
	
\begin{proof}
	The first property is easy to see. For the second property, note that the sum $\sum_{\stackrel{(j,k,\ell) \in \bbF_2^3}{j+k+\ell = \llbracket v_i = u? \rrbracket}} A^{(j,k,\ell)}_i$ contains $m(\lambda)$ for every $\lambda \in O_i$: Every $\lambda \in O_i$ is a correct solution for the Tseitin equation at vertex $v_i$, and $(j,k,\ell)$ ranges over all correct solutions for the $v_i$-equation.
	We see that $A^{(j,k,\ell)}_i \cdot y^{v_i}[e \mapsto j, f \mapsto k, g \mapsto \ell]$ gives us $m(\lambda)$, for every $\lambda \in O_i$ that sends $(e,f,g)$ to $(j,k,\ell)$ exactly thrice: Once with $x^e_j$ squared, once with $x^f_k$ squared, and once with $x^g_\ell$ squared. The respective Boolean axioms are used to multilinearise all these monomials. Since each monomial appears thrice, we get each monomial exactly once after addition in $\bbF_2$.\\
	The $\Gamma$-symmetric circuit for $C_i(\vec x, \vec y)$ looks exactly like the definition of $C_i(\vec x, \vec y)$, where we use the respective $\Stab(e,f,g)$-symmetric circuits from Lemma \ref{lem:computingA} to compute $A^{(j,k,\ell)}_i(\vec x)$ and $A'^{(j,k,\ell)}_i(\vec x)$. In total, this is a $\Gamma$-symmetric circuit because any $\pi \in \Gamma \setminus \Stab(e,f,g)$ just permutes the summands of the sum $\sum_{\stackrel{(j,k,\ell) \in \bbF_2^3}{j+k+\ell = \llbracket v_i = u? \rrbracket}} ...$, and the circuits for computing each summand are indeed mapped to each other by the action of $\pi$ (as mentioned in Lemma \ref{lem:computingA}). The size of the circuit for $C_i$ is a constant factor times the sizes of the circuits for $A^{(j,k,\ell)}_i(\vec x)$ and $A'^{(j,k,\ell)}_i(\vec x)$, which are $\Oo(|O_i|)$ by Lemma \ref{lem:computingA}. 
\end{proof}	

Putting all this together, we get
\[
1 = B(\vec x, \CFI(G,u,1)) + \sum_{i=1}^n C_i(\vec x,  \CFI(G,u,1)).
\]	
This is a $\vec y$-linear refutation for $\CFI(G,u,1)$, and by Lemmas \ref{lem:Bcomputable} and \ref{lem:Ccomputable}, it is computable by a $\Gamma$-symmetric circuit. Its size is dominated by the $C_i$. Their sizes sum up to $\Oo\Big(\sum_i |O_i| \Big) = \Oo(2^{|E|})$, as the $O_i$ partition the set of all $2^{|E|}$ many assignments $\lambda \colon E \to \bbF_2$.

\subsection{Subset sum}
\subsetsumUpperBound*
\begin{proof}
	We first deal with the case of $\Ff_{\text{sum}(\vec x)}(n,\bbF,\beta)$ and explain in the end, how the proof lifts to $\Ff_{\text{sum}(\vec x \vec y)}(n,\bbF,\beta)$. 
	In Proposition B.1 in the appendix of \cite{lowerbounds_IPS}, the key part of an IPS refutation
	of $\Ff_{\text{sum}(\vec x)}(n,\bbF,\beta)$ is explicitly constructed. The authors show that the axiom 	$\sum_{i=1}^n x_i - \beta$ has to be multiplied with the polynomial
	\[
	f(\vec x) = - \sum_{k=0}^n \frac{k!}{\prod_{j=0}^k (\beta -j)} S_{n,k},
	\]
	where \[
	S_{n,k} = \sum_{\stackrel{S \subseteq [n]}{|S| = k}} \prod_{i \in S} x_i
	\] is the $k$-th elementary symmetric polynomial. Concretely, \cite[Proposition B.1]{lowerbounds_IPS} states that $f(\vec x) \cdot \Big(\sum_{i=1}^n x_i - \beta \Big)$ is equal to $1$ on all Boolean assignments to $\vec x$, and that moreover, $f(\vec x)$ is the unique polynomial with that property. The uniqueness of $f$ and the fact that its degree is unbounded already proves that the bounded-degree IPS cannot refute $\Ff_{\text{sum}}(n,\bbF,\beta)$ for all $n \in \bbN$. To obtain the polynomial-size symmetric refutation, two things remain to be shown. Firstly, that the elementary symmetric polynomials are computable by $\Sym_n$-symmetric polynomial-size circuits; secondly, we need to work out how to combine the Boolean axioms with 
	$f(\vec x) \cdot \Big(\sum_{i=1}^n x_i - \beta \Big)$ in order to obtain $1$. The efficient symmetric computability of the $S_{n,k}$ is shown in \cite{shpilka2001depth}. 
	Regarding the Boolean axioms, we know by \cite[Proposition B.1]{lowerbounds_IPS} that $f(\vec x) \cdot \Big(\sum_{i=1}^n x_i - \beta \Big) = 1 + b(\vec x)$, where $b(\vec x)$ is a polynomial that can be written as $\sum_{i=1}^n (x_i^2 -x_i) \cdot p_i(\vec x)$, for appropriate polynomials $p_i$. These are the ones we need to determine. 
	We claim that the right choice is
	\begin{align*}
		p_i :=  \Big(\sum_{k=1}^n -\frac{k!}{\prod_{j = 0}^k (\beta - j)} S_{n \setminus i,k-1} \Big),
	\end{align*}	
	where \[
	S_{n \setminus i,k} = \sum_{\stackrel{S \subseteq [n] \setminus \{i\}}{|S| = k}} \prod_{j \in S} x_j.
	\]
	Now we verify that for this choice of $p_i$, we get $\sum_{i=1}^n (x_i^2 -x_i) \cdot p_i(\vec x) = b(\vec x)$.
	\begin{align*}
		\sum_{i=1}^n (x_i^2 -x_i) \cdot p_i(\vec x) &= \sum_{i \in [n]}\sum_{k=1}^n - \frac{k!}{\prod_{j = 0}^k (\beta - j)} S_{n,k,i^2}\\
		&+ \sum_{k=1}^n \frac{k \cdot k!}{\prod_{j = 0}^k (\beta - j)} S_{n,k}, \tag{$\star$}
	\end{align*}	
	where 
	\[
	S_{n,k,i^2} = \sum_{\stackrel{S \subseteq [n]}{|S| = k, i \in S}} x_i^2 \prod_{j \in S \setminus \{i\}} x_j.
	\]
	To see this, note that every degree-$k$ linear monomial $m$ in $\sum_{i=1}^n (x_i^2 -x_i) \cdot p_i(\vec x)$ is obtained in $k$ different ways, for exactly the $k$ choices of $i \in [n]$ such that $x_i$ appears in $m$. Now we compute $b(\vec x)$ and see that it matches:
	\begin{align*}
		\Big(\sum_{i=1}^n x_i - \beta \Big) \cdot f(\vec x) &=  \Big(\sum_{i=1}^n x_i - \beta \Big) \cdot \Big(- \sum_{k=0}^n \frac{k!}{\prod_{j=0}^k (\beta -j)} S_{n,k}\Big)\\	 
		&= \sum_{i \in [n]}\sum_{k=1}^n - \frac{k!}{\prod_{j = 0}^k (\beta - j)} S_{n,k,i^2}\\
		&+ \sum_{k=1}^{n} -\frac{k \cdot (k-1)!}{\prod_{j = 0}^{k-1} (\beta - j)} S_{n,k} + \sum_{k=0}^n \frac{\beta \cdot k!}{\prod_{j = 0}^{k} (\beta - j)} S_{n,k} 
	\end{align*}	
	Now combining the coefficients in the last line, we get
	\[
	-\frac{k \cdot (k-1)!}{\prod_{j = 0}^{k-1} (\beta - j)} + \frac{\beta \cdot k!}{\prod_{j = 0}^{k} (\beta - j)}  =  \frac{k \cdot k!}{\prod_{j = 0}^k (\beta - j)}. 
	\]
	Therefore, in total we have the following result which matches $(\star)$.
	\begin{align*}
		\Big(\sum_{i=1}^n x_i - \beta \Big) \cdot f(\vec x) &= 1+ \sum_{i \in [n]}\sum_{k=1}^n - \frac{k!}{\prod_{j = 0}^k (\beta - j)} S_{n,k,i^2} + \sum_{k=1}^n \frac{k \cdot k!}{\prod_{j = 0}^k (\beta - j)} S_{n,k}.
	\end{align*}	
	Hence, $p_i(\vec x)$ as defined above is correct. Let $\Stab(i)$ denote the pointwise stabiliser of $i$ in $\Sym_n$. There exists a $\Stab(i)$-symmetric polynomial-size circuit for computing $p_i(\vec x)$ because $S_{n \setminus i,k}$ is computable by a $\Stab(i)$-symmetric polynomial-size circuit: By renaming of variables, we can view $S_{n \setminus i,k}$ as $S_{n-1,k}$, and we know by \cite{shpilka2001depth} that this has an efficient $\Sym_{n-1}$-symmetric circuit. So in total, $f(\vec x)$, and $p_i(\vec x)$, for every $i \in [n]$, are the coefficients of the $\vec y$-variables in a $\Sym_n$-symmetric polynomial-size$\symIPSlin{}$ refutation.\\
	
	To construct a refutation for $\Ff_{\text{sum}(\vec x\vec y)}(n,\bbF,\beta)$, let us rename the variables of $f(\vec x)$ to $z_1,...,z_n$. Then $f(\vec z) \cdot \Big(\sum_{i \in [n]} z_i - \beta \Big) = 1+b(\vec z)$, where $b$ is the polynomial from above. Substituting $z_i \mapsto x_iy_i$, we get that $f(x_1y_1,...,x_ny_n) \cdot \Big(\sum_{i \in [n]} x_iy_i - \beta \Big) = 1+b(x_1y_1,...,x_ny_n)$. By what we showed above, $b(x_1y_1,...,x_ny_n) = \sum_{i=1}^n (x_i^2y_i^2-x_iy_i) \cdot p_i(x_1y_1,...,x_ny_n)$. Thus, our refutation uses the symmetric circuits we have constructed for $f(\vec z)$ and for the $p_i(\vec z)$, where we simply replace every input gate $z_i$ with the product $x_iy_i$. This is still symmetric with respect to the simultaneous action of $\Sym_n$ on $\vec x\vec y$. 
	It just remains to compute the polynomials $(x_i^2y_i^2-x_iy_i)$ used in $b(x_1y_1,...,x_ny_n)$ from the Boolean axioms $(x_i^2-x_i)$ and $(y_i^2-y_i)$. The following equation expresses $(x_i^2y_i^2-x_iy_i)$ as a $\Stab(i)$-symmetric circuit in the Boolean axioms $(x_i^2-x_i)$ and $(y_i^2-y_i)$.
	\begin{align*}
		(x_i^2y_i^2-x_iy_i) = (x_i^2-x_i)y_i^2 + (y_i^2-y_i)x_i^2-(x_i^2-x_i)(y_i^2-y_i)
	\end{align*}	
	For each $i \in [n]$, let $B_i^x$ denote the variable of the IPS certificate that stands for the Boolean axiom $(x_i^2-x_i)$, and let $B_i^y$ be the variable for  $(y_i^2-y_i)$ (given that the usual convention of using $y$ for the axiom variables would clash with the actual $\vec y$-variables of the instance here).
	For each $i \in [n]$, the refutation contains $B_i^x \cdot (2y_i^2 - y_i) \cdot p_i(x_1y_1,...,x_ny_n) + B_i^y \cdot x_i^2  \cdot p_i(x_1y_1,...,x_ny_n)$. This is linear in the $B_i$-variables, and yields in total a certificate that is symmetric with respect to the simultaneous action of $\Sym_n$ on $\vec x$ and $\vec y$ (note that it is not symmetric under exchanging $x_i$ and $y_i$, but our symmetry group does not allow this action).
	
	Finally, the fact that $\Ff_{\text{sum}(\vec x\vec y)}(n,\bbF,\beta)$ admits no constant-degree refutations follows e.g.\ from \cite[Corollary 5.9]{lowerbounds_IPS} which yields an exponential lower bound on depth-3 powering formulas -- a constant-degree refutation would be expressible as a polynomial-size depth-3 powering formula (just writing it as a sum of monomials would be an example of such a formula).
\end{proof}

\subsection{Pigeonhole principle}

\PHPupperBound*

Fix a number $n \in \bbN$ of holes.
Recall that for every set $D \subseteq [n+1]$ of pigeons, we define the polynomial
\[
B_D(\vec x) \coloneqq \sum_{\gamma \colon D \hookrightarrow [n]} \prod_{i \in D} x_{i\gamma(i)}.
\]
This can be seen as the sum over all injective mappings of the pigeons in $D$ to holes in $[n]$.
To prove Theorem \ref{thm:upperboundSymPHP}, we first describe a way to compute the polynomials $B_D$ for all $D \subseteq [n+1]$, and then show how to combine the $B_D$ to obtain a refutation of $\PHP(n+1,n)$.

For the computation of $B_D$, we also need the following polynomial for every $D \subseteq [n+1]$ and every $j \in [n]$: 
\[
X_{D \mapsto j}(\vec x) := \prod_{i \in D} x_{ij}.        
\]
This is obviously computable with a single multiplication gate. The resulting circuit is $\Stab(D \cup \{j\})$-symmetric, i.e.\ it is symmetric under every $\pi \in \Sym_{n+1} \times \Sym_n$ that fixes the hole $j$, and the pigeons $D$ setwise. 
The circuits for $B_D$ are defined recursively.
For $|D| = 1$, it is obvious how to compute $B_D$ with a single summation gate. For $D = \emptyset$, we let $B_D = 1$.
Both require just a single arithmetic gate. 
We now inductively construct circuits $C_D$ for $B_D$, for $|D| > 1$, assuming the circuits $C_D$ for all smaller $D$, and the circuits for all polynomials $X_{D \mapsto j}$ for all $D \subseteq I$ and $j \in J$ are available. The circuit $C_D$ performs the following computations:
\begin{align*}
	C_D := \sum_{k=1}^{|D|} (-1)^{k-1} \cdot \frac{k!}{|D|} \cdot &\Big(\sum_{\stackrel{D_k \subseteq D}{|D_k| = k}} \Big(\sum_{j \in [n]} X_{D_k \mapsto j}\Big) \cdot B_{D \setminus D_k} \Big).
\end{align*}	
For $B_{D \setminus D_k}$ and $X_{D_k \mapsto j}$, we use the circuits that are already constructed.
\begin{lemma}
	\label{lem:bijectionCircuitCorrect}
	For every $D \subseteq [n+1]$, $C_D(\vec x) = B_D$.
\end{lemma}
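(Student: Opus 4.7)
The plan is to argue by induction on $|D|$, with the base cases $|D| \leq 1$ being trivial (the sum over $k=1$ collapses to $B_D$ directly when $|D|=1$, and we set $B_\emptyset = 1$). For the inductive step, we use that by induction hypothesis each subcircuit $C_{D \setminus D_k}$ computes the polynomial $B_{D \setminus D_k}$, and each $X_{D_k \mapsto j}$ is $\prod_{i \in D_k} x_{ij}$ by definition.

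The heart of the proof is a coefficient computation. Every monomial in $C_D$ involving only variables $x_{ij}$ with $i \in D$ has the form $m_f := \prod_{i \in D} x_{if(i)}$ for a unique function $f : D \to [n]$. I will show that the coefficient of $m_f$ in $C_D$ equals $1$ when $f$ is injective and $0$ otherwise, which matches $B_D$. A triple $(D_k, j, \gamma)$ in the expansion contributes $m_f$ precisely when $D_k \subseteq f^{-1}(j)$, $\gamma = f|_{D \setminus D_k}$, and this restriction is an injection into $[n]$. The latter condition forces $|f^{-1}(j')| \leq 1$ for every $j' \neq j$, and $|f^{-1}(j) \setminus D_k| \leq 1$, i.e.\ $k \geq |f^{-1}(j)| - 1$.

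The case analysis then runs as follows. If $f$ is injective, each fiber has size at most $1$, so only $k = 1$ contributes, with the $|D|$ singleton choices $D_k = \{i\}$ each yielding a single valid triple; the prefactor $\frac{1!}{|D|}$ then produces coefficient $|D| \cdot \frac{1}{|D|} = 1$. If $f$ has two or more fibers of size $\geq 2$, no choice of $j$ can satisfy the injectivity constraint on $D \setminus D_k$, so the coefficient is $0$. If $f$ has a unique fiber $f^{-1}(j^*)$ of size $m \geq 2$, then $j$ must equal $j^*$ and only $k = m$ (one choice of $D_k$) and $k = m-1$ ($m$ choices of $D_k$) contribute; the net coefficient is
\[
\frac{(-1)^{m-1} m!}{|D|} + \frac{(-1)^{m-2} (m-1)! \cdot m}{|D|} = \frac{(-1)^{m-1}}{|D|}\bigl(m! - m!\bigr) = 0.
\]
This exact cancellation between the $k = m$ and $k = m-1$ terms is the key algebraic identity, and choosing the coefficients $(-1)^{k-1} k!/|D|$ is designed precisely to force it.

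The main conceptual obstacle is organising the case distinction so that every non-injective $f$ is accounted for and the cancellation is transparent. The coefficients $(-1)^{k-1} k!/|D|$ look asymmetric at first glance (the division by $|D|$ in particular), but the injective case shows that the factor $|D|$ is forced by the $|D|$-fold overcounting from singleton $D_k$. Once the contributing triples are characterised combinatorially, the two-term cancellation $m! = m \cdot (m-1)!$ finishes the argument.
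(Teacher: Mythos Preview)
Your proof is correct and takes a genuinely different route from the paper's.

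The paper argues by an inductive telescoping on the partial sums $\sum_{k=1}^{m}(\cdots)$, showing that each such partial sum equals $B_D$ plus an explicit error term of the form
\[
(-1)^{m-1}\frac{(m+1)!}{|D|}\sum_{\substack{D_{m+1}\subseteq D\\ |D_{m+1}|=m+1}}\Bigl(\sum_{j\in[n]}X_{D_{m+1}\mapsto j}\Bigr)\cdot B^{\setminus j}_{D\setminus D_{m+1}},
\]
and that the next summand ($k=m+1$) cancels this error while introducing the analogous error one level up; the final error at $m=|D|-1$ is then killed by the $k=|D|$ term. Your argument instead fixes an arbitrary monomial $m_f$, classifies the contributing triples $(D_k,j,\gamma)$ by the fiber structure of $f$, and reads off the coefficient directly. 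Both are valid; your approach is more combinatorial and makes the design of the weights $(-1)^{k-1}k!/|D|$ transparent in one shot (the two-term identity $m!=m\cdot(m-1)!$ for the unique large fiber), whereas the paper's telescoping keeps the argument at the level of polynomial identities and avoids the case split on~$f$. The net content is the same cancellation, organised differently.
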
	
\begin{proof}
	If $|D| \leq 1$, the construction of $C_D$ is non-recursive and ensures this. So consider a $D$ with $|D| > 1$. By induction on $m$, we prove for every $1 \leq m < |D|$:
	\begin{align*}
		\sum_{k=1}^{m} (-1)^{k-1} \cdot \frac{k!}{|D|} \cdot &\Big(\sum_{\stackrel{D_k \subseteq D}{|D_k| = k}} \Big(\sum_{j \in [n]} X_{D_k \mapsto j}\Big) \cdot B_{D \setminus D_k} \Big)\\
		=& B_D + (-1)^{m-1} \cdot \frac{(m+1)!}{|D|} \cdot \Big(\sum_{\stackrel{D_{m+1} \subseteq D}{|D_{m+1}| = m+1}} \Big(\sum_{j \in [n]} X_{D_{m+1} \mapsto j}\Big) \cdot B^{\setminus j}_{D \setminus D_{m+1}} \Big) \tag{$\star$}
	\end{align*}	
	Here, $B^{\setminus j}_{D \setminus D_{m+1}}$ denotes the sum over all monomials $m$ that encode a local bijection whose domain is $D \setminus D_{m+1}$ and whose image is a subset of $[n] \setminus \{j\}$.\\
	For $m = 1$, the expression computes for every pigeon $i \in D$ and every hole $j \in [n]$ the polynomial $B_{D \setminus \{i\}} \cdot x_{ij}$. Every monomial in $B_D$ is computed $|D|$ many times in this way (as the lift of each of its degree-$(|D|-1)$ submonomials). Thus, the factor $\frac{1}{|D|}$ ensures that we get exactly $B_D$. In addition to that, all degree-$|D|$ monomials describing a mapping that puts two pigeons in the same hole and is bijective everywhere else are computed. Each of these is obtained in exactly two distinct ways: Let $m$ be a degree-$|D|$ monomial involving the product $x_{ij}x_{i'j}$, and coding a local bijection between $|D|-2$ other pigeons and holes. Then $m$ appears in the summand with $D_k = \{i\}$ and also in the summand with $D_k = \{i'\}$. So $(\star)$ is true for $m = 1$.\\
	To prove $(\star)$ for $m+1$, we consider the value of the summand for $k = m+1$. This is
	\begin{align*}
		(-1)^{m+1-1} &\cdot \frac{(m+1)!}{|D|} \cdot \Big(\sum_{\stackrel{D_{m+1} \subseteq D}{|D_{m+1}| = m+1}} \Big(\sum_{j \in [n]} X_{D_{m+1} \mapsto j}\Big) \cdot B_{D \setminus D_{m+1}} \Big)\\
		=& 	(-1)^{m} \cdot \frac{(m+1)!}{|D|} \cdot \Big(\sum_{\stackrel{D_{m+1} \subseteq D}{|D_{m+1}| = m+1}} \Big(\sum_{j \in [n]} X_{D_{m+1} \mapsto j}\Big) \cdot B^{\setminus j}_{D \setminus D_{m+1}} \Big)\\
		&+ (-1)^{m} \cdot \frac{(m+2) \cdot (m+1)!}{|D|} \cdot \Big(\sum_{\stackrel{D_{m+2} \subseteq D}{|D_{m+2}| = m+2}} \Big(\sum_{j \in [n]} X_{D_{m+2} \mapsto j}\Big) \cdot B^{\setminus j}_{D \setminus D_{m+2}} \Big)
	\end{align*}	
	The second line exactly cancels the expression we get from the inductive hypothesis for $m$, except for $B_D$. The third line gives us $(\star)$ for $m+1$, together with $B_D$ that remains from the inductive hypothesis. The equality stated above holds because in
	\[
	\sum_{\stackrel{D_{m+1} \subseteq D}{|D_{m+1}| = m+1}} \Big(\sum_{j \in [n]} X_{D_{m+1} \mapsto j}\Big) \cdot B_{D \setminus D_{m+1}} \Big),
	\]
	every degree-$|D|$ monomial that puts $m+1$ many pigeons in the same hole $j$ and codes a bijection everywhere else appears exactly once. In addition to this, we also get every monomial that puts $m+2$ pigeons in the same hole $j$ and is bijective everywhere else; each such monomial is obtained in $m+2$ different ways. This proves $(\star)$ for every $m < |D|$.
	In total, we get
	\begin{align*}
		\sum_{k=1}^{|D|-1} (-1)^{k-1} \cdot \frac{k!}{|D|} \cdot &\Big(\sum_{\stackrel{D_k \subseteq D}{|D_k| = k}} \Big(\sum_{j \in [n]} X_{D_k \mapsto j}\Big) \cdot B_{D \setminus D_k} \Big)\\
		= &B_D + (-1)^{|D|-2} \cdot \frac{|D|!}{|D|} \cdot \Big(\sum_{j \in [n]} X_{D \mapsto j}\Big)  
	\end{align*}	   
	The summand for $k = |D|$ in $C_D$ exactly cancels $(-1)^{|D|-2} \cdot \frac{|D|!}{|D|} \cdot \Big(\sum_{j \in [n]} X_{D \mapsto j}\Big)$, so we obtain $B_D$ in the end.
\end{proof}	

\begin{lemma}
	\label{lem:computingAllBijections}
	There exists a circuit $C$ of size $\Oo(3^n)$ which is $(\Sym_{n+1} \times \Sym_{n})$-symmetric and computes all polynomials $B_D(\vec x)$ in the sense that for every $D \subseteq [n+1]$, there is a gate $g_D$ in $C$ whose output is $B_D(\vec x)$.
\end{lemma}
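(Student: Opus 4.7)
The plan is to implement the recursive formula from Lemma~\ref{lem:bijectionCircuitCorrect} as a single shared-subcircuit construction, so that gates computing the same natural subexpression are reused across different $D$'s. This is the only way to get $\Oo(3^n)$ size rather than the much larger $\Oo(2^n \cdot \text{poly}(n) \cdot 2^n)$ that would result from building $C_D$ independently for each $D$. The key point is that the $Y_D := \sum_{j \in [n]} X_{D \mapsto j}$ and the $B_{D'}$ subcircuits appear in the defining expression for many containing $D$'s, so each should be constructed once and then wired into all gates that need it.

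Concretely, I would build $C$ in layers. Starting from input gates $x_{ij}$ for $i \in [n+1]$, $j \in [n]$, Layer~1 contains a multiplication gate $X_{D \mapsto j}$ for every nonempty $D \subseteq [n+1]$ and $j \in [n]$. Layer~2 contains an addition gate $Y_D = \sum_{j \in [n]} X_{D \mapsto j}$ for every nonempty $D$. The $B_D$ gates are then created by induction on $|D|$: take a constant input gate for $B_\emptyset = 1$ and reuse $Y_{\{i\}}$ as $B_{\{i\}}$; for $|D| \geq 2$, add one multiplication gate $P_{D,D_k} = Y_{D_k} \cdot B_{D \setminus D_k}$ per pair $D_k \subseteq D$ with $k := |D_k| \geq 1$, one addition gate $S_{D,k} = \sum_{|D_k|=k} P_{D,D_k}$ per $k$, and a final gate combining the $S_{D,k}$ with the coefficients $(-1)^{k-1} k!/|D|$ (implemented via constant inputs; this uses $\bbQ$).

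The size is dominated by the product gates, counted as $|\{(D,D_k) : D_k \subseteq D \subseteq [n+1]\}| = \sum_{k=0}^{n+1} \binom{n+1}{k} 2^k = 3^{n+1} = \Oo(3^n)$. Layer~1 contributes only $\Oo(n \cdot 2^n)$ and Layer~2 contributes $\Oo(2^n)$, while the coefficient combinations at the top of each $B_D$ add $\Oo(|D|)$ per $D$. Symmetry is obtained by defining the action on gates as $X_{D \mapsto j} \mapsto X_{\pi(D) \mapsto \sigma(j)}$, $Y_D \mapsto Y_{\pi(D)}$, $B_D \mapsto B_{\pi(D)}$, $P_{D,D_k} \mapsto P_{\pi(D), \pi(D_k)}$, $S_{D,k} \mapsto S_{\pi(D),k}$ for $(\pi,\sigma) \in \Sym_{n+1} \times \Sym_n$. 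This is a valid circuit automorphism because the wiring depends only on the group-invariant predicates $D_k \subseteq D$, $|D_k| = k$, and $i \in D$, while the coefficients $(-1)^{k-1} k!/|D|$ depend only on $k$ and $|D|$ and so are fixed by the action.

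The main obstacle is not correctness (which is immediate from Lemma~\ref{lem:bijectionCircuitCorrect}) nor symmetry (which is built in by construction) but the sharing bookkeeping: one has to verify that the construction above genuinely realises each $Y_{D_k}$ and each $B_{D \setminus D_k}$ as a \emph{single} gate reused across all of its consumers, rather than inlining copies. Once that is made explicit, the size count is just the subset-of-subset identity $\sum_D 2^{|D|} = 3^{n+1}$, and no further work is required.
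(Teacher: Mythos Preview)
Your proposal is correct and follows essentially the same approach as the paper: implement the recursion of Lemma~\ref{lem:bijectionCircuitCorrect} with shared subcircuits so that each $B_{D'}$ and each $X_{D_k \mapsto j}$ (or your $Y_{D_k}$) is built once, and obtain the size bound via the identity $\sum_{D \subseteq [n+1]} 2^{|D|} = 3^{n+1}$. The paper phrases this slightly differently---it takes $C := C_{[n+1]}$ and observes that every smaller $C_{D'}$ already occurs as a subcircuit, with $\Oo(2^{|D|})$ new gates per $D$---but the counting and the symmetry argument (your explicit action on gate names versus the paper's observation that $C_{[n+1]}$ is $\Stab([n+1]) = \Sym_{n+1} \times \Sym_n$-symmetric) are the same in substance.
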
	
\begin{proof}
	Let $C$ be the circuit $C_D$ as defined above, for $D = [n+1]$. This contains a subcircuit $C_D$ for every $D \subseteq [n+1]$ as a subcircuit. To estimate its total size, note that for every $D \subseteq [n+1]$, the size of $C_D$ is $\Oo(2^{|D|})$ if we only count the gates which do not belong to the subcircuits for the polynomials $B_{D \setminus D_k}$ appearing in the definition of $C_D$. Thus in total, in order to compute all $B_D$ for all $D \subseteq [n+1]$, we need a circuit of size $2^n \cdot n + \sum_{D \subseteq [n+1]} 2^{|D|} \leq \Oo(3^n)$. The $2^n \cdot n$ comes from the $X_{D \mapsto j}$ for all $D \subseteq [n+1]$ and all $j \in [n]$.
	For the symmetries, we observe that by definition, every $C_D$ is $\Stab(D)$-symmetric. So the total circuit $C = C_{[n+1]}$ is $\Stab([n+1])$-symmetric, and the setwise stabiliser of $[n+1]$ in $\Sym_{n+1} \times \Sym_{n}$ is $\Sym_{n+1} \times \Sym_{n}$ itself.
\end{proof}	
\noindent
Using the constructed $C_D$, we can get a $\vec y$-linear refutation for $\PHP(n+1,n)$ as follows. Define
\begin{align*}
	\alpha(n,k) &\coloneqq  \frac{1}{\binom{n}{k} \cdot (n-k)}\\	
	q_i(\vec x) &\coloneqq -\frac{1}{n+1} +  \sum_{k=1}^n \sum_{\stackrel{D \subseteq [n+1] \setminus \{i\}}{|D| = k}} \Big(- \alpha(n+1,k) \cdot B_D \Big) &\text{ for every } i \in [n+1] \\
	q_{(i,i',j)}(\vec x) &\coloneqq \sum_{D \subseteq I \setminus \{i,i'\}} 2 \alpha(n+1,|D|-1) \cdot B^{\setminus j}_{D} &\text{ for every } j \in [n], i \neq i' \in [n+1]
\end{align*}
As before, $B^{\setminus j}_D$ denotes the sum over all monomials encoding a local bijection from pigeons $D$ to a subset of $[n] \setminus \{j\}$. 
\begin{lemma}
	\begin{align*}
		C(\vec x, \vec y) := \sum_{i \in [n+1]} y_i \cdot q_i(\vec x) + \sum_{\stackrel{i \neq i' \in [n+1]}{j \in [n]}} y_{(i,i',j)} \cdot q_{(i,i',j)}(\vec x)
	\end{align*}	
	is a $\vec y$-linear refutation of $\PHP(n+1,n)$.
\end{lemma}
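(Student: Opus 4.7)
The $\vec{y}$-linearity of $C$ is immediate from the definition, and $C(\vec{x}, \vec{0}) = 0$ trivially. The real task is to verify $C(\vec{x}, \vec{f}) = 1$, where $\vec{f}$ denotes the axioms. I would proceed by direct expansion and show that all non-constant terms cancel, leaving $1$. The central combinatorial identity driving the calculation is the recurrence
\[
\Big(\sum_{j \in [n]} x_{ij}\Big) \cdot B_D = B_{D \cup \{i\}} + \sum_{i' \in D} \sum_{j \in [n]} x_{ij}x_{i'j} \cdot B^{\setminus j}_{D \setminus \{i'\}} \qquad (i \notin D),
\]
obtained by splitting each extension of a partial injection $\gamma \colon D \hookrightarrow [n]$ according to whether $j \in [n]$ avoids the image of $\gamma$ (giving $B_{D \cup \{i\}}$) or collides with $\gamma(i')$ for a unique $i' \in D$ (giving the collision terms). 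Combined with the ``empty sum'' identity $B_{[n+1]} = 0$ (since $[n+1]$ does not inject into $[n]$), this is the mechanism that will eventually produce $1$.

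Substituting the axioms and grouping terms, we get
\[
C(\vec{x}, \vec{f}) = \sum_i \Big(\sum_j x_{ij}\Big) q_i - \sum_i q_i + \sum_{i \neq i', j} x_{ij}x_{i'j} \, q_{(i,i',j)}.
\]
I would first handle the ``$B_D$-only'' contributions coming from $\sum_i(\sum_j x_{ij})q_i$ via the recurrence above (the non-collision parts) and from $-\sum_i q_i$. Writing $b_k \coloneqq \sum_{|D|=k} B_D$, a short counting argument shows that each $B_{D'}$ with $|D'|=k+1$ appears $k+1$ times from the first sum and $n-k$ times from the second (after using $\sum_{D \subseteq [n+1]\setminus\{i\}, |D|=k} 1 = \binom{n}{k}$ and similar). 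The key algebraic fact that makes everything telescope is
\[
\frac{k+1}{\binom{n+1}{k}(n+1-k)} = \frac{1}{\binom{n+1}{k+1}},
\]
which ensures the coefficient of $b_{k+1}$ from the first sum cancels the coefficient from the second, for every $1 \le k \le n-1$. The boundary term $b_{n+1} = B_{[n+1]} = 0$ takes care of the top, and the $-1/(n+1)$ constant in $q_i$ is calibrated precisely so that the residual $-\tfrac{1}{n+1}\sum_i\sum_j x_{ij}$ combines with the $b_1$ contribution.

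The remaining task is to show that the collision residue from $\sum_i (\sum_j x_{ij})q_i$ (the second term in the recurrence above) is exactly cancelled by $\sum_{i \neq i', j} x_{ij}x_{i'j} \, q_{(i,i',j)}$. For each $E \subseteq [n+1]$ and each $j \in [n]$, the coefficient of $x_{ij}x_{i'j} B^{\setminus j}_E$ in the first expression is $-\alpha(n+1, |E|+1)$, summed over ordered pairs $(i,i')$ with $i \neq i' \notin E$; matching this with $2\alpha(n+1, |D|-1)$ in $q_{(i,i',j)}$ (where the factor $2$ absorbs the ordered/unordered pair discrepancy, and the shift $|D|-1 = |E|+1-2$ reflects that $D$ lives in $[n+1]\setminus\{i,i'\}$, two elements smaller than the index set for the recurrence expansion) gives the required cancellation, modulo the same binomial identity as above. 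After all cancellations, only the constant $+1$ (from $-\sum_i q_i$ contributing $-(-1) = 1$) survives. The main obstacle is bookkeeping the collision coefficients correctly across the ordered/unordered pair distinction and the shift in index sets; once the telescoping identity is in hand, the rest is a careful but routine calculation.
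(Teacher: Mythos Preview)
Your approach is essentially the same as the paper's: both expand $\sum_i(\sum_j x_{ij}-1)q_i$ directly, show that the coefficient of each $B_D$ vanishes via the identity $k\,\alpha(n+1,k-1)=(n+1-k)\,\alpha(n+1,k)$ (your binomial identity in equivalent form), and then match the residual ``collision'' terms against $\sum x_{ij}x_{i'j}\,q_{(i,i',j)}$. Your explicit recurrence for $(\sum_j x_{ij})B_D$ and your observation that $B_{[n+1]}=0$ handles the top boundary are slightly more articulate than the paper's direct monomial-counting, but the underlying argument is identical; the collision bookkeeping you flag as the ``main obstacle'' is exactly the step the paper dispatches with a one-line ``is exactly cancelled.''
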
	
\begin{proof}
	We have to prove $C(\vec x, \PHP(n+1,n)) = 1$. We start by computing the first sum:
	\begin{align*}
		\sum_{i \in [n+1]} \Big(&\sum_{j \in [n]} x_{ij} - 1 \Big) \cdot q_i(\vec x)\\
		= 1 &+ \sum_{k = 1}^{n+1} \Big( \sum_{\stackrel{D \subseteq [n+1]}{|D| = k}} \Big( B_D \cdot ( k \cdot (- \alpha(n+1,k-1)) + (n+1-k) \cdot \alpha(n+1,k) )\\
		&+ \sum_{\stackrel{D_2 \subseteq D}{|D_2| = 2}} \sum_{j \in [n]} -2 \cdot \alpha(n+1,k-1) \cdot X_{D_2 \mapsto j} \cdot B^{\setminus j}_{D \setminus D_2} \Big) \Big).		
	\end{align*}	
	In words: Every degree-$k$ monomial $m$ encoding a local bijection is obtained in $k$ different ways as the product of a variable $x_{ij}$ with a degree-$(k-1)$ submonomial of $m$. Moreover, $m$ is obtained in $(n+1-k)$ many ways in the product $(-1) \cdot B^{\setminus i}_D$, for every $i$ that does not appear in $m$. 
	In addition to these monomials, there appear all monomials encoding a local ``almost-bijection'', where one hole $j$ is occupied by two pigeons $i, i'$. Each of these monomials is created in two ways: Once it appears in $x_{ij} \cdot B^{\setminus i}_D$, and another time in $x_{i'j} \cdot B^{\setminus i'}_D$. Now a straightforward calculation shows that the coefficient $( k \cdot (- \alpha(n+1,k-1)) + (n+1-k) \cdot \alpha(n+1,k) )$ of $B_D$ is in fact zero. The remaining expression 
	\[
	\sum_{D \subseteq I}   \sum_{\stackrel{D_2 \subseteq D}{|D_2| = 2}} \sum_{j \in [n]} -2 \cdot \alpha(n+1,|D|-1) \cdot X_{D_2 \mapsto j} \cdot B^{\setminus j}_{D \setminus D_2} 
	\]
	is exactly cancelled by $\sum_{\stackrel{i \neq i' \in [n+1]}{j \in [n]}} x_{ij}x_{i'j} \cdot q_{(i,i',j)}(\vec x)$. 
	Hence, $C(\vec x, \PHP(n+1,n)) = 1$.
\end{proof}

\begin{lemma}
	\label{lem:circuitsizes}
	For every $i \in [n+1]$, $q_i(\vec x)$ can be computed by a $\Stab(i)$-symmetric circuit of size $\Oo(3^n)$. For every $j \in [n], i \neq i' \in [n+1]$, $q_{(i,i',j)}(\vec x)$ can be computed by a $\Stab(i,i',j)$-symmetric circuit of size $\Oo(3^n)$.
\end{lemma}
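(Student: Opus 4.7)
The plan is to leverage the circuit constructed in Lemma \ref{lem:computingAllBijections} as the shared backbone and add only cheap aggregation on top. For $q_i(\vec x)$, start from the circuit $C$ of Lemma \ref{lem:computingAllBijections}, which is $(\Sym_{n+1} \times \Sym_n)$-symmetric of size $\Oo(3^n)$ and provides designated output gates $g_D$ computing $B_D(\vec x)$ for every $D \subseteq [n+1]$. On top of $C$, for each $k \in [n]$ add a single summation gate $s_{i,k}$ that sums the outputs $g_D$ over all $D \subseteq [n+1] \setminus \{i\}$ with $|D| = k$; scale it by the constant $-\alpha(n+1,k)$, and finally sum all these $n$ scaled gates together with the constant $-\frac{1}{n+1}$. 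The set $\{D \subseteq [n+1] \setminus \{i\} : |D| = k\}$ is $\Stab(i)$-invariant and the action of $\Stab(i)$ on $C$ permutes the corresponding gates $g_D$, so $s_{i,k}$ is fixed by every $\pi \in \Stab(i)$, and the whole construction is $\Stab(i)$-symmetric. Only $\Oo(n)$ extra gates are added, so the overall size is $\Oo(3^n)$.

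For $q_{(i,i',j)}(\vec x)$, the basic building block is now $B^{\setminus j}_D$ for $D \subseteq [n+1] \setminus \{i,i'\}$. Observe that $B^{\setminus j}_D$ is exactly the ``$B_D$''-polynomial of the reduced pigeonhole instance with pigeon set $[n+1] \setminus \{i,i'\}$ and hole set $[n] \setminus \{j\}$. Applying Lemma \ref{lem:computingAllBijections} to this reduced instance (which has $n-1$ pigeons and $n-1$ holes) yields a $\Stab(i,i',j)$-symmetric circuit $C_{(i,i',j)}$ of size $\Oo(3^{n-1}) = \Oo(3^n)$ whose designated gates compute all the required $B^{\setminus j}_D$. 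Aggregating exactly as in the previous case — for each size class $k$, sum the corresponding $B^{\setminus j}_D$ gates, multiply by $2\alpha(n+1,k-1)$, and sum over all $k$ — adds only $\Oo(n)$ gates and preserves $\Stab(i,i',j)$-symmetry, since the set of $D \subseteq [n+1] \setminus \{i,i'\}$ of a given size is invariant under this stabilizer.

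The main technical point to get right is that, although the lemma only asserts symmetry under the pointwise stabilizer of the respective index set, we actually want the final refutation $C(\vec x, \vec y) = \sum_i y_i \cdot q_i + \sum y_{(i,i',j)} \cdot q_{(i,i',j)}$ to be $(\Sym_{n+1} \times \Sym_n)$-symmetric. This requires choosing the subcircuits for $q_i$ (respectively $q_{(i,i',j)}$) in a uniform way so that any $\pi \in \Sym_{n+1} \times \Sym_n$ maps the subcircuit for $q_i$ to that for $q_{\pi(i)}$, and similarly for the triples. Because both constructions above are entirely canonical — the backbone $C$ is globally symmetric, and the aggregation layer is determined combinatorially by the index — this uniformity is automatic, and the circuit automorphism extending $\pi$ consists of the global automorphism of the shared circuit $C$ together with the obvious permutation of the aggregation gates across indices.

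I do not expect a serious obstacle beyond bookkeeping: all nontrivial work is already encapsulated in Lemma \ref{lem:computingAllBijections}, and the aggregation layers are low-degree sums over orbits of the appropriate stabilizer, so symmetry and the $\Oo(3^n)$ size bound both follow directly. The main care required is confirming that the reduced instance really is a valid input for Lemma \ref{lem:computingAllBijections} (it is: the construction of $C_D$ there only depends on the abstract pigeon and hole sets, not on their specific names) and that the symmetry group acting on the reduced circuit matches $\Stab(i,i',j)$.
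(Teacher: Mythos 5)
Your proof is correct and follows essentially the same route as the paper: both halves rest on reusing the gates $g_D$ of the circuit from Lemma~\ref{lem:computingAllBijections} and adding an $\Oo(n)$-gate, stabiliser-invariant aggregation layer. The only divergence is in the second part, where the paper obtains the $B^{\setminus j}_D$ by zeroing the hole-$j$ input gates of the \emph{single} shared circuit (so one auxiliary circuit per $j$ serves all pairs $(i,i')$, keeping the assembled refutation at $\Oo(n\cdot 3^n)$), whereas instantiating Lemma~\ref{lem:computingAllBijections} afresh on the reduced instance for each triple $(i,i',j)$ proves the lemma equally well but, if the circuits are not shared across pairs, would inflate the total refutation size to $\Oo(n^3\cdot 3^n)$ in Theorem~\ref{thm:upperboundSymPHP}.
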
	
\begin{proof}
	To compute $q_i$, we make use of the $(\Sym_{n+1} \times \Sym_[n])$-symmetric circuit from Lemma \ref{lem:computingAllBijections}, which contains a gate $g_D$ for every $D \subseteq [n+1]$, evaluating to $B_D$. Since $q_i$ only requires the polynomials $B_D$ for every $D \subseteq [n+1] \setminus \{i\}$, only the outputs of those gates $g_D$ are used, and the resulting circuit is symmetric under $\Stab(i) \leq \Sym_{n+1} \times \Sym_{n}$. The additional gates needed to compute $q_i$ are only linearly many that multiply the $B_D$ with the appropriate coefficients $-\alpha(n+1,|D|)$. 	
	In total, the circuit size is in $\Oo(3^n)$.\\
	Computing $q_{(i,i',j)}$ works similarly. The main difference is that in the circuit from Lemma \ref{lem:computingAllBijections}, we set every input variable mentioning hole $j$ to zero. This allows us to compute the polynomials $B_D^{\setminus j}$ needed for $q_{(i,i',j)}$. 
\end{proof}	

Theorem \ref{thm:upperboundSymPHP} now follows. The total refutation size $\Oo(n \cdot 3^n)$ is due to the fact that we need $n+1$ versions of the circuit from Lemma \ref{lem:computingAllBijections}: The original one, and one version for each $j \in [n]$ that is set to zero in the proof of Lemma \ref{lem:circuitsizes}. Moreover, it is easy to check that the entire certificate $C(\vec x, \vec y)$ is $(\Sym_{n+1} \times \Sym_{n})$-symmetric, and so is the refutation we obtain by putting together the circuits from Lemma \ref{lem:circuitsizes}.

\end{document}